\newcommand{\eat}[1]{}
\newcommand{\set}[1]{\ensuremath \{#1\}}
\newcommand{\univ}{\text{\sc univ}}
\newcommand{\lbl}{\text{\sc label}}
\newcommand{\lc}{\text{\sc lc}}
\newcommand{\rc}{\text{\sc rc}}
\newcommand{\nil}{\text{\sc nil}}
\newcommand{\poly}{\mathrm{poly}}
\newcommand{\ret}{\text{Ret}}
\newcommand{\eps}{\epsilon}
\newcommand{\T}{\mathcal{T}}
\newcommand{\R}{\mathbb{R}}
\newcommand{\la}{\leftarrow}
\newcommand{\argmin}{\mathop{\text{argmin}}}
\newcommand{\np}{\mathsf{NP}}
\newcommand{\rp}{\mathsf{RP}}
\newcommand{\be}{\begin{enumerate}}
\newcommand{\ee}{\end{enumerate}}
\newcommand{\bi}{\begin{itemize}}
\newcommand{\ei}{\end{itemize}}
\newcommand{\beq}{\begin{equation}}
\newcommand{\eeq}{\end{equation}}
\newcommand{\bp}{\begin{proof}}
\newcommand{\ep}{\end{proof}}
\newcommand{\bcor}{\begin{cor}}
\newcommand{\ecor}{\end{cor}}
\newcommand{\bthm}{\begin{thm}}
\newcommand{\ethm}{\end{thm}}
\newcommand{\blmm}{\begin{lmm}}
\newcommand{\elmm}{\end{lmm}}
\newcommand{\bdefn}{\begin{defn}}
\newcommand{\edefn}{\end{defn}}
\newcommand{\bprop}{\begin{prop}}
\newcommand{\eprop}{\end{prop}}
\newcommand{\bconj}{\begin{conj}}
\newcommand{\econj}{\end{conj}}
\newcommand{\bopm}{\begin{opm}}
\newcommand{\eopm}{\end{opm}}
\newcommand{\brmk}{\begin{rmk}}
\newcommand{\ermk}{\end{rmk}}
\newcommand{\suchthat}{\ | \ }
\newcommand{\mv}[1]{\mathbf{#1}}
\newtheorem{thm}{Theorem}[section]
\newtheorem{lmm}[thm]{Lemma}
\newtheorem{prop}[thm]{Proposition}
\newtheorem{cor}[thm]{Corollary}
\theoremstyle{definition}              % Examples and all
\newtheorem{opm}[thm]{Open Problem}
\newtheorem{conj}[thm]{Conjecture}
\newtheorem{example}[thm]{Example}
\newtheorem{defn}[thm]{Definition}
\newtheorem{rmk}[thm]{Remark}
\newcommand{\bbox}{
\begin{center}
\begin{tabular}{|c|}
\hline
}
\newcommand{\ebox}{
\\
\hline
\end{tabular}
\end{center}
}
\newlength{\toppush}
\def\subjnum{CSE 713}
\def\subjname{Probabilistically Checkable Proofs and Inapproximability}
\def\doheading#1#2#3{\vfill\eject\vspace*{-\toppush}%
  \vbox{\hbox to\textwidth{{\bf}
  \subjnum: \subjname
  \hfil Lecturer: Hung Q. Ngo}%
  \hbox to\textwidth{{\bf} SUNY at Buffalo, Fall 2004\hfil#3\strut}%
  \hrule}
}
\newcommand{\deter}[1]{C(#1)}
\newcommand{\ndeter}[1]{D(#1)}
\title{Worst-case Optimal Join Algorithms}
\author{Hung Q. Ngo\\
Computer Science and Engineering,\\
SUNY Buffalo,\\
U.S.A.
\and
Ely Porat\\
Computer Science,\\
Bar-Ilan University,\\
Israel
\and
Christopher R\'e\\
Computer Science,\\
University of Wisconsin--Madison\\
U.S.A.
\and
Atri Rudra\\
Computer Science and Engineering,\\
SUNY Buffalo,\\
U.S.A.}
\begin{document}

\maketitle

\begin{abstract}
\noindent
Efficient join processing is one of the most fundamental and
well-studied tasks in database research. In this work, we examine
algorithms for natural join queries over many relations and describe a novel
algorithm to process these queries optimally in terms of worst-case
data complexity. Our result builds on recent work by Atserias, Grohe,
and Marx, who gave bounds on the size of a full conjunctive query in
terms of the sizes of the individual relations in the body of the
query. These bounds, however, are not constructive: they rely on
Shearer's entropy inequality which is information-theoretic. 
Thus, the previous results leave open the question of whether there exist 
algorithms whose running time achieve these optimal bounds. An answer to this
question may be interesting to database practice, as it is known that
any algorithm based on the traditional select-project-join style
plans typically employed in an RDBMS are asymptotically slower than
the optimal for some queries. We construct an algorithm whose running
time is worst-case optimal for all natural join queries. 
Our result may be of independent interest, as our algorithm also yields a 
constructive proof of the general fractional cover bound by Atserias, Grohe, 
and Marx without using Shearer's inequality. This bound implies two famous
inequalities in geometry: the Loomis-Whitney inequality and the
Bollob\'as-Thomason inequality. Hence, our results algorithmically
prove these inequalities as well. Finally, we discuss how our algorithm can
be used to compute a relaxed notion of joins.
\end{abstract}

\section{Introduction}

Recently, Grohe and Marx~\cite{GM06} and Atserias, Grohe, and
Marx~\cite{AGM08} (AGM's results henceforth) derived tight bounds on
the number of output tuples of a {\em full conjunctive
  query}\footnote{A full conjunctive query is a conjunctive query
  where every variable in the body appears in the head.} in terms of
the sizes of the relations mentioned in the query's body. As query
output size estimation is fundamentally important for efficient query
processing, these results have generated a great deal of excitement.

To understand the spirit of AGM's results, consider the following
example where we have a schema with three attributes, $A$, $B$, and
$C$, and three relations, $R(A,B)$, $S(B,C)$ and $T(A,C)$, defined
over those attributes. Consider the following natural join
query:
\begin{equation}
q = R \Join S \Join T 
%\caption{Q1}
\label{eq:q}
\end{equation}
Let $q(I)$ denote the set of tuples that is output from applying $q$
to a database instance $I$, that is the set of triples of constants 
$(a,b,c)$ such that $R(ab)$, $S(bc)$, and $T(ac)$ are in $I$. 
Our goal is to bound the
number of tuples returned by $q$ on $I$, denoted by $|q(I)|$, in terms of
$|R|$, $|S|$, and $|T|$. For simplicity, let us consider
the case when $|R|=|S|=|T|=N$.
A straightforward bound is $|q(I)| \leq N^3$. 
One can obtain a better bound by noticing that any pair-wise join 
(say $R\Join S$) will contain $q(I)$ in it as $R$ and $S$ together contain all 
attributes (or they ``cover'' all the attributes). 
This leads to the bound $|q(I)|\le N^2$. 
AGM showed that one can get a better upper bound of 
$|q(I)|\le N^{3/2}$ by generalizing the notion of cover to a so-called
``fractional cover'' (see Section~\ref{sec:notations}). 
Moreover, this estimate is tight in the sense that
for infinitely many values of $N$, one can find a database instance
$I$ that for which $|q(I)| = N^{3/2}$. These non-trivial
estimates are exciting to database researchers as they offer
previously unknown, nontrivial methods to estimate the cardinality of
a query result -- a fundamental problem to support efficient query
processing.

More generally, given an arbitrary natural-join query $q$ and given
the sizes of input relations, the AGM method can generate an upper
bound $U$ such that $|q(I)| \leq U$, where $U$ depends on the ``best''
fractional cover of the attributes.  This ``best'' fractional cover
can be computed by a linear program (see Section~\ref{sec:notations} for
more details).  Henceforth, we refer to this inequality as the {\em
  AGM's fractional cover inequality}, and the bound $U$ as the {\em
  AGM's fractional cover bound}.  They also show that the bound is
essentially optimal in the sense that for infinitely many sizes of
input relations, there exists an instance $I$ such that each relation
in $I$ is of the prescribed size and $|q(I)| = U$.

AGM's results leave open whether one can compute the actual set $q(I)$
in time $O(U)$. In fact, AGM observe this issue and presented an
algorithm that computes $q(I)$ with a running time of $O(|q|^2 \cdot
U\cdot N)$ where $N$ is the cardinality of the largest input relation
and $|q|$ denotes the size of the query $q$.  AGM establish that their
join-project plan can in some cases
be super-polynomially better than any join-only
plan.  However, AGM's join algorithm is not optimal.  Even on the
above example of \eqref{eq:q}, we can construct a family of database
instances $I_1, I_2, \dots, I_N, \dots, $ such that in the $N$th
instance $I_N$ we have $|R| = |S| = |T| = N$ and both AGM's algorithm
and any join-only plan take $\Omega(N^2)$-time even though from AGM's
bound we know that $|q(I)| \leq U = N^{3/2}$, which is the best
worst-case run-time one can hope for.

The $\sqrt N$-gap on a small example motivates our central question.
In what follows, {\em natural join queries} are defined as the 
join of a set of relations $R_1,\dots,R_m$. 

\noindent
\begin{quote}
  \textbf{Optimal Worst-case Join Evaluation Problem} (Optimal Join
  Problem).  {\em Given a fixed database schema $\bar R =
    \left\{R_i(\bar A_i)\right\}_{i=1}^{m}$ and an $m$-tuple of
    integers $\bar N = (N_1,\dots,N_m)$. Let $q$ be the natural join
    query joining the relations in $\bar R$ and let $I(\bar N)$ be the
    set of all instances such that $|R_i^{I}| = N_i$ for
    $i=1,\dots,m$. Define $U = \sup_{I \in I(\bar N)} |q(I)|$. Then, the
    optimal worst-case join evaluation problem is to evaluate $q$ in
    time $O(U + \sum_{i=1}^{m} N_i)$}.
\end{quote} 

\noindent
Since any algorithm to produce $q(I)$ requires time at least $|q(I)|$,
an algorithm that solves the above problem would have an optimal
worst-case data-complexity.\footnote{In an RDBMS, one computes information,
e.g., indexes, offline that may obviate the need to read the entire
input relations to produce the output. In a similar spirit, we can
extend our results to evaluate any query $q$ in time $O(U)$,
removing the term $\sum_i N_i$ by precomputing some indices.}
(Note that we are mainly concerned with data complexity and thus the 
$O(U)$ bound above ignores the dependence on $|q|$. Our results
have a small $O(|q|)$ factor.)

Implicitly, this problem has been studied for over three decades: a 
modern RDBMS use decades of highly tuned algorithms to efficiently produce 
query results. Nevertheless, as we described above, such systems are
asymptotically suboptimal -- even in the above simple example of
\eqref{eq:q}. Our main result is an algorithm that achieves
asymptotically optimal worst-case running times for all 
conjunctive join queries.

We begin by describing connections between AGM's inequality and a family of
inequalities in geometry. In particular, we show that the
AGM's inequality is {\em equivalent} to the discrete version of a geometric
inequality proved by Boll\'obas and Thomason 
(\cite{MR1338683}, Theorem 2). 
This equivalence is shown in Section~\ref{sec:bt:equiv}.

Our ideas for an algorithm solving the optimal join problem begin by
examining a special case of the Boll\'obas-Thomason (BT) inequality:
the classic Loomis-Whitney (LW) inequality~\cite{MR0031538}.  The LW
inequality bounds the measure of an $n$-dimensional set in terms of
the measures of its $(n-1)$-dimensional projections onto the
coordinate hyperplanes.  The query \eqref{eq:q} and its bound
$|q(I)|\leq \sqrt{|R||S||T|}$ is {\em exactly} the LW inequality with
$n=3$ applied to the discrete measure.  Our algorithmic development
begins with a slight generalization of the query $q$ in
\eqref{eq:q}. We describe an algorithm for join queries which have the
same format as in the LW inequality setup with $n\geq 3$.  In
particular, we consider ``LW instances'' of the optimal join problem,
where the query is to join $n$ relations whose attribute sets are all
the distinct $(n-1)$-subsets of a universe of $n$ attributes.  Since
the LW inequality is tight, and our join algorithm has running time
that is asymptotically data-optimal for this class of queries (e.g.,
$O(N^{3/2})$ in our motivating example), our algorithm is
data-complexity optimal in the worst case for LW instances.

Our algorithm for LW instances exhibits a key twist compared to a
conventional join algorithm. The twist is that the join algorithm
partitions the values of the join key on each side of the join into
two sets: those values that are {\em heavy} and those values that are
{\em light}. Intuitively, a value of a join key is heavy if its fanout
is high enough so that joining all such join keys could violate the
size bound (e.g., $N^{3/2}$ above). The art is selecting the precise
fanout threshold for when a join key is heavy. This per-tuple choice
of join strategy is not typically done in standard RDBMS join
processing. 

Building on our algorithm for LW instances, we next describe our main
result: an algorithm to solve the optimal join problem for all join
queries.  In particular, we design an algorithm for evaluating join
queries which not only {\em proves} AGM's fractional cover inequality
{\em without} using the information-theoretic Shearer's inequality,
but also has a running time that is linear in the bound (modulo
pre-processing time).  As AGM's inequality implies the BT and LW
inequalities, our result is the first algorithmic proof of these
geometric inequalities as well.  To do this, we must carefully select
which projections of relations to join and in which order our
algorithm joins relations on a ``per tuple'' basis as in the
LW-instance case.  Our algorithm computes these orderings, and then at
each stage it performs an algorithm that is similar to the algorithm
we used for LW instances.

Our example also shows that standard join algorithms are suboptimal,
the question is, {\it when do classical RDBMS algorithms have higher
worst-case run-time than our proposed approach?}  AGM's analysis of their
join-project algorithm leads to a worst case run-time complexity that is a
factor of the largest relation worse than the AGM's bound.  
To investigate whether AGM's analysis is tight or not, we ask a
sharper variant of this question: {\it Given a query $q$ does there
exist a family of instances $I$ such that our algorithm runs
asymptotically faster than a standard binary-join-based plan or
AGM's join-project plan?}  We give a partial answer to this question
by describing a sufficient syntactic condition for the query $q$ such
that for each $k \geq 2$, we can construct a family of instances where each
relation is of size $N$ such that any binary-join plan as well as AGM's
algorithm will need time 
$\Omega(N^{2}/k^2)$, while the fractional cover bound is 
$O(N^{1 + 1/(k-1)})$ -- an asymptotic gap. We then show through a more 
detailed analysis that our algorithm on these instances takes 
$O(k^2N)$-time.

We consider several extensions and improvements of our main result. In
terms of the dependence on query size, our algorithms are also
efficient (at most linear in $|q|$, which is better than the quadratic
dependence in AGM) for full queries, but they are not necessarily
optimal.  In particular, if each relation in the schema has arity $2$,
we are able to give an algorithm with better query complexity than our
general algorithm. This shows that in general our algorithm's
dependence on the factors of the query is not the best possible. We
also consider computing a relaxed notion of joins and give worst-case
optimal algorithms for this problem as well.

\paragraph*{Outline} The remainder of the paper is organized as
follows: in the rest of this section, we describe related work.
In Section~\ref{sec:notations} we describe our notation and formulate the 
main problem. 
Section~\ref{sec:bt:equiv} proves the connection
between AGM's inequality and BT inequality. 
In Section \ref{sec:LW-algo} we present a data-optimal join algorithm for 
LW instances, and then
extend this to arbitrary join queries in Section~\ref{sec:all:j}. 
We discuss the limits of performance of prior approaches and our 
approach in more detail in Section~\ref{sec:limits}. In
Section~\ref{sec:extensions}, we describe several extensions. We conclude in
Section~\ref{sec:conc}.

\subsection*{Related Work}

Grohe and Marx \cite{GM06} made the first (implicit) connection between 
fractional edge cover and the output size of a conjunctive query.
(Their results were stated for constraint satisfaction problems.)
%, but the close
%ties between CSPs and conjunctive queries are well-known.
Atserias, Grohe, and Marx \cite{AGM08} extended Grohe and Marx's results
in the database setting.
%-friendly language, hence we will discuss AGM's results
%even though the bound and the algorithm most related to our work already
%appeared (implicitly) in Grohe and Marx.

The first relevant AGM's result is the following inequality. 
Consider a join query over relations $R_e$, $e\in E$, where $E$ is a 
collection of subsets of an attribute ``universe'' $V$, and 
relation $R_e$ is on attribute set $e$. 
Then, the number of output tuples is bounded above by 
$\prod_{e\in E}|R_e|^{x_e}$, where $\mv x = (x_e)_{e\in E}$ is an
{\em arbitrary} fractional cover of the hypergraph $H=(V,E)$.

They also showed that this bound
is tight. In particular, for infinitely many positive integers $N$ there is
a database instance with $|R_e|=N$, $\forall e\in E$, and the 
upper bound gives the actual number of output tuples.
When the sizes $|R_e|$ were given as inputs to the (output size estimation)
problem, obviously the best upper bound is obtained by picking the fractional
cover $\mv x$ which minimizes the linear objective function
$\sum_{e\in E}(\log|R_e|)\cdot x_e$.
In this ``size constrained'' case, however, their lower bound is 
off from the upper bound by
a factor of $2^n$, where $n$ is the total number of attributes. 
AGM also presented an inapproximability result which justifies this gap. 
Note, however, that the gap is only dependent on the query size
and the bound is still asymptotically optimal in the 
data-complexity sense.

The second relevant result from AGM is a join-project plan with
running time $O\left(|q|^2 N_{\text{max}}^{1+ \sum x_e}\right)$,
where $N_{\text{max}}$ is the maximum size of input relations
and $|q| = |V|\cdot|E|$ is the query size.

The AGM's inequality contains as a special case the discrete versions
of two well-known inequalities in 
geometry: the \textit{Loomis-Whitney} (LW) 
inequality~\cite{MR0031538} and its generalization the 
\textit{Bollob\'as-Thomason} (BT) inequality~\cite{MR1338683}.
There are two typical proofs of the discrete LW and BT inequalities.
The first proof is by induction using H\"older's inequality~\cite{MR1338683}.
The second proof (see Lyons and Peres~\cite{Lyons-Peres})
essentially uses ``equivalent'' entropy inequalities
by Han \cite{MR0464499} and its generalization by Shearer \cite{MR859293},
which was also the route Grohe and Marx \cite{GM06} took to prove AGM's bound.
All of these proofs are non-constructive.

There are many applications of the discrete LW and BT inequalities.
The $n=3$ case of the LW inequality was used to prove communication
lower bounds for matrix multiplication on distributed memory parallel
computers \cite{DBLP:journals/jpdc/IronyTT04}. The inequality was used
to prove submultiplicativity inequalities regarding sums of sets of
integers~\cite{MR2676833}.  In~\cite{Lehman-Lehman}, a special case of
BT inequality was used to prove a network-coding bound. Recently, some
of the authors of this paper have used our \textit{algorithmic}
version of the LW inequality to design a new sub-linear time decodable
compressed sensing matrices~\cite{GNPRS} and efficient pattern
matching algorithms~\cite{NPR-pattern}.

Inspired by AGM's results, Gottlob, Lee, and Valiant~\cite{GLVV09}
provided bounds for conjunctive queries with functional
dependencies. For these bounds, they defined a new notion of
``coloring number'' which comes from the dual linear program of the
fractional cover linear program. This allowed them to generalize
previous results to all conjunctive queries, and to study several
problems related to tree-width.

Join processing algorithms are one of the most studied algorithms in
database research.  A staggering number of variants have been
considered, we list a few: Block-Nested loop join, Hash-Join, Grace,
Sort-merge (see Grafe~\cite{graefe93} for a survey). Conceptually, it
is interesting that none of the classical algorithms consider
performing a per-tuple cardinality estimation as our algorithm
does. It is interesting future work to implement our algorithm to
better understand its performance.

Related to the problem of estimating the size of an output is
cardinality estimation. A large number of structures have been
proposed for cardinality
estimation~\cite{Ioannidis03,DBLP:conf/sigmod/PoosalaIHS96,wavelet,DBLP:conf/pods/AlonGMS99,DBLP:conf/vldb/KonigW99,DBLP:conf/vldb/JagadishKMPSS98},
they have all focused on various sub-classes of queries and deriving
estimates for arbitrary query expressions has involved making
statistical assumptions such as the independence and containment
assumptions which result in large estimation
errors~\cite{ioannidischristodoulakis91}. Follow-up work has
considered sophisticated probability models, 
Entropy-based models~\cite{DBLP:conf/vldb/MarklMKTHS05,isomere:2006}
and graphical models~\cite{DBLP:journals/pvldb/TzoumasDJ11}. In contrast, 
in this work we examine the {\em worst case behavior} of algorithms in 
terms of its cardinality estimates. 
In the special case when the join graph is acyclic, there are several
known results which achieve (near) optimal run time with respect to the
output size \cite{DBLP:conf/pods/PaghP06, DBLP:journals/jcss/Willard96}.

On a technical level, the work {\em adaptive query processing} is
related, e.g., Eddies~\cite{DBLP:conf/sigmod/HellersteinA00} and
RIO~\cite{DBLP:conf/sigmod/BabuBD05}. The main idea is that to
compensate for bad statistics, the query plan may adaptively be
changed (as it better understands the properties of the data). While
both our method and the methods proposed here are adaptive in some
sense, our focus is different: this body of work focuses on heuristic
optimization methods, while our focus is on provable worst-case
running time bounds. A related idea has been considered in practice:
heuristics that split tuples based on their fanout have been deployed
in modern parallel databases to handle
skew~\cite{DBLP:conf/sigmod/XuKZC08}.  This idea was not used to
theoretically improve the running time of join algorithms. We are
excited by the fact that a key mechanism used by our algorithm has
been implemented in a modern commercial system.

\newcommand{\D}{\mathbf{D}}

\section{Notation and Formal Problem Statement} 
\label{sec:notations}

We assume the existence of a set of attribute names $\mathcal{A} =
A_1,\dots, A_n$ with associated domains $\D_1,\dots,\D_n$ and infinite
set of relational symbols $R_1,R_2, \dots$. A relational schema for
the symbol $R_i$ of arity $k$ is a tuple $\bar A_i = (A_{i_1}, \dots,
A_{i_k})$ of distinct attributes that defines the attributes of the
relation. A relational database schema is a set of relational symbols
and associated schemas denoted by $R_1(\bar A_1), \dots, R_m(\bar
A_m)$. A relational instance for $R(A_{i_1},\dots,A_{i_k})$ is a
subset of $\D_{i_1} \times \dots \times \D_{i_k}$. A relational database $I$ is
an instance for each relational symbol in schema, denoted by
$R_i^{I}$. A {\em natural join} query (or simply query) $q$ is
specified by a finite subset of relational symbols $q \subseteq
\mathbb{N}$, denoted by $\Join_{i \in q} R_i$. Let $\bar A(q)$ denote
the set of all attributes that appear in some relation in $q$, that is
$\bar A(q) = \{ A \mid A \in \bar A_i \text{ for some } i \in q\}$.  
Given a tuple
$\mv t$ we will write $\mv t_{\bar A}$ to emphasize that its support
is the attribute set $\bar A$. Further, for any $\bar S\subset \bar A$
we let $\mv t_{\bar S}$ denote $\mv t$ restricted to $\bar S$.
Given a database instance $I$, the
output of the query $q$ on $I$ is denoted $q(I)$ and is defined as

\[ q(I) \stackrel{\mathrm{def}}{=} \left\{ \mv t \in \D^{\bar A(q)} \suchthat 
\mv t_{\bar A_{i}} \in R^{I}_i \text{ for each } i \in q\right\} \]
where $\D^{\bar A(q)}$ is a shorthand for $\times_{i : A_i \in \bar A(q)} \D_i$.

We also use the notion of a {\em semijoin}: Given two relations
$R(\bar A)$ and $S(\bar B)$ their semijoin $R \lJoin S$  is defined by
\[ R \lJoin S \stackrel{\mathrm{def}}{=} 
  \left\{ \mv t \in R : \exists \mv u \in S \text{ s.t. } 
\mv t_{\bar A \cap \bar B} = \mv u_{\bar A \cap \bar B} \right\}.
\]
For any relation $R(\bar A)$, and any subset $\bar S\subseteq \bar A$
of its attributes, let $\pi_{\bar S}(R)$ denote the {\em projection} of 
$R$ onto $\bar S$, i.e.
\[ \pi_{\bar S}(R) = \left\{\mv t_{\bar S} \suchthat
   \exists \mv t_{\bar A\setminus \bar S}, (\mv t_{\bar S}, \mv t_{\bar A\setminus \bar S})
   \in R \right\}.
\] 
For any tuple $\mv t_{\bar S}$,
define the {\em $\mv t_{\bar S}$-section} of $R$ as
\[ R[\mv t_{\bar S}] = \pi_{\bar A\setminus \bar S}(R \lJoin \set{\mv t_{\bar S}}).
\]

\paragraph*{From Join Queries to Hypergraphs}
A query $q$ on attributes $\bar A(q)$ can be viewed as a hypergraph
$H=(V,E)$ where $V = \bar A(q)$ and there is an edge $e_i =\bar A_i$
for each $i \in q$. Let $N_e = |R_e|$ be the number of tuples in
$R_e$. \textit{From now on we will use the hypergraph and the original notation for the query
interchangeably.}

We use this hypergraph to introduce the {\em fractional edge
  cover polytope} that plays a central role in our technical
developments. The fractional edge cover polytope defined by $H$
is the set of all points $\mv x =(x_e)_{e\in E}\in \R^E$ such that
\begin{eqnarray*}
\sum_{v \in e} x_e &\geq& 1, \ \text{for any $v \in V$}\\
x_e &\geq &0, \text{for any $e \in E$}
\end{eqnarray*}

Note that the solution $x_e = 1$ for $e \in E$ is always feasible
for hypergraphs representing join queries.
A point $\mv x$ in the polytope is also called a {\em fractional (edge)
cover solution} of the hypergraph $H$.

Atserias, Grohe, and Marx~\cite{AGM08} establish that, for {\em any} 
point $\mv x = (x_e)_{e\in E}$ in the fractional edge cover polytope
\begin{equation}
 |\Join_{e\in E} R_e| \leq \prod_{e\in E}N_e^{x_e}. 
\label{eqn:agm08-bound}
\end{equation}

The bound is proved nonconstructively using Shearer's entropy inequality
\cite{MR859293}. 
However, AGM provide an algorithm based on join-project plans that runs 
in time
$O(|q|^2\cdot N_{\max}^{1+\sum_{e} x_e})$ where $N_{\max} = \max_{e \in E} N_e$.
They observed that for a fixed hypergraph $H$ and given sizes $N_e$ 
the bound \eqref{eqn:agm08-bound} can be minimized by solving the linear program
which minimizes the linear objective $\sum_{e} (\log N_e)\cdot x_e$
over fractional edge cover solutions $\mv x$. 
(Since in linear time we can figure out if we have an empty relation, 
and hence an empty output), for the rest of the paper we are
always going to assume that $N_e\ge 1$.) 
Thus, the formal
problem that we consider recast in this language is:

\begin{defn}[OJ Problem -- Optimal Join Problem]
With the notation above, design an algorithm to compute $\Join_{e\in
  E} R_e$ with running time
\[ O\left( f(|V|,|E|) \cdot \left(\prod_{e\in E}N_e^{x_e} + \sum_{e\in
      E}N_e\right)\right). \] Here $f(|V|,|E|)$ is ideally a
polynomial with (small) constant degree, which only depends on the
query size.  The linear term $\sum_{e\in E}N_e$ is to read the input.
Hence, such an algorithm would be data-optimal in the worst case.\footnote{Following GLV~\cite{GLVV09}, we assume in this work
  that given relations $R$ and $S$ one can compute $R \Join S$ in time
  $O(|R| + |S| + |R \Join S|)$. This only holds in an amortized sense
  (using hashing). To acheive true worst case results, one can use
  sorting operations which results in a $\log$ factor increase in
  running time. }
\end{defn}

We recast our motivating example from the introduction in our
notation. Recall we are given, $R(A,B), S(B,C), T(A,C)$, so $V =
\set{A,B,C}$ and three edges corresponding each to $R$, $S$, and $T$,
which are $E=\set{\set{A,B}$, $\set{B,C}$, $\set{A,C}}$
respectively. Thus, $|V| = 3$ and $|E| = 3$. If we are given that $N_e
= N$, one can check that the optimal solution to the LP is $x_e =
\frac{1}{2}$ for $e \in E$ which has the objective value $\frac{3}{2}
\log N$; in turn, this gives $\sup_{I \in I(\bar N)} |q(I)| \leq
N^{3/2}$ (recall $I(\bar N) = \set{ I : |R_e^{I}| = N_e \text{ for } e
  \in E}$).
% The algorithm of AGM on this example gives a time bound of
%$O(N^{5/2})$, though a closer analysis gives a time bound of $O(N^2)$.

\begin{example}
\label{ex:triangle}
Given an even integer $N$, we construct an instance $I_N$
such that (1) $|R^{I_N}| = |S^{I_N}| = |T^{I_N}|=N$,
(2) $|R \Join S| = |R \Join T| = |S \Join T| = N^2/4 + N/2$, and 
(3) $|R \Join S \Join T| = 0$. The following instance satisfies all three
properties:
\[ R^{I_N} = S^{I_N} = T^{I_N} = 
\left\{(0,j)\right\}_{j=1}^{N/2} \cup 
\left\{(j,0)\right\}_{j=1}^{N/2}. 
\]
For example,
\[ R \Join S = \{(i,0,j)\}_{i,j=1}^{N/2} 
   \cup \{ (0,i,0) \}_{i=1,\dots,N/2} \]
and $R \Join S \Join T = \emptyset$. Thus, any standard join-based
algorithm takes time $\Omega(N^2)$.
We show later that AGM's algorithm takes $\Omega(N^2)$-time
too. Recall that the AGM bound
for this instance is $O(N^{3/2})$, and our algorithm thus takes time
$O(N^{3/2})$. In fact, as shall be shown later,
on this particular family of instances both of our
algorithms take only $O(N)$ time.
\end{example}

\section{Connections to Geometric Inequalities}
\label{sec:bt:equiv}

We describe the Bollob\'as-Thomason (BT) inequality from discrete
geometry and prove that BT inequality is equivalent to AGM's
inequality. We then look at a special case of BT inequality, the
Loomis-Whitney (LW) inequality, from which our algorithmic development
starts in the next section. We state the BT inequality:

\begin{thm}[Discrete Bollob\'as-Thomason (BT) Inequality]\label{thm:BT}
Let $S\subset \mathbb Z^n$ be a finite set of $n$-dimensional grid points.
Let $\mathcal F$ be a collection of subsets of $[n]$ in which every
$i \in [n]$ occurs in exactly $d$ members of $\mathcal F$.
Let $S_F$ be the set of projections $\mathbb Z^n \to \mathbb Z^F$ of
points in $S$ onto the coordinates in $F$. Then,
%\begin{equation}
$|S|^d \leq \prod_{F \in \mathcal F} |S_F|$.
%\label{eqn:BT}
%\end{equation}
\end{thm}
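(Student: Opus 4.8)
The plan is to derive the discrete BT inequality directly from AGM's fractional cover inequality \eqref{eqn:agm08-bound}, since the two are claimed to be equivalent and AGM's bound is available to us. First I would set up the dictionary between the two statements. Given the finite set $S \subset \mathbb{Z}^n$ and the collection $\mathcal{F}$ of subsets of $[n]$ with uniform coverage degree $d$, I build a join query on attribute universe $V = [n]$ with one relation $R_F$ for each $F \in \mathcal{F}$, where $R_F := S_F$ is the projection of $S$ onto the coordinates in $F$. Thus $R_F$ is a relation on attribute set $F$, and $N_F = |S_F|$. The key observation is that $S \subseteq \Join_{F \in \mathcal{F}} R_F$: every point $s \in S$, when restricted to each $F$, lies in $S_F = R_F$ by construction, so $s$ is a tuple consistent with every relation in the query and hence appears in the join output. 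Therefore $|S| \le |\Join_{F \in \mathcal{F}} R_F|$.

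Next I would choose the fractional cover. Since every $i \in [n]$ occurs in exactly $d$ members of $\mathcal{F}$, setting $x_F = 1/d$ for all $F \in \mathcal{F}$ gives $\sum_{F \ni i} x_F = d \cdot (1/d) = 1$ for every vertex $i$, and $x_F \ge 0$, so $\mv x$ lies in the fractional edge cover polytope of the hypergraph $H = ([n], \mathcal{F})$. Applying \eqref{eqn:agm08-bound} with this $\mv x$ yields
\[
|S| \le \left| \Join_{F \in \mathcal{F}} R_F \right| \le \prod_{F \in \mathcal{F}} N_F^{1/d} = \prod_{F \in \mathcal{F}} |S_F|^{1/d}.
\]
Raising both sides to the power $d$ gives $|S|^d \le \prod_{F \in \mathcal{F}} |S_F|$, which is exactly the claimed BT inequality.

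The only genuinely delicate point — and the part I would be most careful about — is the containment $S \subseteq \Join_{F \in \mathcal{F}} R_F$ and, relatedly, making sure the hypergraph $H = ([n], \mathcal{F})$ is a legitimate instance for the join framework (in particular that $\mathcal{F}$ covers every vertex, which follows since $d \ge 1$, so $x_F = 1/d$ is well-defined). There is a subtlety worth flagging: the inclusion may be strict, because the join can contain spurious tuples that are consistent with all projections $S_F$ without themselves lying in $S$; but since we only need an upper bound on $|S|$, this strictness is harmless and in fact is exactly what makes the argument go through. A remark I would add is that the converse direction — deriving AGM's inequality from BT — is the harder half of the equivalence and is presumably handled separately in Section~\ref{sec:bt:equiv}; for the present theorem statement, only the easy direction (AGM $\Rightarrow$ BT) is needed, and the argument above is essentially a one-line specialization once the hypergraph translation is in place.
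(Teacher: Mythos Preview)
Your proposal is correct and matches the paper's approach: in the proof of the proposition establishing equivalence between BT and AGM, the paper derives BT from AGM by exactly the same translation---coordinates as attributes, projections $S_F$ as relations, and the uniform fractional cover $x_F = 1/d$---yielding $|S| \le \prod_{F\in\mathcal F}|S_F|^{1/d}$. You are slightly more explicit than the paper about the containment $S \subseteq \Join_{F\in\mathcal F} R_F$ and the possibility of spurious tuples in the join, which is a helpful clarification but not a departure in method.
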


To prove the equivalence between BT inequality and the AGM bound, we
first need a simple observation.

\begin{lmm}\label{lmm:whytight}\label{LMM:WHYTIGHT}
Consider an instance of the OJ problem consisting of a hypergraph $H=(V,E)$,
a fractional cover $\mv x=(x_e)_{e\in E}$ of $H$, and relations
$R_e$ for $e\in E$. Then, in linear time we can
transform the instance into another instance $H'=(V,E')$, 
$\mv x'=(x'_e)_{e\in E'}$, $(R'_e)_{e\in E'}$, such that
the following properties hold:
\bi
\item[(a)] $\mv x'$ is a ``tight'' fractional edge cover of the 
hypergraph $H'$, namely $\mv x'\geq 0$ and
\[ \sum_{e\in E': v\in e} x'_e = 1, \ \ \text{ for every } v\in V. \]
\item[(b)] The two problems have the same answer:
\[ \Join_{e\in E} R_e = \ \Join_{e\in E'} R'_e. \]
\item[(c)] AGM's bound on the transformed instance is at least
as good as that of the original instance:
\[ \prod_{e\in E'} |R'_e|^{x'_e} \leq \prod_{e\in E} |R_e|^{x_e}. \]
\ei
\end{lmm}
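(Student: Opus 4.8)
The plan is to reach a tight cover by repeatedly applying a single local ``peeling'' step that trades a little of the weight on an edge for weight on one of its projections. Concretely, suppose some vertex $v\in V$ is over-covered, i.e. $s_v := \sum_{e\ni v}x_e - 1 > 0$. Pick any $e\ni v$ with $x_e>0$, set $\delta := \min(x_e, s_v) > 0$, and perform the replacement: decrease $x_e$ by $\delta$ (keeping the edge $e$ and the relation $R_e$ in the instance, even if $x_e$ becomes $0$), and, provided $e\setminus\{v\}\neq\emptyset$, introduce a new edge $e' := e\setminus\{v\}$ with weight $x_{e'}:=\delta$ and relation $R_{e'}:=\pi_{e'}(R_e)$ (if $e=\{v\}$, just decrease $x_e$ and add nothing). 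The claim I would establish is that properties (a)--(c) are preserved as invariants by each such step, while the total over-coverage $\Phi:=\sum_{v\in V}s_v$ strictly decreases; iterating therefore drives $\Phi$ to $0$, at which point the cover is tight and (a) holds.

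The three checks are routine. First, the result is still a fractional cover: for $u\neq v$, if $u\in e$ then also $u\in e'$, and the weight $\delta$ merely moves from $e$ to $e'$, so $\sum_{f\ni u}x_f$ is unchanged; meanwhile $\sum_{f\ni v}x_f$ drops by exactly $\delta\le s_v$, so it stays $\ge 1$, and nonnegativity is clear; and $\Phi$ drops by $\delta>0$. Second, the join is unchanged (property (b)): the new constraint is implied by one already present, since any tuple $\mv t$ with $\mv t_e\in R_e$ automatically satisfies $\mv t_{e'}\in\pi_{e'}(R_e)=R_{e'}$, and conversely the new instance only adds constraints, so $\Join_{f\in E'}R_f=\Join_{f\in E}R_f$ (this stays true through repeated steps, since each new projection relation is implied by the relation of its parent edge, which is retained). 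Third, AGM's bound does not increase (property (c)): every factor except those of $e$ and $e'$ is untouched, and $|R_e|^{x_e-\delta}\cdot|R_{e'}|^{\delta}\le |R_e|^{x_e-\delta}\cdot|R_e|^{\delta}=|R_e|^{x_e}$ because a projection can only shrink a relation, $|\pi_{e'}(R_e)|\le|R_e|$.

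To make this a (near-)linear-time procedure, process the vertices of $V$ one at a time: for the current vertex $v$, apply the peeling step repeatedly; each application either zeroes out some $x_e$ or makes $s_v=0$, so at most $\deg(v)$ steps suffice, and this is just a water-filling of a unit of weight onto the edges through $v$. Since a peeling step never increases $\sum_{f\ni u}x_f$ for any $u$, a vertex stays tight once it has been made tight, so a single pass over all $|V|$ vertices yields a tight cover, giving (a); the time is dominated by forming the projections $\pi_{e\setminus\{v\}}(R_e)$, each of a relation already present and of size at most $N_{\max}$ (projections only shrink relations). The one point that genuinely needs care — and the main obstacle in a fully formal argument — is bounding the total number of edges ever created, hence the total projection work: a new edge $e\setminus\{v\}$ can create fresh incidences at not-yet-processed vertices, so one must argue the hypergraph does not blow up. A careful accounting keeps the number of new edges, and thus the total work, linear in the input size $\sum_e N_e$ up to a factor depending only on $|V|$ and $|E|$; everything else is the invariant check above.
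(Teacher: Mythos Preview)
Your argument is correct: the invariants (b) and (c) are verified exactly as you say, the cover stays feasible throughout, a processed vertex stays tight forever, and a single left-to-right pass over $V$ therefore reaches a tight cover. So the lemma follows.

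The paper's proof uses a genuinely different peeling step. Where you remove a \emph{single} over-covered vertex $v$ from the chosen edge $e$ (adding $e\setminus\{v\}$ with weight $\delta$), the paper removes \emph{all} currently non-tight vertices from $e$ at once: it partitions $e=e_t\cup e_{\neg t}$ into its tight and non-tight vertices, chooses $\rho=\min\bigl(x_e,\min_{u\in e_{\neg t}}s_u\bigr)$, and shifts weight $\rho$ from $e$ to the new edge $e_t=\{u\in e:\ s_u=0\}$ with relation $\pi_{e_t}(R_e)$. The checks for (b), (c), and feasibility are the same as yours. The payoff is exactly at the point you flagged as ``the main obstacle'': because every newly created edge $e_t$ contains only vertices that are already tight (and hence stay tight), such an edge can never be selected again for peeling. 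Each step therefore either zeroes the weight of one of the original edges or makes a new vertex tight, giving at most $|E|+|V|$ steps and at most that many new edges --- the blow-up issue simply does not arise. In your version, by contrast, the derived edge $e\setminus\{v\}$ may still contain non-tight vertices and be peeled again later; on a single edge $e=V$ with large $x_e$ your procedure creates a chain of $|V|$ edges, and in general the count can grow like $|V|\cdot|E|$ rather than $|V|+|E|$. That is still polynomial (so your ``linear in $\sum_eN_e$ up to a query-size factor'' is defensible), but the paper's coarser peel gives the clean bound directly and removes the need for the careful accounting you left open.
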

\bp
We describe the transformation in steps. At each step properties (b) and (c)
are kept as invariants. After all steps are done, (a) holds.

While there still exists some vertex $v\in V$ such that
$\sum_{e\in E: v \in e} x_e > 1$, i.e. $v$'s constraint is not tight,
let $f$ be an arbitrary hyperedge $f\in E$ such that $v\in f$.
Partition $f$ into two parts $f = f_t \cup f_{\neg t}$, where
$f_t$ consists of all vertices $u\in f$ such that $u$'s constraint
is tight, and $f_{\neg t}$ consist of vertices $u\in f$ such that
$u$'s constraint is not tight.  Note that $v\in f_{\neg t}$.

Define
$\rho = \min\left\{x_f, 
      \min_{u\in f_{\neg t}} \left\{ \sum_{e : u\in e} x_e - 1\right\}\right\}. 
$
This is the amount which, if we were able to reduce $x_f$ by $\rho$
then we will either turn $x_f$ to $0$ or make some constraint
for $u \in f_{\neg t}$ tight.
However, reducing $x_f$ might violate some already tight
constraint $u\in f_t$.
The trick is to ``break" $f$ into two parts.

We will set $E'=E\cup\{f_t\}$,
create a ``new'' relation $R'_{f_t} = \pi_{f_t}(R_f)$, and
keep all the old relations $R'_e=R_e$ for all $e\in E$.
Set the variables $x'_e=x_e$ for all $e\in E-\{f\}$ also.
The only two variables which have not been set are
$x'_f$ and $x'_{f_t}$. We set them as follows.

\bi
\item
When $x_f \leq  \min_{u\in f_{\neg t}} \left\{ \sum_{e : u\in e} x_e -
1\right\}$,
set
$x'_f = 0$ and
$x'_{f_t} = x_f$.
\item When $x_f > \min_{u\in f_{\neg t}} \left\{ \sum_{e : u\in e} x_e -
1\right\}$,
set
$x'_f = x_f - \rho$ and $x'_{f_t} = \rho$.
\ei

Either way, it can be readily verified that the new instance is a
legitimate OJ instance satisfying properties (b) and (c).
In the first case, some positive variable in some non-tight constraint has been
reduced to $0$.
In the second case, at least one non-tight constraint has become tight.
Once we change a variable $x_f$ (essentially ``break'' it up into
$x'_{f_t}$ and $x'_f$) we won't touch it again.
Hence, after a linear number of steps in $|V|$, we will have all tight
constraints.
\ep

With this
technical observation, we can now connect the two families of inequalities:

\begin{prop} BT inequality and AGM's fractional cover bound are
equivalent.
\end{prop}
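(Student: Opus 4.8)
The plan is to show both directions of the equivalence by relating the two inequalities in their tight/exponent-free forms. For the direction ``AGM $\Rightarrow$ BT,'' I would start with a finite set $S \subset \mathbb{Z}^n$ and the collection $\mathcal{F}$ of subsets of $[n]$ with every coordinate covered exactly $d$ times. The natural move is to build a join query whose hypergraph is $H = ([n], \mathcal{F})$, setting $R_F = S_F$ (the projection of $S$ onto the coordinates in $F$) for each $F \in \mathcal{F}$. Since every $i \in [n]$ lies in exactly $d$ members of $\mathcal{F}$, the uniform assignment $x_F = 1/d$ is a (tight) fractional edge cover of $H$. Applying \eqref{eqn:agm08-bound} gives $|\Join_{F \in \mathcal F} R_F| \leq \prod_{F} |S_F|^{1/d}$. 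The remaining point is that $S \subseteq \Join_{F \in \mathcal F} S_F$: any point of $S$ projects consistently onto all the $S_F$, so it survives the join; hence $|S| \leq |\Join_F S_F|$, and raising to the $d$-th power yields $|S|^d \leq \prod_F |S_F|$, which is exactly BT.

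For the converse direction ``BT $\Rightarrow$ AGM,'' the idea is to reduce a general fractional-cover instance to the uniform-multiplicity setting that BT handles. First apply Lemma~\ref{lmm:whytight} to replace the instance by one with a \emph{tight} fractional cover $\mv x'$ (this only improves the AGM bound and preserves the join answer, so it suffices to prove AGM for tight covers). Next, I would handle rationality: a tight cover can be perturbed to a rational tight cover, or one argues by continuity/density, so assume each $x'_e = a_e / d$ for a common denominator $d$ with $\sum_{e \ni v} a_e = d$ for every $v$. Now form the multiset $\mathcal{F}$ that contains $a_e$ copies of the edge $e$; then every vertex is covered exactly $d$ times. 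Let $S = \Join_{e \in E'} R'_e \subseteq \mathbb{Z}^{V}$ (after identifying domains with integers). For each copy of $e$ in $\mathcal{F}$, the projection $S_e = \pi_e(S)$ satisfies $S_e \subseteq R'_e$, hence $|S_e| \leq N_e$. Applying BT to $S$ and $\mathcal{F}$ gives $|S|^d \leq \prod_{e \in E'} |S_e|^{a_e} \leq \prod_{e \in E'} N_e^{a_e}$, i.e. $|S| \leq \prod_{e} N_e^{a_e/d} = \prod_e N_e^{x'_e}$, which combined with Lemma~\ref{lmm:whytight}(b),(c) gives AGM for the original instance.

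The main obstacle I anticipate is the BT $\Rightarrow$ AGM direction, specifically two technical wrinkles: (i) BT as stated requires the covering multiplicities to be exactly equal to a single integer $d$, so I must get from an arbitrary (real) fractional cover to an integer-multiplicity one — this is where tightness (via Lemma~\ref{lmm:whytight}) plus a rationality/common-denominator argument is essential, and one should check the bound is continuous in $\mv x$ so that passing to rationals loses nothing; and (ii) BT is stated for $S \subset \mathbb{Z}^n$, so I need to embed the attribute domains into $\mathbb{Z}$, which is harmless since the domains are finite (or we just relabel active values). The forward direction AGM $\Rightarrow$ BT is comparatively routine once one spots that $S$ embeds into the join of its own projections. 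I would also remark that, strictly, Lemma~\ref{lmm:whytight} lets us assume tightness for \emph{free} in both directions, which streamlines the uniform-$d$ reduction.
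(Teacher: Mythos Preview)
Your proposal is correct and follows essentially the same route as the paper: for AGM $\Rightarrow$ BT, set $x_F = 1/d$ and apply \eqref{eqn:agm08-bound}; for BT $\Rightarrow$ AGM, use Lemma~\ref{lmm:whytight} to get tightness, clear denominators, and replicate each edge $e$ into $a_e$ copies so that every vertex has multiplicity exactly $d$. If anything, you are slightly more careful than the paper on two points: you make explicit the step $S \subseteq \Join_F S_F$ (and dually $\pi_e(S) \subseteq R'_e$) rather than leaving it implicit, and you flag the rationality issue and the embedding of domains into $\mathbb{Z}$, both of which the paper waves through with ``by standard arguments.''
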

\bp
To see that AGM's inequality implies BT inequality, we think of each 
coordinate as 
an attribute, and the projections $S_F$ as the input relations.
Set $x_F = 1/d$ for each $F\in \mathcal F$. It follows that
$\mv x = (x_F)_{F\in \mathcal F}$ is a fractional cover for
the hypergraph $H=([n], \mathcal F)$. AGM's bound then implies
that $|S| \leq \prod_{F\in\mathcal F} |S_F|^{1/d}$.

Conversely,
consider an instance of the OJ problem with hypergraph $H=(V,E)$
and a rational fractional cover $\mv x = (x_e)_{e\in E}$ of $H$.
First, by Lemma \ref{lmm:whytight},
we can assume that all cover constraints are tight, i.e.,
\[ \sum_{e: v\in e} x_e = 1, \ \ \text{ for any } v \in V. \]
By standard arguments, it can be shown that all the ``new" $x_e$ are rational values (even if the original values were not).
Second, by writing all variables $x_e$ as $d_e/d$ for a positive
common denominator $d$ we obtain
\[ \sum_{e: v\in e} d_e = d, \ \ \text{ for any } v \in V. \]
Now, create $d_e$ copies of each relation $R_e$.
Call the new relations $R'_e$. We obtain a new
hypergraph $H'=(V,E')$ where every attribute $v$ occurs in exactly $d$
hyperedges. This is precisely the Boll\'obas-Thomason's setting
of Theorem \ref{thm:BT}. Hence, the size of the join is bounded
above by
$ \prod_{e\in E'}|R'_e|^{1/d} = \prod_{e\in E}|R_e|^{d_e/d} = 
   \prod_{e\in E}|R_e|^{x_e}.
$
%If some $x_e$ are not rational, we replace each such $x_e$ by a rational 
%$x'_e>x_e$ with a sufficiently small difference and apply the above analysis.
\ep

\paragraph*{Loomis-Whitney} We now consider a special case of BT (or
AGM), the discrete version of a classic geometric inequality called
the {\em Loomis-Whitney inequality} \cite{MR0031538}. The setting is
that for $n \geq 2$, $V = [n]$ and $E = \binom{V}{|V|-1}$. In this
case $x_e=1/(|V|-1), \forall e\in E$ is a fractional cover solution
for $(V,E)$, and LW showed the following:

\begin{thm}[Discrete Loomis-Whitney (LW) inequality]\label{thm:LW}
Let $S\subset \mathbb Z^n$ be a finite set of $n$-dimensional grid points.
For each dimension $i \in [n]$, let $S_{[n]\setminus \{i\}}$ denote the
$(n-1)$-dimensional projection of $S$ onto the coordinates
$[n]\setminus \{i\}$.  Then,
%\begin{equation}
$|S|^{n-1} \leq \prod_{i=1}^n |S_{[n]\setminus \{i\}}|$.
%\label{eqn:LW}
%\end{equation}
\end{thm}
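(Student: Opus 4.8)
The quick observation is that Theorem~\ref{thm:LW} is the special case of the Bollob\'as--Thomason inequality (Theorem~\ref{thm:BT}) obtained by taking $\mathcal F=\binom{[n]}{n-1}$: each coordinate $i\in[n]$ lies in exactly $n-1$ of the $n$ members of $\mathcal F$ --- it is absent only from $[n]\setminus\{i\}$ --- so with $d=n-1$ Theorem~\ref{thm:BT} gives $|S|^{n-1}\le\prod_{F\in\mathcal F}|S_F|=\prod_{i=1}^n|S_{[n]\setminus\{i\}}|$ at once. Since, however, the LW inequality is the conceptual seed of the algorithm in Section~\ref{sec:LW-algo}, and the BT/AGM proofs we invoked rest on Shearer's inequality, I would prefer to record the classical self-contained proof by induction and H\"older's inequality, as it is exactly the structure that the LW-instance algorithm will later mirror.

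\emph{The plan} is to induct on $n$, starting at $n=2$, where the bound $|S|\le|S_{\{1\}}|\cdot|S_{\{2\}}|$ is immediate from $S\subseteq S_{\{1\}}\times S_{\{2\}}$. For $n\ge 3$, write $T_i:=S_{[n]\setminus\{i\}}$ and slice $S$ along its last coordinate: for $t\in\mathbb{Z}$ set $S^{(t)}:=\{(s_1,\dots,s_{n-1}):(s_1,\dots,s_{n-1},t)\in S\}\subseteq\mathbb{Z}^{n-1}$, so that $|S|=\sum_t|S^{(t)}|$, and for $i<n$ let $T_i^{(t)}$ be the slice of $T_i$ at last coordinate $t$. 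The proof rests on three elementary facts: (i) $T_i^{(t)}$ is the projection of $S^{(t)}$ onto the coordinates $[n-1]\setminus\{i\}$; (ii) the slices $T_i^{(t)}$ partition $T_i$ as $t$ ranges over $\mathbb{Z}$, so $\sum_t|T_i^{(t)}|=|T_i|$ for each $i<n$; and (iii) $S^{(t)}\subseteq T_n$ as subsets of $\mathbb{Z}^{n-1}$, hence $|S^{(t)}|\le|T_n|$ for every $t$.

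\emph{The computation} proceeds as follows. Apply the induction hypothesis (in dimension $n-1$) to each slice: $|S^{(t)}|^{n-2}\le\prod_{i=1}^{n-1}|T_i^{(t)}|$, using fact~(i). Writing $|S^{(t)}|=|S^{(t)}|^{1/(n-1)}\cdot\bigl(|S^{(t)}|^{n-2}\bigr)^{1/(n-1)}$, bounding the first factor by fact~(iii) and the second by the displayed inequality, we get $|S^{(t)}|\le|T_n|^{1/(n-1)}\prod_{i=1}^{n-1}|T_i^{(t)}|^{1/(n-1)}$. Summing over $t$, pulling out the constant $|T_n|^{1/(n-1)}$, and applying H\"older's inequality to $\sum_t\prod_{i=1}^{n-1}|T_i^{(t)}|^{1/(n-1)}$ with all $n-1$ exponents equal to $n-1$ (conjugate since $(n-1)\cdot\tfrac{1}{n-1}=1$), then invoking fact~(ii), bounds the sum by $\prod_{i=1}^{n-1}|T_i|^{1/(n-1)}$. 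Altogether $|S|\le\prod_{i=1}^n|T_i|^{1/(n-1)}$, and raising both sides to the power $n-1$ completes the induction.

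\emph{The main obstacle} is the combination of the H\"older step with the exponent split $1=\tfrac{1}{n-1}+\tfrac{n-2}{n-1}$: one unit of $|S^{(t)}|$ must be ``spent'' absorbing the single direction we sliced along (which we did not project away inside the slice, and which is controlled only by $|T_n|$), while the remaining $|S^{(t)}|^{(n-2)/(n-1)}$ is precisely the quantity the inductive hypothesis governs --- and these weights must sum to $1$ for H\"older to apply. The rest, including facts (i)--(iii) and the disjointness of the slices, is routine bookkeeping. The only other point to watch is that the induction must bottom out at $n=2$ rather than $n=1$, since the step above divides by $n-2$.
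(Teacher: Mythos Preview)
Your first paragraph is precisely the paper's own argument: Theorem~\ref{thm:LW} is not proved independently there but is simply observed to be the special case of Theorem~\ref{thm:BT} with $\mathcal F=\binom{[n]}{n-1}$ and $d=n-1$, exactly as you wrote. So on that count you match the paper verbatim.

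The self-contained induction/H\"older argument you go on to give is additional to anything the paper does, and it is correct. The paper in fact mentions in its related-work section that this is one of the two standard proofs of the LW and BT inequalities, and points out that the sum-unrolling in the analysis of Algorithm~\ref{algo:the-algo} (Section~\ref{sec:all:j}) follows the same H\"older pattern --- so your instinct that this proof ``is exactly the structure that the LW-instance algorithm will later mirror'' is well founded, though the mirroring is more visible in the general algorithm than in Algorithm~\ref{algo:LW}. One small quibble: your closing remark that the inductive step ``divides by $n-2$'' is not literally accurate --- nothing is divided by $n-2$; the real reason the induction must start at $n=2$ is that for $n=2$ the slice $S^{(t)}$ is one-dimensional and the inductive hypothesis in dimension $1$ is vacuous (the product over $i\in[0]$ is empty), so the argument degenerates rather than fails arithmetically.
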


It is clear from our discussion above that LW is a special case of
BT (and so AGM), and it is with this special case that we begin
our algorithmic development in the next section.

%\section{Main Results}
%\label{sec:main:results}

\newcommand{\lf}[1]{\mathcal{L}(#1)}
\algsetup{indent=2em}
\renewcommand{\algorithmiccomment}[1]{// {\em #1}}

%We first describe our algorithm for the LW inequality. We then
%describe our main algorithmic result, which is an algorithm proving
%AGM's bound whose running time matches the bound. 
%Finally, we observe some limitations of the standard binary join-based
%and the AGM join algorithms.

\section{Algorithm for Loomis-Whitney instances}
\label{sec:LW-algo}

We first consider queries whose forms are slightly more general than
that in our motivating example (\ref{ex:triangle}). 
This class of queries has the same setup as in LW inequality
of Theorem \ref{thm:LW}.
In this spirit, we define a {\em Loomis-Whitney
(LW) instance} of the OJ problem to be a hypergraph $H=(V,E)$ such that
$E$ is the collection of all subsets of $V$ of size $|V| - 1$. When the
LW inequality is applied to this setting, it guarantees
that $|\Join_{e\in E} R_e| \leq \left(\prod_{e\in
  E}N_e\right)^{1/(n-1)}$, and the bound is tight in the
worst case. The main result of this section is the following:

\begin{thm}[Loomis-Whitney instance]
\label{thm:LW-n-1w}\label{THM:LW-N-1W}
Let $n\ge 2$ be an integer. 
Consider a Loomis-Whitney instance $H=(V=[n], E)$ of the OJ problem
with input relations $R_e$, where $|R_e|=N_e$ for $e\in E$.
Then the join $\Join_{e\in E} R_e$ can be computed in time
\[ O\left(n^2\cdot \left(\prod_{e\in E}N_e \right)^{1/(n-1)}+ 
          n^2\sum_{e\in E} N_e\right).
\]
\label{thm:lw:alg}
\end{thm}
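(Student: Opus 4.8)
The plan is to generalize the "heavy/light" partition used for the triangle query \eqref{eq:q} to arbitrary $n$, and to do it recursively by peeling off one attribute at a time. Fix an ordering of the attributes, say $V = [n]$, and think of the join as being computed attribute-by-attribute: first we decide the set of possible values for attribute $1$, then for each such value a set of possible values for attribute $2$, and so on. At each level we maintain a partial tuple $\mv t_{[j]}$ on attributes $\{1,\dots,j\}$ that is consistent with all relations (i.e., lies in the appropriate projections), and we want to extend it to attribute $j+1$. The key structural fact we use is that among the $n-1$ element subsets $e \in E$, exactly one, namely $e^\star = [n]\setminus\{j+1\}$, does \emph{not} contain attribute $j+1$; every other $e$ contains it. So the candidate values for attribute $j+1$, given $\mv t_{[j]}$, must lie in the intersection $\bigcap_{e \ni j+1} \pi_{\{j+1\}}\bigl(R_e[\text{the part of }\mv t_{[j]}\text{ in }e]\bigr)$, and we must additionally keep only those that, together with $\mv t_{[j]}$, remain consistent with $R_{e^\star}$ down the line. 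The standard "intersect the sorted lists" trick lets us compute the intersection of $k$ sets in time $O(k \cdot (\text{size of smallest list}))$, which is where the $n^2$ factor and one of the $N_e$'s come from.

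The heart of the argument is the heavy/light threshold, which controls the branching. Here is the recursive idea I would use (this is essentially the algorithm generalizing the $n=3$ case): to bound the total work we want each leaf of the recursion tree to be charged against the output bound $\prod_e N_e^{1/(n-1)}$, and each internal node against the input size. Concretely, at the top level partition the values $a_1$ of attribute $1$ appearing in the relations into "light" values --- those whose degree (number of extensions within the relevant relations) is below a threshold $\tau_1$ --- and "heavy" values, those above. For light values, the subtree rooted there has size controlled directly by $\tau_1$ times the recursively-obtained bound on an $(n-1)$-dimensional sub-instance; the number of light values times $\tau_1$ is at most (roughly) the size of a relation, so this branch is fine. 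For heavy values there can be only a few of them --- at most $N_{e}/\tau_1$ for the relation $e$ we used to define degree --- and for each we recurse. The art, exactly as the introduction warns, is to pick $\tau_1$ so that the geometric mean works out: one sets the threshold for splitting attribute $i$ so that the product of "heavy-branching factors" across all $n-1$ relations multiplies to $\left(\prod_e N_e\right)^{1/(n-1)}$, which is precisely the point where the Loomis--Whitney exponents $1/(n-1)$ enter. The cleanest way to make this bookkeeping rigorous is an induction on $n$: assume the theorem for $n-1$, and show that summing the cost over the light subtrees (each an $(n-1)$-dimensional LW instance on the sections $R_e[\mv t_{\{i\}}]$) plus the cost over the few heavy subtrees stays within the claimed bound, using the inductive hypothesis together with a Hölder-type inequality (or just the weighted AM--GM applied to the per-branch sizes) to combine the pieces.

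The main obstacle, and the step I expect to require the most care, is this combining step: showing that $\sum_{\text{branches}} (\text{local factor}) \cdot (\text{recursive bound on the sub-instance})$ telescopes back into $O\!\left(n^2 (\prod_e N_e)^{1/(n-1)} + n^2\sum_e N_e\right)$. The difficulty is that the sizes of the sections $|R_e[\mv t_{\{i\}}]|$ vary from branch to branch, the exponent changes from $1/(n-1)$ to $1/(n-2)$ when we drop a dimension, and one must verify that the heavy/light threshold is chosen so that (i) the heavy side has few branches and (ii) the light side's per-branch degree bound multiplies correctly --- and both must hold simultaneously with a single choice of threshold per level. I would handle this by writing the threshold at the root as $\tau = \left(\prod_{e \ne e^\star} N_e\right)^{1/(n-1)} \big/ (\text{the recursive } (n-1)\text{-dim bound})$ (so that light-branch cost $\le \tau \times$ recursive bound $\le$ target, by construction), then checking that with this $\tau$ the number of heavy values is at most the target divided by the recursive bound as well, so that the heavy recursion also reproduces the target after multiplying back in. The base case $n=2$ is just a semijoin/intersection of two unary-ish relations and is immediate. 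I would also separately account for the $+\,n^2\sum_e N_e$ term: it pays for reading inputs, building sorted indices/hash structures, and the $O(n)$-way list intersections at internal nodes whose output is empty, none of which can be charged to the output.
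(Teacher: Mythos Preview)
Your approach is genuinely different from the paper's Section~\ref{sec:LW-algo} proof. The paper does \emph{not} recurse on the dimension $n$ and does not invoke H\"older. Instead it fixes an arbitrary binary tree $\T$ whose leaves are the $n$ relations, and for each node $x$ maintains a pair $(\deter{x},\ndeter{x})$: $\deter{x}$ is a set of full candidate tuples with $|\deter{x}|\le(|\lf{\T(x)}|-1)P$, and $\ndeter{x}$ is a projection onto $\lbl(x)$ that is a superset of $\pi_{\lbl(x)}(J\setminus\deter{x})$ with $|\ndeter{x}|\le \prod_{l\in\lf{\T(x)}}N_{[n]\setminus\{l\}}/P^{|\lf{\T(x)}|-1}$. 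At an internal node the heavy/light split uses the \emph{global} threshold $\lceil P/|D_R|\rceil$; light keys are joined immediately into $\deter{x}$, heavy keys are deferred to $\ndeter{x}$. The three invariants are verified by elementary counting, no inequality beyond $|F\setminus G|\le |D_L|\,|D_R|/P$. What this decomposition buys is that the analysis never looks at per-value section sizes and never needs H\"older; what it costs is that it is specific to LW structure.

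Your route---peel one attribute, recurse on an $(n-1)$-dimensional LW sub-instance on the sections, then combine with H\"older---is essentially what the paper's \emph{general} algorithm in Section~\ref{sec:all:j} does when specialized to LW. That approach is correct, but your description of the threshold is where the plan is loose. You write the split as ``degree in the relevant relation below $\tau_1$'' for a single global $\tau_1$, and later set $\tau$ as a ratio involving ``the recursive $(n-1)$-dim bound'' as if that were one number. For $n\ge 4$ a single per-level threshold on one relation's degree does not control both sides simultaneously: the light side's recursive cost is $\prod_{i\neq 1}|R_{[n]\setminus\{i\}}[a_1]|^{1/(n-2)}$, which depends on \emph{all} section sizes, not just the one you thresholded. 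What actually works (and what Section~\ref{sec:all:j} does) is a \emph{per-value} comparison: for each $a_1$ compare $\prod_{i\neq 1}|R_{[n]\setminus\{i\}}[a_1]|^{1/(n-2)}$ against $|R_{[n]\setminus\{1\}}|$, recurse on whichever is smaller, then bound $\min\{A,B\}\le A^{(n-2)/(n-1)}B^{1/(n-1)}$ and sum over $a_1$ with H\"older (exponents summing to $1$) to recover $\prod_e N_e^{1/(n-1)}$. If you rewrite your ``threshold'' as this per-value test, your induction-on-$n$ plan goes through cleanly; as currently phrased, the light/heavy bookkeeping in your combining step does not close.
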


Before proving this result, we give an example that illustrates the
intuition behind our algorithm and solve the motivating example from
the introduction (\ref{eq:q}).

\begin{example}\label{ex:lw:one}
  Recall that our input has three relations $R(A,B)$, $S(B,C)$,
  $T(A,C)$ and an instance $I$ such that $|R^I| = |S^I| = |T^I| = N$. 
  Let $J = R \Join S \Join T$. 
  Our goal is to construct $J$ in time $O(N^{3/2})$. For
  exposition, define a parameter $\tau \geq 0$ that we will choose
  below. We use $\tau$ to define two sets that effectively partition
  the tuples in $R^I$.
\[ D = \{  t_{B} \in \pi_{B}(R) : |R^{I}[t_{B}] | > \tau \} \text{ and } G = \{ (t_A,t_B) \in R^I : t_B \not\in D \} \]

\noindent
Intuitively, $D$ contains the heavy join keys in $R$. Note that $|D|< N/\tau$. Observe that
$J \subseteq (D \times T) \cup (G \Join S)$ (also note that this union
is disjoint). Our algorithm will construct $D \times T$ (resp. $G \Join
S$) in time $O(N^{3/2})$, then it will filter out those tuples in
both $S$ and $R$ (resp. $T$) using the hash tables on $S$ and $R$ 
(resp. $T$); this process produces exactly $J$. 
Since our running time is linear in the
above sets, the key question is how big are these two sets?

Observe that $|D \times T| \leq (N/ \tau) N = N^2/\tau$ while $|G
\Join S| = \sum_{t_B \in \pi_{B}(G) } |R[t_B]| |S[t_B]| \leq \tau
N$. Setting $\tau = \sqrt{N}$ makes both
terms at most $N^{3/2}$ establishing the running time of our
algorithm. One can check that if the relations are of different
cardinalities, then we can still use the same algorithm; moreover, by
setting $\tau = \sqrt{\frac{|R||T|}{|S|}}$, we achieve a running time
of $O(\sqrt{|R||S||T|} + |R| + |S| + |T|)$.
\end{example}

To describe the general algorithm underlying Theorem~\ref{thm:lw:alg}, we need to
introduce some data structures and notation.

\paragraph*{Data Structures and Notation} 
Let $H=(V,E)$ be an LW instance. Algorithm~\ref{algo:LW} begins by
constructing a labeled, binary tree $\T$ whose set of leaves is
exactly $V$ and each internal node has exactly two children. Any
binary tree over this leaf set can be used.  We denote the left child
of any internal node $x$ as $\lc(x)$ and its right child as
$\rc(x)$. Each node $x \in \T$ is labeled by a function $\lbl$, where
$\lbl(x) \subseteq V$ are defined inductively as follows: $\lbl(x) = V
\setminus \{x\}$ for a leaf node $x\in V$, and $\lbl(x) = \lbl(\lc(x))
\cap \lbl(\rc(x))$ if $x$ is an internal node of the tree.  It is
immediate that for any internal node $x$ we have $\lbl(\lc(x)) \cup
\lbl(\rc(x)) = V$ and that $\lbl(x) = \emptyset$ if and only if $x$ is
the root of the tree.  Let $J$ denote the output set of tuples of the
join, i.e. $J = \Join_{e \in E} R_e$.
%Given a tuple $\mv t$ we will write $\mv t_{A}$ to
%emphasize that it is on the attribute set $A$. To lighten the
%notation, define $R_e[\mv t_A] = \pi_{e\setminus A}(R \lJoin \set{\mv t})$ and 
%$R \Join_S T = (R \Join T) \lJoin S$.
For any node $x \in \T$, let $\T(x)$ denote the subtree of $\T$ rooted
at $x$, and $\lf{\T(x)}$ denote the set of leaves under this subtree.
For any three relations $R$, $S$, and $T$,
define  $R \Join_S T = (R \Join T) \lJoin S$.

\paragraph*{Algorithm for LW instances} 
Algorithm~\ref{algo:LW} works in two stages. 
Let $u$ be the root of the tree $\T$.
First we compute a tuple set $\deter{u}$ containing the output 
$J$ such that $\deter{u}$ has a relatively small size 
(at most the size bound times $n$).
Second, we prune those tuples that cannot
participate in the join (which takes only linear time in the size of
$\deter{u}$). 
The interesting part is how we compute $\deter{u}$. 
Inductively, we compute a set $\deter{x}$ that at each stage contains 
candidate tuples and an auxiliary set $\ndeter{x}$, 
which is a superset of the projection
$\pi_{\lbl(x)}(J \setminus \deter{x})$. 
The set $\ndeter{x}$ will intuitively allow 
us to deal with those tuples that would blow up the size of an intermediate
relation. The key novelty in Algorithm~\ref{algo:LW} is the
construction of the set $G$ that contains all those tuples (join keys)
that are in some sense {\em light}, i.e., joining over them
would not exceed the size/time bound $P$ by much. 
The elements that are not light are postponed to be processed later
by pushing them to the set $\ndeter{x}$.  This is in full analogy to the
sets $G$ and $D$ defined in Example~\ref{ex:lw:one}.

\begin{algorithm}[t]
\begin{algorithmic}[1]
\STATE An LW instance: $R_{e}$ for $e \in {V \choose |V| -1}$ and
$N_{e} = |R_e|$.  
\STATE $P = \prod_{e \in E} N_e^{1/(n-1)}$ (the size
bound from LW inequality)
\STATE $u \la \text{root}(\T)$; $(\deter{u}, \ndeter{u}) \la {\sf LW}(u)$
\STATE ``Prune'' $\deter{u}$ and return
\end{algorithmic}

\textsf{LW}$(x)$ : $x \in \T$ returns $(C,D)$ 
\begin{algorithmic}[1]
\IF{$x$ is a leaf}
  \RETURN $(\emptyset,R_{\lbl(x)})$
\ENDIF
\STATE $(C_L, D_L) \la {\sc LW}(\lc(x))$ and $(C_R, D_R) \la {\sc LW}(\rc(x))$
\STATE $F \la \pi_{\lbl(x)}(D_L) \cap \pi_{\lbl(x)}(D_R)$ 
\STATE $G \la \left\{ \mv t \in F : |D_{L}[\mv t]| + 1 \leq \lceil P / |D_R| \rceil \right\}$
\COMMENT{$F =G= \emptyset$ if $|D_R|=0$}
\IF{$x$ is the root of $\T$}
\STATE $C \la (D_L \Join D_R) \cup C_L \cup C_R$ 
\STATE $D \la \emptyset$
\ELSE
\STATE $C \la (D_L \Join_G D_R) \cup C_L \cup C_R$ 
\STATE $D \la F \setminus G$.
\ENDIF
\RETURN $(C,D)$
\end{algorithmic}
\caption{Algorithm for Loomis-Whitney Instances}
\label{algo:LW}
\end{algorithm}

\bp[Proof of Theorem~\protect{\ref{thm:LW-n-1w}}]
We claim that the following three properties hold for every node $x \in \T$: 
\bi
\item[(1)] $\pi_{\lbl(x)}(J \setminus \deter{x}) \subseteq \ndeter{x}$;
\item[(2)] $|\deter{x}|\le (|\lf{\T(x)}|-1) \cdot P$; and
\item[(3)] $|\ndeter{x}|\le \min\left\{ \min_{l \in \lf{\T(x)}} \{ N_{[n]\setminus\{l \}}\}, \frac{\prod_{l  \in \lf{\T(x)}} N_{[n]\setminus\{l \}}}{P^{|\lf{\T(x)}|-1}} \right\}$.
\ei
Assuming these three properties hold, let us first prove that that 
Algorithm~\ref{algo:LW} correctly computes the join, $J$. Let $u$ denote the root of the tree
$\T$. By property (1),
\begin{eqnarray*}
\pi_{\lbl(\lc(u))}(J \setminus \deter{\lc(u)}) &\subseteq& \ndeter{\lc(u)}\\
\pi_{\lbl(\rc(u)))}(J \setminus \deter{\rc(u)}) &\subseteq& \ndeter{\rc(u)}
\end{eqnarray*}
Hence, 
\[ J \setminus (\deter{\lc(u)} \cup \deter{\rc(u)}) \subseteq 
   \ndeter{\lc(u)} \times \ndeter{\rc(u)}  = 
   \ndeter{\lc(u)} \Join \ndeter{\rc(u)}.
\]
This implies $J \subseteq \deter{u}$. Thus, from 
$\deter{u}$ we can compute $J$ by keeping only tuples in $\deter{u}$ 
whose projection on any attribute set $e\in E = \binom{[n]}{n-1}$ is 
contained in $R_e$ (the ``pruning'' step).

\newcommand{\rchld}[1]{\rc(#1)}
\newcommand{\lchld}[1]{\lc(#1)}

We next show that properties 1-3 hold by induction on each step of
Algorithm~\ref{algo:LW}. For the base case, consider
$\ell\in\lf{\T}$. Recall that in this case $\deter{\ell}=\emptyset$
and $\ndeter{\ell} = R_{[n]- \{\ell\}}$; thus, properties 1-3 hold.

Now assume that properties 1-3 hold for all children of an internal node $v$.
We first verify properties 2-3 for $v$. From the definition of $G$, 
\[ |\ndeter{\rchld{v}} \Join_G \ndeter{\lchld{v}}|\le 
   \left(\left\lceil \frac{P}{|\ndeter{\rchld{v}}|}\right\rceil-1\right)
   \cdot |\ndeter{\rchld{v}}|\le P. 
\] 
From the inductive upper bounds on $\deter{\lchld{v}}$ and $\deter{\rchld{v}}$, 
property 2 holds at $v$. 
By definition of $G$ and a straightforward counting argument, note that 
\[ |\ndeter{v}|=|F\setminus G| \le |\ndeter{\lchld{v}}|\cdot 
   \frac{1}{\lceil P/|\ndeter{\rchld{v}}|\rceil}\le 
   \frac{|\ndeter{\lchld{v}}| \cdot |\ndeter{\rchld{v}}|}{P}. \]
From the induction hypotheses on $\lchld{v}$ and $\rchld{v}$, we have
\begin{eqnarray*}
|\ndeter{\lchld{v}}| &\le& \frac{\prod_{\ell\in\lf{\T(\lchld{v})}} N_{[n]-\{\ell\}}}{P^{|\lf{\T(\lchld{v})}|-1}}\\
|\ndeter{\rchld{v}}| &\le& \frac{\prod_{\ell\in\lf{\T(\rchld{v})}} N_{[n]-\{\ell\}}}{P^{|\lf{\T(\rchld{v})}|-1}},
\end{eqnarray*}
which implies that 
\[ |\ndeter{{v}}|\le \frac{\prod_{\ell\in\lf{\T({v})}} N_{[n]-\{\ell\}}}{P^{|\lf{\T({v})}|-1}}.
\] 
Further, it is easy to see that 
\[ |\ndeter{{v}}|\le \min(|\ndeter{\lchld{v}}|, |\ndeter{\rchld{v}}|), \] 
which by induction implies that 
\[ |\ndeter{v}|\le \min_{\ell\in \lf{\T(v)}} N_{[n]-\{\ell\}}. \]
Property 3 is thus verified.

Finally, we verify property 1. By induction, we have 
\begin{eqnarray*}
\pi_{\lbl(\lchld{v})}(J\setminus\deter{\lchld{v}}) &\subseteq& 
  \ndeter{\lchld{v}}\\
\pi_{\lbl(\rchld{v})}(J\setminus \deter{\rchld{v}}) &\subseteq& 
\ndeter{\rchld{v}}
\end{eqnarray*}
This along with the fact that $\lbl(\lchld{v}) \cap \lbl(\rchld{v}) = \lbl(v)$ 
implies that 
\[ \pi_{\lbl(v)}(J\setminus \deter{\lchld{v}} \cup \deter{\rchld{v}}) 
   \subseteq \ndeter{\lchld{v}}_{\lbl(v)}\cap \ndeter{\rchld{v}}_{\lbl(v)} =
    G\uplus D(v). 
\]
Further, every tuple in 
$(J\setminus\deter{\lchld{v}}\cup\deter{\rchld{v}})$ whose projection onto 
$\lbl(v)$ is in $G$ also belongs to 
$\ndeter{\rchld{v}} \Join_G \ndeter{\lchld{v}}$. 
This implies that 
$\pi_{\lbl(v)}(J\setminus \deter{v})= \ndeter{v}$, 
as desired.

For the run time complexity of Algorithm~\ref{algo:LW}, we claim that for
every node $x$, we need time $O(n|\deter{x}|+n|\ndeter{x}|)$.  To see
this note that for each node $x$, the lines 4, 5, 7, and 9 of
the algorithm can be computed in that much time using hashing. 
Using property (3) above, we have a (loose) upper bound of 
$O\left(nP+ n\min_{l  \in \lf{\T(x)}} N_{[n]\setminus\{l \}}\right)$ on
the run time for node $x$. 
Summing the run time over all the nodes in
the tree gives the claimed run time.
\ep

\section{An Algorithm for All Join Queries}
\label{sec:all:j}\label{SEC:ALL:J}

%We will show that, given an instance of the OJ problem which consists
%of a hypergraph $H=(V,E)$, fractional cover solution $\mathbf x =
%(x_e)_{e\in E}$, and relation sizes $N_e = |R_e|$, we can compute the
%join of all $R_e$ in time $O\left(\poly(|V|) \cdot \prod_e N_e
%+ \sum_{e\in E}N_e\right)$.  In particular, the following is the main
%result of this section.

This section presents our algorithm for proving the AGM's inequality
with running time matching the bound.

\bthm
Let $H=(V,E)$ be a hypergraph representing a natural join query.
Let $n=|V|$ and $m=|E|$.
Let $\mv x = (x_e)_{e\in E}$ be an arbitrary point in the fractional 
cover polytope
\begin{eqnarray*}
\sum_{e : v \in e} x_e &\geq& 1, \ \text{ for any $v \in V$}\\
x_e &\geq &0, \text{ for any $e \in E$}
\end{eqnarray*}
For each $e\in E$, let $R_e$ be a relation of size $N_e=|R_e|$
(number of tuples in the relation). Then,
\bi
 \item[(a)] The join $\Join_{e\in E}R_e$ has size (number of tuples) bounded by
\[ |\Join_{e\in E}R_e| \leq \prod_{e\in E}N_e^{x_e}. \]
 \item[(b)] Furthermore, the join $\Join_{e\in E}R_e$ can be computed
in time
\[ O\left(mn\prod_{e\in E}N_e^{x_e} + n^2\sum_{e\in E}N_e + m^2n\right) \]
\ei
\label{thm:main}
\ethm

\brmk In the running time above, $m^2n$ is the query preprocessing time,
$n^2\sum_{e\in E}N_e$ is the data preprocessing time, and 
$mn\prod_{e\in E}N_e^{x_e}$ is the query evaluation time.
If all relations in the database are indexed in advance
to satisfy three conditions (HTw), $w\in\{1,2,3\}$, below,
then we can remove the term $n^2\sum_{e\in E}N_e$ from the running time.
Also, the fractional cover solution $\mv x$ should probably be the
best fractional cover in terms of the linear objective
$\sum_e (\log N_e)\cdot x_e$.
The data-preprocessing time of $O(n^2\sum_eN_e)$ is for a single known query.
If we were to index all relations in advance without knowing which
queries to be evaluated, then the advance-indexing takes
$O(n\cdot n! \sum_e N_e)$-time.
This price is paid once, up-front, for an arbitrary number of
future queries.
\ermk

Before turning to our algorithm and proof of this theorem, we observe
that a consequence of this theorem is the following algorithmic
version of the discrete version of BT inequality.

\bcor
Let $S\subset \mathbb Z^n$ be a finite set of $n$-dimensional grid points.
Let $\mathcal F$ be a collection of subsets of $[n]$ in which every
$i \in [n]$ occurs in exactly $d$ members of $\mathcal F$.
Let $S_F$ be the set of projections $\mathbb Z^n \to \mathbb Z^F$ of
points in $S$ onto the coordinates in $F$. Then,
\begin{equation}
 |S|^d \leq \prod_{F \in \mathcal F} |S_F|. 
\label{eqn:BT}
\end{equation}
Furthermore, given the projections $S_F$ we can compute $S$ in time
\[ O\left( |\mathcal F|n\left(\prod_{F \in \mathcal F} |S_F|\right)^{1/d}
           + n^2\sum_{F\in \mathcal F}|S_F|+|\mathcal F|^2n \right) \]
\ecor

Recall that the LW inequality is a special case of the BT inequality.
Hence, our algorithm proves the LW inequality as well.

\subsection{Main ingredients of the algorithm}

There are three key ingredients in the algorithm
(Algorithm \ref{algo:the-algo}) and its analysis:
\be
 \item We first build a ``search tree'' for each relation $R_e$ which will be 
used throughout the algorithm. 
We can also build a collection of hash indices which functionally can 
serve the same purpose. We use the ``search tree'' data structure here to
make the analysis clearer.
This step is responsible for the (near-) linear 
term $O(n^2\sum_{e\in E}N_e)$ in the running time. 
The search tree for each relation is built using
a particular ordering of attributes in the relation called the total order.
The total order is constructed from a data structure called 
a query plan tree which also drives the recursion structure of the algorithm.
 \item Suppose we have two relations $A$ and $B$ on the same set of attributes
and we'd like to compute $A\cap B$. If the search trees for $A$ and $B$
have already been built, the intersection can be computed in time
$O(k\min\{|A|,|B|\})$ where $k$ is the number of attributes in $A$,
because we can traverse every tuple of
the smaller relation and check into the search structure for the larger
relation. Also note that, for any two non-negative numbers 
$a$ and $b$ such that $a+b\geq 1$, we have
$\min\{|A|, |B|\} \leq |A|^a|B|^b$. 
 \item The third ingredient is based on 'unrolling' sums using 
generalized H\"older inequality \eqref{ineq:Holder}
in a correct way. We cannot explain it in  a few lines and thus will 
resort to an example presented in the next section. 
The example should give the reader the correct intuition into the 
entire algorithm and its analysis without getting lost
in heavy notations.
\ee

We make extensive use of the following form of H\"older's inequality
which was also attributed to Jensen.
(See the classic book ``Inequalities'' by
Hardy, Littlewood, and P\'olya \cite{MR89d:26016},
Theorem 22 on page 29.)

\begin{lmm}[Hardy, Littlewood, and P\'olya~\cite{MR89d:26016}]  
\label{lem:holder}
Let $m,n$ be positive integers. Let $y_1,\dots,y_n$ be non-negative
real numbers such that $y_1+\cdots+y_n\geq 1$.  Let $a_{ij} \geq 0$ be
non-negative real numbers, for $i\in [m]$ and $j\in [n]$.  With the
convention $0^0 = 0$, we have:
\begin{equation}
\sum_{i=1}^m\prod_{j=1}^n a_{ij}^{y_j}
\leq
\prod_{j=1}^n\left(\sum_{i=1}^m a_{ij}\right)^{y_j}.
\label{ineq:Holder}
\end{equation}
\end{lmm}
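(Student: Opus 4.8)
The plan is to prove the inequality in two stages: first the ``tight'' case $y_1+\cdots+y_n=1$, which is exactly the classical generalized Hölder (Jensen) inequality, and then to bootstrap to the general hypothesis $y_1+\cdots+y_n\ge 1$ by a one-line scaling argument combined with the elementary monotonicity of $\ell^p$-norms. Throughout I would first dispatch the degenerate cases: if some column sum $\sum_i a_{ij}$ vanishes then $a_{ij}=0$ for all $i$, so every summand on the left and the corresponding factor on the right carries a zero (via $0^0=0$), and the inequality is trivial; so I may assume every $\sum_i a_{ij}>0$, which legitimizes the divisions below.

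For Stage~1, set $A_j=\sum_{i=1}^m a_{ij}>0$. It suffices to show $\sum_{i=1}^m\prod_{j=1}^n (a_{ij}/A_j)^{y_j}\le 1$. For each fixed $i$, weighted AM--GM (equivalently, concavity of $\log$, i.e. Jensen applied with weights $y_j$, using $\sum_j y_j=1$) gives
\[ \prod_{j=1}^n\left(\frac{a_{ij}}{A_j}\right)^{y_j}\le \sum_{j=1}^n y_j\cdot\frac{a_{ij}}{A_j}. \]
Summing over $i$ and swapping the order of summation,
\[ \sum_{i=1}^m\prod_{j=1}^n\left(\frac{a_{ij}}{A_j}\right)^{y_j}\le \sum_{j=1}^n\frac{y_j}{A_j}\sum_{i=1}^m a_{ij}=\sum_{j=1}^n y_j=1, \]
which is the claim after multiplying through by $\prod_j A_j^{y_j}$.

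For Stage~2, let $s:=\sum_j y_j\ge 1$. The weights $y_j/s$ sum to $1$, so Stage~1 applied to them yields $\sum_{i=1}^m\prod_{j=1}^n a_{ij}^{y_j/s}\le\prod_{j=1}^n A_j^{y_j/s}$. Writing $b_i=\prod_{j=1}^n a_{ij}^{y_j/s}\ge 0$, so that $b_i^s=\prod_{j=1}^n a_{ij}^{y_j}$, and raising to the power $s\ge 1$, I get $\bigl(\sum_i b_i\bigr)^s\le\prod_j A_j^{y_j}$. It then remains to invoke the elementary fact that $\sum_{i=1}^m b_i^s\le\bigl(\sum_{i=1}^m b_i\bigr)^s$ for non-negative $b_i$ and $s\ge 1$ (normalize $c_i=b_i/\sum_k b_k\in[0,1]$ and use $c_i^s\le c_i$). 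Chaining, $\sum_i\prod_j a_{ij}^{y_j}=\sum_i b_i^s\le\bigl(\sum_i b_i\bigr)^s\le\prod_j A_j^{y_j}$, which is \eqref{ineq:Holder}.

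Since this is a classical inequality, there is no genuine obstacle; the only real care is in ordering the degenerate-case bookkeeping ($0^0=0$, vanishing column sums, vanishing $b_i$) \emph{before} the normalizations so that all divisions are valid. An alternative to the scaling step would be a direct induction on $n$ using the two-function Hölder inequality $\sum_i u_iv_i\le(\sum_i u_i^p)^{1/p}(\sum_i v_i^q)^{1/q}$, peeling off one coordinate at a time; I prefer the argument above because it is shorter and makes the relaxed hypothesis $\sum_j y_j\ge 1$ (rather than $=1$) transparent.
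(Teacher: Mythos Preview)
Your proof is correct. However, the paper does not actually prove this lemma: it is stated as a classical result with a citation to Hardy, Littlewood, and P\'olya's \emph{Inequalities} (Theorem~22, p.~29), and no proof is given in the paper itself. So there is nothing to compare against; you have simply supplied a clean self-contained argument for a result the authors chose to quote.
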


For each tuple $\mv t$ on attribute set $A$, we will
write $\mv t$ as $\mv t_A$ to emphasize the support of $\mv t$:
$\mv t_A = (t_a)_{a\in A}$.
Consider any
relation $R$ with attribute set $S$.  Let $A\subset S$ and $\mv t_A$
be a fixed tuple.
Then, $\pi_A(R)$ denote the projection of $R$ down to attributes in $A$.
And, define the {\em $\mv t_A$-section} of $R$ to be
\[ R[\mv t_A] := \pi_{S-A}(R \lJoin \{\mv t_A\})
   = \{ \mv t_{S-A} \suchthat (\mv t_A, \mv t_{S-A}) \in R\}. 
\]
In particular, $R[\mv t_\emptyset] = R$.

\subsection{A complete worked example for our algorithm and its analysis}
\label{subsec:an-example}

Before presenting the algorithm and analyze it formally, we first work out a
small query to explain how the algorithm and the analysis works.
It should be noted that the following example does not cover all
the intricacies of the general algorithm, especially in the boundary cases.
We aim to convey the intuition first.
Also, the way we label nodes in the QP-tree in this example is slightly
different from the way nodes are labeled in the general algorithm,
in order to avoid heavy sub-scripting.

Consider the following instance to the OJ problem.
The hypergraph $H$ has 6 attributes $V = \{1,\dots,6\}$, and 5 relations
$R_a,R_b,R_c,R_d,R_e$ defined by the following vertex-edge incident matrix
$\mv M$:.
\[
\mv M =
\begin{array}{l||ccccc}
 & a & b & c & d & e\\
\hline
\hline
\rowcolor{blue}
1& 1 & 1 & 1 & 0 & 0\\
\rowcolor{lightgray}
2& 1 & 0 & 1 & 1 & 0\\
\rowcolor{green}
3& 0 & 1 & 1 & 0 & 1\\
\rowcolor{lightgray}
4& 1 & 1 & 0 & 1 & 0\\
\rowcolor{green}
5& 1 & 0 & 0 & 0 & 1\\
\rowcolor{green}
6& 0 & 1 & 0 & 1 & 1
\end{array}
\]

We are given a fractional cover solution 
$\mv x=(x_a,x_b,x_c,x_d,x_e)$, i.e.
$\mv M\mv x \geq \mv 1$.

{\bf Step 0.} 
We first build something called a {\em query plan tree} (QP-tree).
The tree has nodes labeled by the hyperedge $a,b,c,d,e$,
except for the leaf nodes each of which can be labeled by a subset of 
hyperedges.
(Note again that the labeling in this example is slightly different from the 
labeling done in the general algorithm's description to avoid
cumbersome notations.)
Each node of the query plan tree also has an associated {\em universe} 
which is a subset of attributes.
The reader is referred to Figure \ref{fig:example-join-tree} for 
an illustration of the tree building process.
In the figure, the universe for each node is drawn next to the parent edge
to the node.

\begin{figure}[t]
\centerline{\includegraphics[width=4in]{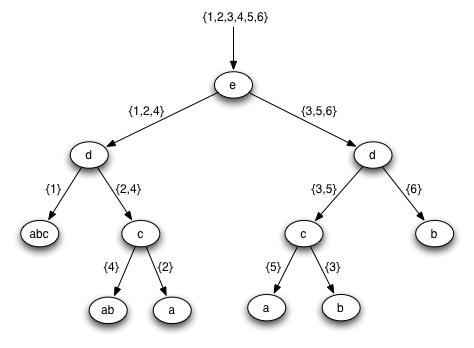}}
\label{fig:example-join-tree}
\caption{A query plan tree for the example OJ instance}
\end{figure}

The query plan tree is built recursively as follows.
We first arbitrarily order the hyperedges. In the example shown in Figure
\ref{fig:example-join-tree},
we have built a tree with the order $e,d,c,b,a$.
The root node has universe $V$. We visit these edges one by one
in that order.

If every remaining hyperedges contains 
the universe $V$ then label the node with all remaining hyperedges and stop.
In this case the node is a leaf node.
Otherwise, consider the next hyperedge in the visiting order
above (it is $e$ as we are in the beginning).
Label the root with $e$, and
create two children of the root $e$.
The left child will have universe $V-e$, and the right child has
$e$ as its universe. 
Now, we recursively build the left tree starting from the next 
hyperedge (i.e. $d$) in the ordering, 
but only restricting to the smaller universe
$\{1,2,4\}$. 
Similarly, we build the right tree starting from the next hyperedge ($d$) in the
ordering, but only restricting to the smaller universe $\{3,5,6\}$.

Let us explain one more level of the tree building process to make things clear.
Consider the left tree of the root node $e$.
The universe is $\{1,2,4\}$. The root node will be the next hyperedge $d$
in the ordering. But we really work on the restriction of
$d$ in the universe $\{1,2,4\}$, which is $d'=d\cap\{1,2,4\}=\{2,4\}$.
Then, we create two children. The left child has universe
$\{1,2,4\}-d' = \{1\}$. The right child has universe $d'=\{2,4\}$.
For the left child, the universe has size $1$ and all three remaining
hyperedges $a$, $b$, and $c$ contain $1$, hence we label the left
child with $abc$.

By visiting all leaf nodes from left to right and print 
the attributes in their universes,
we obtain something called {\em the total order} of all attributes in $V$. 
In the figure, the total order is $1,4,2,5,3,6$.
(In the general case, the total order is slightly more complicated than
in this example. See Procedure \ref{algo:total-order}.)

Finally, based on the total order $1,4,2,5,3,6$ just obtained,
we build search trees for all relations respecting this ordering.
For relation $R_a$, the top level of the tree is indexed over attribute
$1$, the next two levels are $4$ and $2$, and the last level
is indexed over attribute $5$.
For $R_b$, the order is $1, 4, 3, 6$.
For $R_c$, the order is $1, 2, 3$.
For $R_d$, the order is $4, 2, 6$.
For $R_e$, the order is $5, 3, 6$.
It will be clear later that the attribute orders in the search trees 
have a decisive effect on the overall running time.
This is also the step that is responsible for the 
term $O(n^2 \sum_e N_e)$ in the overall running time.

{\bf Step 1.} 
(This step corresponds to the left most node of the query plan tree.) 
Compute the join 
\begin{equation}
 T_1 = \pi_{\{1\}}(R_a) \Join \pi_{\{1\}}(R_b) \Join \pi_{\{1\}}(R_c) 
\label{eqn:T1-join}
\end{equation}
as follows. 
This is the join over attributes {\em not} in $d$ and $e$.
If $|\pi_{\{1\}}(R_a)|$ is the smallest among 
$|\pi_{\{1\}}(R_a)|$, $|\pi_{\{1\}}(R_b)|$, and
$|\pi_{\{1\}}(R_c)|$, then for each attribute 
$t_1 \in \pi_{\{1\}}(R_a)$, we search the first levels of the search trees
for $R_b$ and $R_c$ to see if $t_1$ is in both
$\pi_{\{1\}}(R_b)$ and $\pi_{\{1\}}(R_b)$.
Similarly, if $|\pi_{\{1\}}(R_b)|$ or 
$|\pi_{\{1\}}(R_c)|$ is the smallest then 
for each $t_1 \in \pi_{\{1\}}(R_b)$ (or in $\pi_{\{1\}}(R_c)$) we search 
for attribute $t_1$ in the other two search trees.
As attribute $1$ is in the first level of all three search trees,
the join \eqref{eqn:T1-join} can be computed  in time
\[ O(|T_1|) = O\left(\min\left\{|\pi_{\{1\}}(R_a)|,
|\pi_{\{1\}}(R_b)|,
|\pi_{\{1\}}(R_c)|\right\}\right).
\]
Note that 
\[ |T_1| \leq \min\left\{|\pi_{\{1\}}(R_a)|,
|\pi_{\{1\}}(R_b)|,
|\pi_{\{1\}}(R_c)|\right\}
\leq 
|\pi_{\{1\}}(R_a)|^{x_a}
|\pi_{\{1\}}(R_b)|^{x_b}
|\pi_{\{1\}}(R_c)|^{x_c}
\leq
N_a^{x_a}N_b^{x_b}N_c^{x_c}
\]
because $x_a+x_b+x_c\geq 1$.
In particular, step 1 can be performed within the run-time budget.

{\bf Step 2.}
(This step corresponds to the node labeled $d$ on the left branch
of query plan tree.)
Compute the join 
\[ T_{\{1,2,4\}} = 
\pi_{\{1, 2,4\}}(R_a) \Join
\pi_{\{1,4\}}(R_b) \Join
\pi_{\{1,2\}}(R_c) \Join
\pi_{\{2,4\}}(R_d)
\]
This is a join over all attributes {\em not} in $e$.

Since we have already computed the join $T_1$ over attribute $1$
of $R_a$, $R_b$, and $R_c$, the relation
$T_{\{1,2,4\}}$ can be computed by computing
for every $t_1\in T_1$ the $t_1$-section of $T_{\{1,2,4\}}$
\[ T_{\{1,2,4\}}[t_1] = 
\underbrace{\pi_{\{2,4\}}(R_a[t_1])}_{A[t_1]} \Join
\underbrace{\pi_{\{4\}}(R_b[t_1])}_{B[t_1]} \Join
\underbrace{\pi_{\{2\}}(R_c[t_1])}_{C[t_1]} \Join
\underbrace{\pi_{\{2,4\}}(R_d)}_{D}
\]
and then $T_{\{1,2,4\}}$ is simply the union of all the $t_1$-sections
$T_{\{1,2,4\}}[t_1]$.
The notations $A[t_1]$, $B[t_1]$, $C[t_1]$, and $D$ are defined for
the sake of brevity.

Fix $t_1 \in T_1$, we next describe how $T_{\{1,2,4\}}[t_1]$ is computed.
If $x_d\geq 1$ then we go directly to case 2b below.
When $x_d<1$, define
\begin{eqnarray*}
x'_a&=&\frac{x_a}{1-x_d}\\
x'_b&=&\frac{x_b}{1-x_d}\\
x'_c&=&\frac{x_c}{1-x_d}.
\end{eqnarray*}
Consider the hypergraph graph $H'$ which is the graph $H$ restricted
to the vertices $2,4$ and edges $a,b,c$.
In particular, $H'$ has vertex set $\{2,4\}$ and edges $\{2,4\}, \{4\}, \{2\}$.
It is clear that $x'_a$,$x'_b$, and $x'_c$ form a fractional cover solution
of $H'$ because $\mv x$ was a fractional cover solution for $H$.
Thus, $H'$, $\mv x'=(x'_a,x'_b,x'_c)$, and $A[t_1]$, $B[t_1]$,
and $C[t_1]$ form an instance of the OJ problem.
We will recursively solve this instance if a condition is satisfied.

{\bf Case 2a.}
Suppose
\[ |A[t_1]|^{x'_a} |B[t_1]|^{x'_b} |C[t_1]|^{x'_c} \leq |D| \]
then we (recursively) compute the join
$A[t_1]\Join B[t_1]\Join C[t_1]$. 
By induction on the instance $H'$, this  join can be computed in time
\[ O\left(|A[t_1]|^{x'_a} |B[t_1]|^{x'_b} |C[t_1]|^{x'_c} \right). \]
(This induction hypothesis corresponds to the node labeled $c$
on the left branch of the query plan tree.)
Here, we crucially use the fact that the search trees for $R_a$,
$R_b$, $R_c$ have been built so that the subtrees under the branch
$t_1$ are precisely the search trees for relations $A[t_1], B[t_1], C[t_1]$
and thus are readily available to compute this join.
Now, to get $T_{\{1,2,4\}}[t_1]$ we simply check whether every 
tuple in $A[t_1]\Join B[t_1]\Join C[t_1]$ belongs to $D$.

{\bf Case 2b.}
Suppose either $x_d\geq 1$ or
\[ |D| \leq |A[t_1]|^{x'_a} |B[t_1]|^{x'_b} |C[t_1]|^{x'_c} \]
then for every tuple $(t_2,t_4)$ in $D$ we check whether 
$(t_2,t_4)\in A[t_1]$, $t_4\in B[t_1]$, {\em and}
$t_2\in C[t_1]$. The overall running time is $O(|D|)$.

Thus, for a fixed value $t_1$, the relation $T_{\{1,2,4\}}[t_1]$
can be computed in time
\[ O\left(
   \min\{|A[t_1]|^{x'_a} |B[t_1]|^{x'_b} |C[t_1]|^{x'_c}, |D|\}
\right).
\]
In fact, it is not hard to see that
\[ |T_{\{1,2,4\}}[t_1]| \leq 
\min\{|A[t_1]|^{x'_a} |B[t_1]|^{x'_b} |C[t_1]|^{x'_c}, |D|\}. \]
This observation will eventually imply the inequality \eqref{eqn:agm08-bound} 
(for this instance), and in the general case leads to the constructive
proof of the inequality \eqref{eqn:agm08-bound}.

Next, note that
\begin{eqnarray*}
\min\left\{|A[t_1]|^{x'_a} |B[t_1]|^{x'_b} |C[t_1]|^{x'_c}, |D| \right\}
&\leq&
\left(|A[t_1]|^{x'_a} |B[t_1]|^{x'_b} |C[t_1]|^{x'_c}\right)^{1-x_d} |D|^{x_d}\\
&=& 
|A[t_1]|^{x_a} |B[t_1]|^{x_b} |C[t_1]|^{x_c} |D|^{x_d}.
\end{eqnarray*}
If $x_d\geq 1$ then the run-time is also in the order of
$O(|A[t_1]|^{x_a} |B[t_1]|^{x_b} |C[t_1]|^{x_c} |D|^{x_d})$.
Consequently, the total running time for step 2 is in the order of
\begin{eqnarray*}
\sum_{t_1\in T_1} |A[t_1]|^{x_a} |B[t_1]|^{x_b} |C[t_1]|^{x_c} |D|^{x_d}
&=&
|D|^{x_d} \sum_{t_1\in T_1} |A[t_1]|^{x_a} |B[t_1]|^{x_b} |C[t_1]|^{x_c}\\
&\leq&
|D|^{x_d} 
\left(\sum_{t_1\in T_1} |A[t_1]|\right)^{x_a} 
\left(\sum_{t_1\in T_1} |B[t_1]|\right)^{x_b} 
\left(\sum_{t_1\in T_1} |C[t_1]|\right)^{x_c} \\
&\leq&
|D|^{x_d} 
\cdot |\pi_{\{1,2,4\}}(R_a)|^{x_a}
\cdot |\pi_{\{1,4\}}(R_b)|^{x_b}
\cdot |\pi_{\{1,2\}}(R_c)|^{x_c}\\
&\leq& N_a^{x_a}N_b^{x_b}N_c^{x_c}N_d^{x_d}.
\end{eqnarray*}
The first inequality follows from generalized H\"older inequality because
$x_a+x_b+x_c\geq =1$ and $x_a,x_b,x_c \geq 0$.
The second inequality says that if we sum over the sizes of the $t_1$-sections,
we get at most the size of the relation.
In summary, step $2$ is still within the running time budget.

{\bf Step 3.} Compute the final join over all attributes
\[ T_{\{1,2,3,4,5,6\}} = 
R_a \Join R_b \Join R_c \Join R_d \Join R_e. \]
Since we have already computed the join $T_{\{1,2,4\}}$ over attributes
$1,2,4$
of $R_a$, $R_b$, $R_c$, and $R_d$, the relation
$T_{\{1,2,3,4,5,6\}}$ can be computed by computing
for every $(t_1,t_2,t_4) \in T_{\{1,2,4\}}$ the join
%\begin{multline*}
\[
 T_{\{1,2,3,4,5,6\}}[t_1,t_2,t_4] =
\underbrace{\pi_{\{5\}}(R_a[t_1,t_2,t_4])}_{A} \Join
\underbrace{\pi_{\{3,6\}}(R_b[t_1,t_4])}_{B} \Join
\underbrace{\pi_{\{3\}}(R_c[t_1,t_2])}_{C} \Join
\underbrace{\pi_{\{6\}}(R_d[t_2,t_4])}_{D} \Join
\underbrace{R_e}_{E},
\]
%\end{multline*}
and return the union of these joins over all 
tuples $(t_1,t_2,t_4) \in T_{\{1,2,4\}}$.
Again, the notations $A$, $B$, $C$, $D$, $E$ are introduced to
for the sake of brevity. Note, however, that they are different
from the $A$, $B$, $C$, $D$ from case 2.
This step illustrates the third ingredient of the algorithm's analysis.

Fix $(t_1,t_2,t_4) \in T_{\{1,2,4\}}$.
If $x_e\geq 1$ then we jump to case 3b; otherwise, define
\begin{eqnarray*}
x''_a&=& \frac{x_a}{1-x_e}\\
x''_b&=& \frac{x_b}{1-x_e}\\
x''_c&=& \frac{x_c}{1-x_e}\\
x''_d&=& \frac{x_d}{1-x_e}.
\end{eqnarray*}
Then define a hypergraph $H''$ on the attributes
$\{3,5,6\}$ and the restrictions of $a$, $b$, $c$, $d$
on these attributes. Clearly the vector $\mv x''$ is a fractional cover
for this instance. 

{\bf Case 3a}.
Suppose $x_e\geq 1$ or
\[ |A|^{x''_a} |B|^{x''_b} |C|^{x''_c} |D|^{x''_d} \leq |E|. \]
By applying the induction hypothesis on the $H''$ instance we can
compute the join $A \Join B \Join C \Join D$ in time
$O\left(|A|^{x''_a} |B|^{x''_b} |C|^{x''_c} |D|^{x''_d}\right)$.
(The induction hypothesis corresponds to the node labeled $d$ on
{\em right} branch of the query plan tree.)
Again, because the search trees for all relations have been built
in such a way that the search trees for $A$, $B$, $C$, $D$
are already present on $t_1,t_2,t_4$-branches of the trees
for $R_a,R_b,R_c$, and $R_d$, there is no extra time spent on indexing
for computing this join.
Then, for every tuple $\mv t_{\{3,5,6\}}$ in the join we check
(the search tree for) $E$ to see if the tuple belongs to $E$.

{\bf Case 3b}.
Suppose
\[ |E| \leq |A|^{x''_a} |B|^{x''_b} |C|^{x''_c} |D|^{x''_d}. \]
Then, for each tuple $\mv t_{\{3,5,6\}} =(t_3,t_5,t_6) \in E$ we check
to see whether $t_5 \in A, (t_3,t_6)\in B, t_3 \in C$, {\em and}
$t_6 \in D$.

Either way, for a fix tuple $(t_1,t_2,t_4) \in T_{\{1,2,4\}}$
the running time is
\[ \tilde O\left(
\min\left\{
|A|^{x''_a} |B|^{x''_b} |C|^{x''_c} |D|^{x''_d},
|E|\right\}
\right). \]
Now, we apply the same trick as in case 2:
\begin{eqnarray*}
\min\left\{
|A|^{x''_a} |B|^{x''_b} |C|^{x''_c} |D|^{x''_d},
|E|\right\}
&\leq&
\left(
|A|^{x''_a} |B|^{x''_b} |C|^{x''_c} |D|^{x''_d}
\right)^{1-x_e}|E|^{x_e}\\
&=&
|A|^{x_a}|B|^{x_b}|C|^{x_c}|D|^{x_d}|E|^{x_e}\\
&\leq&
|R_a[t_1,t_2,t_4]|^{x_a}
|R_b[t_1,t_4]|^{x_b}
|R_c[t_1,t_2]|^{x_c}
|R_d[t_2,t_4]|^{x_d}
|R_e|^{x_e}.
\end{eqnarray*}
Hence, the total running time for step $3$ is in the order of
\begin{eqnarray*}
&&\sum_{(t_1,t_2,t_4)\in T_{\{1,2,4\}}}
|R_a[t_1,t_2,t_4]|^{x_a}
|R_b[t_1,t_4]|^{x_b}
|R_c[t_1,t_2]|^{x_c}
|R_d[t_2,t_4]|^{x_d}
|R_e|^{x_e}\\
&=&
|R_e|^{x_e}
\sum_{t_1}
\sum_{t_2}
\sum_{t_4}
|R_a[t_1,t_2,t_4]|^{x_a}
|R_b[t_1,t_4]|^{x_b}
|R_c[t_1,t_2]|^{x_c}
|R_d[t_2,t_4]|^{x_d}
\end{eqnarray*}
where the first sum is over $t_1\in \pi_{\{1\}}(T_{\{1,2,4\}})$,
the second sum is over $t_2$ such that
$(t_1,t_2)\in \pi_{\{1,2\}}(T_{\{1,2,4\}})$,
and the third sum is over $t_4$
such that $(t_1,t_2,t_4)\in T_{\{1,2,4\}}$.
We apply H\"older inequality several times to ``unroll" the sums.
Note that we crucially use the fact that $\mv x$ is a fractional
cover solution ($\mv M\mv x \geq \mv 1$) to apply H\"older's inequality.
\begin{eqnarray*}
&&|R_e|^{x_e} \sum_{t_1} \sum_{t_2} \sum_{t_4}
|R_a[t_1,t_2,t_4]|^{x_a}
|R_b[t_1,t_4]|^{x_b}
|R_c[t_1,t_2]|^{x_c}
|R_d[t_2,t_4]|^{x_d}\\
&=&|R_e|^{x_e} \sum_{t_1} \sum_{t_2} 
|R_c[t_1,t_2]|^{x_c}
\sum_{t_4}
|R_a[t_1,t_2,t_4]|^{x_a}
|R_b[t_1,t_4]|^{x_b}
|R_d[t_2,t_4]|^{x_d}\\
&\leq&|R_e|^{x_e} \sum_{t_1} \sum_{t_2} 
|R_c[t_1,t_2]|^{x_c}
\left(\sum_{t_4}|R_a[t_1,t_2,t_4]|\right)^{x_a}
\left(\sum_{t_4}|R_b[t_1,t_4]|\right)^{x_b}
\left(\sum_{t_4}|R_d[t_2,t_4]|\right)^{x_d}\\
&\leq&|R_e|^{x_e} \sum_{t_1} \sum_{t_2} 
|R_c[t_1,t_2]|^{x_c}
|R_a[t_1,t_2]|^{x_a}
|R_b[t_1]|^{x_b}
|R_d[t_2]|^{x_d}\\
&=&|R_e|^{x_e} \sum_{t_1} 
|R_b[t_1]|^{x_b}
\sum_{t_2} 
|R_c[t_1,t_2]|^{x_c}
|R_a[t_1,t_2]|^{x_a}
|R_d[t_2]|^{x_d}\\
&\leq&|R_e|^{x_e} \sum_{t_1} 
|R_b[t_1]|^{x_b}
\left(\sum_{t_2} |R_c[t_1,t_2]|\right)^{x_c}
\left(\sum_{t_2} |R_a[t_1,t_2]|\right)^{x_a}
\left(\sum_{t_2} |R_d[t_2]|\right)^{x_d}\\
&\leq&|R_e|^{x_e} \sum_{t_1} 
|R_b[t_1]|^{x_b}
|R_c[t_1]|^{x_c}
|R_a[t_1]|^{x_a}
|R_d|^{x_d}\\
&=&|R_e|^{x_e} |R_d|^{x_d}
\sum_{t_1} 
|R_b[t_1]|^{x_b}
|R_c[t_1]|^{x_c}
|R_a[t_1]|^{x_a}\\
&\leq&|R_e|^{x_e} |R_d|^{x_d}
\left(\sum_{t_1} |R_b[t_1]|\right)^{x_b}
\left(\sum_{t_1} |R_c[t_1]|\right)^{x_c}
\left(\sum_{t_1} |R_a[t_1]|\right)^{x_a}\\
&\leq&|R_e|^{x_e} |R_d|^{x_d} |R_b|^{x_b} |R_c|^{x_c} |R_a|^{x_a}.
\end{eqnarray*}

%%%%%%%%%%%%%%%%%%%%%%%%%%%%%%%%%
\subsection{Rigorous description and analysis of the algorithm}
\label{app:analysis-of-the-algo}

Algorithm \ref{algo:the-algo} computes the join of $m$ given relations.
Beside the relations, 
the input to the algorithm consists of the hypergraph $H=(V,E)$
with $|V|=n$, $|E|=m$, and a point $\mv x=(x_e)_{e\in E}$ in the 
fractional cover polytope
\begin{eqnarray*}
\sum_{v \in e} x_e &\geq& 1, \ \text{for any $v \in V$}\\
x_e &\geq &0, \text{for any $e \in E$}.
\end{eqnarray*}

\be
\item We first build a {\em query plan tree}. The query plan tree serves
two purposes: (a) it captures the structure of the recursions 
in the algorithm where each node of the tree roughly corresponds to a 
sub-problem,
(b) it gives a total order of all the attributes based on which we
can pre-build search trees for all the relations in the next step.
\item From the query plan tree, we construct a total order of all
attributes in $V$. Then, for each relation $R_e$ we construct
a search tree for $R_e$ based on the relative order of $R_e$'s attributes
imposed by the total order.
\item We traverse the query plan tree and solve some of the sub-problems
and combine the solutions to form the final answer.
It is important to note that {\bf not all} sub-problems corresponding to
nodes in the query plan trees will be solved. We decide whether to solve
a sub-problem based on a ``size check." Intuitively, if the sub-problem
is estimated to have a large output size we will try to not solve it.
\ee

\renewcommand{\algorithmicrequire}{\textbf{Input:}}
\renewcommand{\algorithmicensure}{\textbf{Output:}}
\begin{algorithm}[h]
\caption{Computing the join $\Join_{e\in E}R_e$}
\begin{algorithmic}[1]
\REQUIRE Hypergraph $H=(V,E)$, $|V|=n$, $|E|=m$
\REQUIRE Fractional cover solution $\mv x = (x_e)_{e\in E}$
\REQUIRE Relations $R_e, e\in E$
\STATE Compute the query plan tree $\T$, let $u$ be $\T$'s root node
\STATE Compute a total order of attributes 
\STATE Compute a collection of hash indices for all relations
\RETURN {\sc Recursive-Join}$(u, \mv x, \nil)$
\end{algorithmic}
\label{algo:the-algo}
\end{algorithm}

We repeat some of the terminologies already defined so that
this section is relatively self-contained.
For each tuple $\mv t$ on attribute set $A$, we will write
$\mv t$ as $\mv t_A$ to signify the fact that the tuple is on the attribute
set $A$: $\mv t_A = (t_a)_{a\in A}$.
Consider any relation $R$ with attribute set $S$.
Let $A\subset S$ and $\mv t_A$ be a fixed tuple.
Then $R[\mv t_A]$ denotes the ``$\mv t_A$-section" of $R$,
which is a relation on $S-A$
consisting of {\em all} tuples $\mv t_{S-A}$ such that
$(\mv t_A, \mv t_{S-A}) \in R$.
In particular, $R[\mv t_\emptyset] = R$.
Let $\pi_A(R)$ denote the projection
of $R$ down to attributes in $A$.

\subsubsection{Step (1): Building the query plan tree}
\label{subsubsec:build-QP-tree}

\begin{algorithm}
\begin{algorithmic}[1]
\STATE Fix an arbitrary order $e_1, e_2, \dots, e_m$ of all the
       hyperedges in $E$.
\STATE $\T \la $ {\sc build-tree}$(V, m)$
\end{algorithmic}

{\sc build-tree}$(U, k)$
\begin{algorithmic}[1]
\IF {$e_i \cap U = \emptyset, \forall i\in[k]$}
 \RETURN $\nil$
\ENDIF
\STATE Create a node $u$ with $\lbl(u) \la k$ and $\univ(u) = U$
\IF {$k>1$ and $\exists i \in [k]$ such that $U\not\subseteq e_i$}
   \STATE $\lc(u) \la $ {\sc build-tree}$(U\setminus e_k, k-1)$
   \STATE $\rc(u) \la $ {\sc build-tree}$(U\cap e_k, k-1)$
\ENDIF
\RETURN $u$
\end{algorithmic}
\caption{Constructing the query plan tree $\T$}
\label{algo:QPT}
\end{algorithm}

Very roughly, each node $x$ and the sub-tree below it forms the
``skeleton'' of a sub-problem. There will be many sub-problems that 
correspond to each skeleton. 
The value $\lbl(x)$ points to an ``anchor'' relation for the sub-problem
and $\univ(x)$ is the set of attributes that the sub-problem is joining
on. The anchor relation divides the universe $\univ(x)$ into two
parts to further sub-divide the recursion structure.
Fix an arbitrary order $e_1, e_2, \dots, e_m$ of 
{\em all} the hyperedges in $E$.
For notational convenience, for any $k\in [m]$ define
$E_k = \{e_1,\dots,e_k\}$.
The query plan tree $\T$ is a binary tree with the following associated
information:
\bi
 \item {\em Labels}. Each node of $\T$ has a ``label" $\lbl(u)$  which is an
integer $k\in [m]$.
 \item {\em Universes}. Each node $u$ of $\T$ has a ``universe"
$\univ(u)$ which is a non-empty subset of attributes: $\univ(u)\subseteq V$.
 \item Each internal node $u$ of $\T$ has a left child $\lc(u)$
or a right child $\rc(u)$ or both. If a child does not exist then the
child pointer points to $\nil$.
\ei
Algorithm \ref{algo:QPT} builds the query plan tree $\T$.
Very roughly, each node $x$ and the sub-tree below it forms the
``skeleton'' of a sub-problem. There will be many sub-problems that 
correspond to each skeleton. 
The value $\lbl(x)$ points to an ``anchor'' relation for the sub-problem
and $\univ(x)$ is the set of attributes that the sub-problem is joining
on. The anchor relation divides the universe $\univ(x)$ into two
parts to further sub-divide the recursion structure.

Note that line 5 and 6 will not be executed if
$U\subseteq e_i, \forall i\in [k]$,
in which case $u$ is a leaf node.
When $u$ is not a leaf node, if $U\subseteq e_k$ then $u$ will not
have a left child ($\lc(u)=\nil$).
The running time for this pre-processing step 
is $O(m^2n)$.

Figure \ref{fig:sample-QPT} shows a query plan tree
produced by Algorithm \ref{algo:QPT} on an example query.

\begin{figure}
\begin{center}
\begin{tabular}{cc}
\parbox[b]{1.75in}{
\begin{align*}
q & = & R_1(A_1,A_2, A_4, A_5)\\
& \Join & R_2(A_1,A_3,A_4,A_6)\\ 
& \Join &R_3(A_1,A_2,A_3)\\ 
& \Join & R_4(A_2,A_4,A_6)\\
& \Join & R_5(A_3,A_5,A_6)
\end{align*}}&
\includegraphics[width=2.0in]{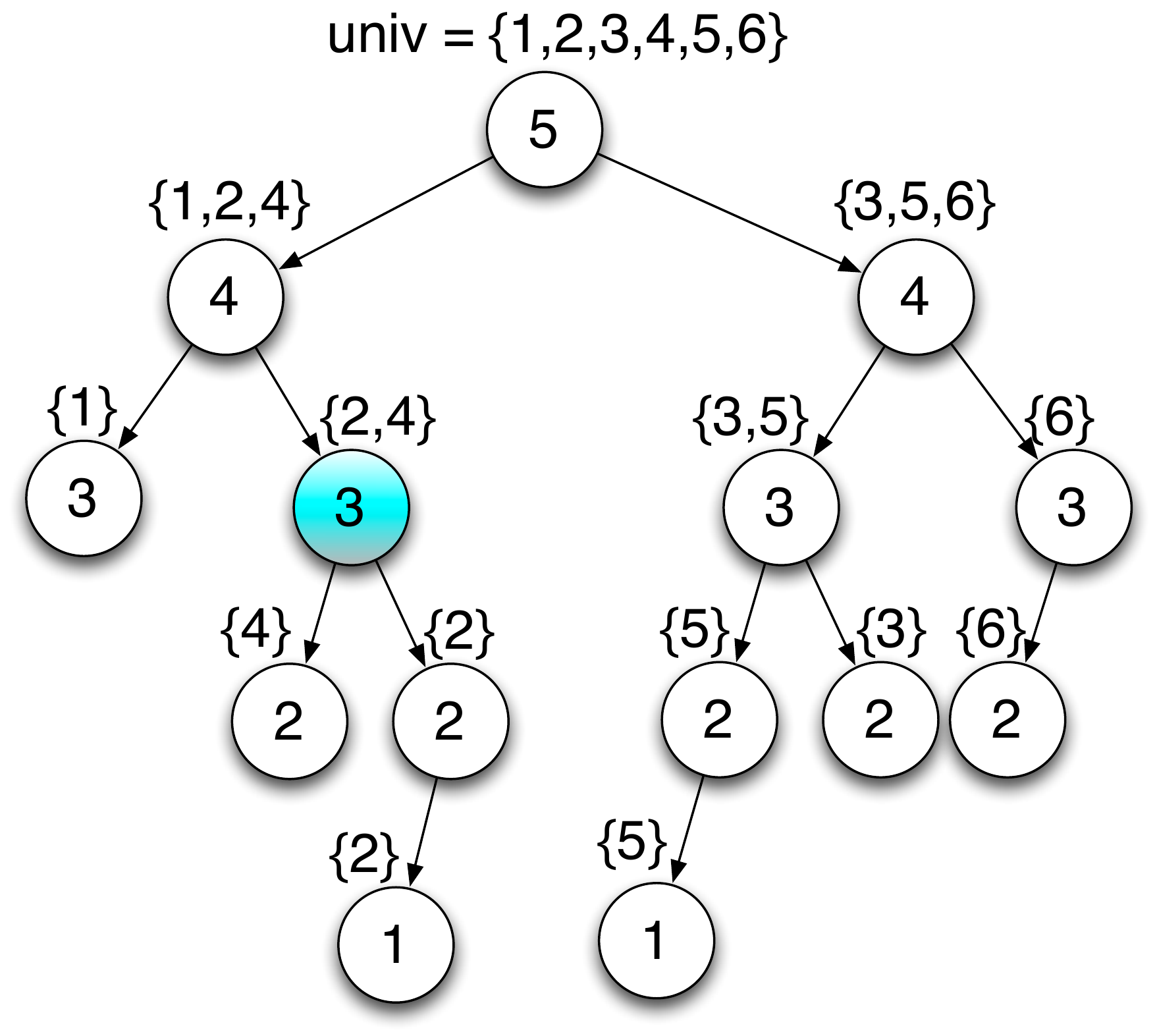} 
\end{tabular}
\end{center}
\caption{(a) A query $q$ and (b) a sample QP tree for $q$.}
\label{fig:sample-QPT}
\end{figure}

\subsubsection{Step (2): Computing a total order of the attributes and 
building the search trees}
\label{subsubsec:subsec:total-order}

From the query plan tree $\T$, Procedure \ref{algo:total-order}
constructs a total order of all the attributes in $V$.
We will call this ordering {\em the total order} of $V$.
It is not hard to see that the total order satisfies the following 
proposition.
\bprop
The total order computed in Algorithm 
\ref{algo:total-order} satisfies the following
properties
\bi
 \item[(TO1)] For every node $u$ in the query plan tree $\T$, all members
              of $\univ(u)$ are consecutive in the total order
 \item[(TO2)] For every internal node $u$, if $\lbl(u) = k$ and
$S$ is the set of all attributes preceding $\univ(u)$ in the total
order, then $S\cup\univ(\lc(u)) = S\cup(U\setminus e_k)$ is precisely
the set of all attributes preceding $\univ(\rc(u)) = e_k\cap U$ in the 
total order.
\ei
\label{prop:TO-properties}
\eprop

\begin{algorithm}
\caption{Computing a total order of attributes in $V$}
\label{algo:total-order}
\begin{algorithmic}[1]
\STATE Let $\T$ be the query plan tree with root node $u$, where $\univ(u)=V$
\STATE {\sc print-attribs}$(u)$
\end{algorithmic}
{\sc print-attribs}$(u)$
\begin{algorithmic}[1]
\IF {$u$ is a leaf node of $\T$}
 \PRINT all attributes in $\univ(u)$ in an arbitrary order
\ELSIF {$\lc(u) = \nil$}
 \STATE {\sc print-attribs}$(\rc(u))$
\ELSIF {$\rc(u) = \nil$}
 \STATE {\sc print-attribs}$(\lc(u))$
 \PRINT all attributes in $\univ(u) \setminus \univ(\lc(u))$ in an arbitrary 
        order
\ELSE
 \STATE {\sc print-attribs}$(\lc(u))$
 \STATE {\sc print-attribs}$(\rc(u))$
\ENDIF
\end{algorithmic}
\end{algorithm}

For each relation $R_e$, $e\in E$, we order all attributes in $R_e$
such that the internal order of attributes in $R_e$ is consistent
with the total order of all attributes computed by Algorithm 
\ref{algo:total-order}.
More concretely, suppose $R_e$ has $k$ attributes ordered
$a_1,\dots,a_k$, then $a_i$ must come before $a_{i+1}$ in the total order,
for all $1\leq i\leq k-1$.
Then, we build a search tree (or any indexing data structure) for
every relation $R_e$ using the internal order of $R_e$'s attributes:
$a_1$ indexes level $1$ of the tree, $a_2$ indexes the next level, $\dots$,
$a_k$ indexes the last level of the tree.
%To again emphasize, $k$ is the number
%of attributes of $R_e$ and $a_1,\dots,a_k$ is the internal order of
%$R_e$'s attributes. 
The search tree for relation $R_e$
is constructed to satisfy the following three properties.
Let $i$ and $j$ be arbitrary integers such that $1\leq i\leq j\leq k$.
Let $\mv t_{\{a_1,\dots,a_i\}} = (t_{a_1},\dots,t_{a_i})$ 
be an arbitrary tuple on the attributes $\{a_1,\dots,a_i\}$.
\bi
 \item[(ST1)] We can decide whether 
$\mv t_{\{a_1,\dots,a_i\}} \in \pi_{\{a_1,\dots,a_i\}}(R_e)$
in $O(i)$-time (by ``stepping down" the tree along the $t_{a_1},\dots,t_{a_i}$
path).
 \item[(ST2)] We can query the size
$|\pi_{\{a_{i+1},\dots,a_j\}}(R_e[\mv t_{\{a_1,\dots,a_i\}}])|$
in $O(i)$ time.
 \item[(ST3)] We can list all tuples in the set
$\pi_{\{a_{i+1},\dots,a_j\}}(R_e[\mv t_{\{a_1,\dots,a_i\}}])$
in time linear in the output size if the output is not empty.
\ei

The total running time for building all the search trees is
$O(n^2\sum_e N_e)$.

\subsubsection{Step (3): Computing the join}
\label{subsubsec:subsec:computing-the-join}

\floatname{algorithm}{Procedure}

\begin{algorithm}[t]
\caption{{\sc Recursive-Join}$(u, \mv y, \mv t_S)$}
\label{proc:rec-join}
\begin{algorithmic}[1]
\STATE Let $U = \univ(u)$, $k = \lbl(u)$
\STATE $\ret \la \emptyset$ \COMMENT{$\ret$ is the returned tuple set}
\IF[note that $U\subseteq e_i, \forall i \leq k$]{$u$ is a leaf node of $\T$}
  \STATE $j \la \argmin_{i\in [k]} \left\{ |\pi_U(R_{e_i}[\mv t_{S\cap e_i}])| \right\}$
  \STATE \COMMENT{By convention, $R_e[\nil] = R_e$ and $R_e[\mv t_\emptyset] = R_e$}
  \FOR {each tuple $\mv t_U \in \pi_U(R_{e_j}[\mv t_{S\cap e_j}])$}
    \IF{$\mv t_U \in \pi_U(R_{e_i}[\mv t_{S\cap e_i}]), \text{ for all } i \in [k]\setminus\{j\}$}
      \STATE $\ret \la \ret \cup \{(\mv t_S, \mv t_U)\}$
    \ENDIF
  \ENDFOR
  \RETURN $\ret$
\ENDIF
%\STATE // $u$ is not a leaf node of $\T$
\IF[$u$ is not a leaf node of $\T$]{$\lc(u) = \nil$}
    \STATE $L \la \{\mv t_S\}$ 
    \STATE \COMMENT{note that $L \neq \emptyset$ and
$\mv t_S$ could be $\nil$ (when $S=\emptyset$)}
  \ELSE
    \STATE $L \la $ {\sc Recursive-Join}$(\lc(u), (y_1,\dots,y_{k-1}), \mv t_S)$
  \ENDIF
  \STATE $W \la U \setminus e_k$, $W^- \la e_k \cap U$
  \IF{$W^- = \emptyset$}
    \RETURN $L$
  \ENDIF
    \FOR{each tuple $\mv t_{S\cup W} =(\mv t_S, \mv t_W) \in L$}
  \IF{$y_{e_k} \geq 1$}
    \STATE {\bf go to} line 27
  \ENDIF
    \IF {$\displaystyle{\left(\prod_{i=1}^{k-1}
      |\pi_{e_i \cap W^-}(R_{e_i}[\mv t_{(S\cup W)\cap e_i}])|^{\frac{y_{e_i}}{1-y_{e_k}}} < |\pi_{W^-}(R_{e_k}[\mv t_{S\cap e_k}])|
                    \right)}$}
         \STATE $Z \la $ {\sc Recursive-Join}$\left(\rc(u), \left(\frac{y_{e_i}}{1-y_{e_k}}\right)_{i=1}^{k-1}, \mv t_{S\cup W}\right)$
         \FOR {each tuple $(\mv t_S, \mv t_W, \mv t_{W^-}) \in Z$}
           \IF{$\mv t_{W^-} \in \pi_{W^-}(R_{e_k}[\mv t_{S\cap e_k}])$}
              \STATE $\ret \la \ret \cup \{(\mv t_S, \mv t_W, 
                      \mv t_{W^-})\}$
           \ENDIF
         \ENDFOR
       \ELSE
         \FOR {each tuple $\mv t_{W^-} \in \pi_{W^-}(R_{e_k}[\mv t_{S\cap e_k}])$}
           \IF{$\mv t_{e_i\cap W^-} \in \pi_{e_i \cap W^-}(R_{e_i}[\mv t_{(S\cup W) \cap e_i}])$ for all  $e_i$ such that $i<k$ and $e_i \cap W^- \neq \emptyset$}
              \STATE $\ret \la \ret \cup \{(\mv t_S, \mv t_W, 
                      \mv t_{W^-})\}$
           \ENDIF
         \ENDFOR
    \ENDIF
  \ENDFOR
\RETURN $\ret$
\end{algorithmic}
\end{algorithm}

At the heart of Algorithm \ref{algo:the-algo} is a recursive
procedure called {\sc Recursive-Join} (Procedure \ref{proc:rec-join})
which takes three arguments: 
\bi
 \item a node $u$ from the query plan tree $\T$ whose label is $k$ 
       for some $k\in [m]$.
 \item a fractional cover solution $\mv y_{E_k} = (y_{e_1},\dots,y_{e_k})$
       of the hypergraph $(\univ(u), E_k)$.
       Here, we only take the restrictions of 
       hyperedges of $E_k$ onto the universe $\univ(u)$. Specifically,

\begin{eqnarray*}
\sum_{e \in E_k: i \in e} y_e &\geq& 1, \ \text{for any $i \in \univ(u)$}\\
y_e &\geq &0, \text{for any $e \in E_k$}
\end{eqnarray*}

 \item a tuple $\mv t_{S} = (t_i)_{i\in S}$ where  $S$ is the set of {\em all}
       attributes in $V$ which precede $\univ(u)$ in the total order.
       (Due to property (TO1) of Proposition \ref{prop:TO-properties}, 
        the set $S$ is well-defined.)
       If there is no attribute preceding $\univ(u)$ then this argument
       is $\nil$.
       In particular, the argument is $\nil$ if $u$ is a node along the left
       path of QP-tree $\T$ from the root down to the left-most leaf.
\ei

Throughout this section, we denote the final output by $J$ which
is defined to be $J = \Join_{e \in E} R_e$.
The goal of {\sc Recursive-Join} is to compute a superset of the
relation $\{\mv t_S\} \times \pi_{\univ(u)}(J[\mv t_S])$, 
i.e., a superset of the
output tuples that start with $\mv t_S$ on the attributes $S \cup
\univ(u)$. This intermediate output is analogous to the set $C$ in
Algorithm~\ref{algo:LW} for LW instances.  A second similarity
Algorithm~\ref{algo:LW} is that our algorithm makes a choice per tuple
based on the output's estimated size.

Theorem \ref{thm:main} is a special case of the following lemma
where we set $u$ to be the root of the QP-tree $\T$,
$\mv y = \mv x$, and $S=\emptyset$ ($\mv t_S=\nil$).
Finally, we observe that we need only 
$O(n^2)$ number of hash indices per input relation, 
which completes the proof.

\blmm \label{lmm:main}
Consider a call {\sc Recursive-Join}$(u, \mv y, \mv t_S)$ to
Procedure \ref{proc:rec-join}.
Let $k=\lbl(u)$ and $U=\univ(u)$. Then,
\bi
 \item[(a)] The procedure outputs a relation $\ret$ on attributes $S\cup U$
with at most the following number of tuples
\[ B(u,\mv y, \mv t_S) := \prod_{i=1}^k |\pi_{U \cap e_i}(R_{e_i}[\mv t_{S\cap e_i}])|^{y_i}. \]
(For the sake of presentation, we agree on the convention that
when $U\cap e_i=\emptyset$ we set
$|\pi_{U \cap e_i}(R_{e_i}[\mv t_{S\cap e_i}])|=1$ so that the factor
does not contribute anything to the product.)
 \item[(b)] Furthermore, the procedure runs in time
$O(mn \cdot B(u,\mv y,\mv t_S))$.
%\[ O\left( mn \cdot \prod_{i=1}^k |\pi_{U \cap e_i}(R_{e_i}[\mv t_{S\cap e_i}])|^{y_i} \right) \]
\ei
\label{app:lmm:main}
\elmm
\bp
We prove both $(a)$ and $(b)$ by induction on the height of the subtree
of $\T$ rooted at $u$. The proof will also explain in ``plain'' English
the algorithm presented in Procedure \ref{proc:rec-join}.
The procedure tries to compute the join
\[ \{\mv t_S\} \times 
   \left( \Join_{i=1}^k \pi_{U\cap e_i}(R_{e_i}[\mv t_{S\cap e_i}]) \right).
\]
Roughly speaking, it is computing the join of all the sections
$R_{e_i}[\mv t_{S\cap e_i}]$ inside the universe $U$.

{\bf Base case}. The height of the sub-tree rooted at $u$ is zero,
i.e. $u$ is a leaf node. In this case, lines 4-9 of Procedure
\ref{proc:rec-join} is executed.
When $u$ is a leaf node, $U \subseteq e_i, \forall i \in [k]$
and thus $U = U\cap e_i, \forall i\in [k]$.
Since $\mv y$ is a fractional cover solution to the 
hypergraph instance $(U, E_k)$, we know $\sum_{i=1}^ky_i \geq 1$.
The join has size at most
\[ \min_{i\in [k]} \left\{ |\pi_U(R_{e_i}[\mv t_{S\cap e_i}])| \right\}
   \leq \prod_{i=1}^k |\pi_{U\cap e_i}(R_{e_i}[\mv t_{S\cap e_i}])|^{y_i}
   = B(u, \mv y, \mv t_S).
\]
To compute the join, we go over each tuple of the smallest-sized 
section-projection
$\pi_U(R_{e_j}[\mv t_{S\cap e_j}])$ and check to see if the tuple
belongs to all the other section-projections.
There are at most $k$ other sections, and due to property
(ST1) each check takes time $O(n)$.
Hence, the total time spent is
$O(mn\cdot B(u,\mv y, \mv t_S))$.

{\bf Inductive step}. Now, consider the case when $u$ is not a leaf node.

If $\lc(u)=\nil$ which means $U\subseteq e_k$ then there is no
attribute in $U\setminus e_k$ to join over (line 11).
Otherwise, we first recursively call the ``left sub-problem''
(Line 14) and store the result in $L$.
Note that the attribute set of $L$ is $S\cup W = S\cup (U\setminus e_k)$.
We need to verify that the arguments we gave to this recursive call
are legitimate. It should be obvious that $k-1=\lbl(\lc(u))$.
Since $\mv y=(y_1,\dots,y_k)$ is a fractional cover of
the $(U,E_k)$ hypergraph, $\mv y'=(y_1,\dots,y_{k-1})$ is a 
fractional cover of the $(U\setminus e_k, E_{k-1})$ hypergraph.
And, $\univ(\lc(u)) = U\setminus e_k$.
Finally, due to property (TO2) $S$ is precisely the set of attributes
preceding $\univ(\lc(u))$ in the total order.
From the induction hypothesis, the recursive call on line 14 takes time
\[ 
O(mn\cdot B(\lc(u), \mv y', \mv t_S)) = O\left(mn \prod_{i=1}^{k-1} 
  |\pi_{W\cap e_i}(R_{e_i}[\mv t_{S\cap e_i}])|^{y_i} \right).
\]
Furthermore, the number of tuples in $L$ is at most
$B(\lc(u), \mv y', \mv t_S) =
\prod_{i=1}^{k-1}   |\pi_{W\cap e_i}(R_{e_i}[\mv t_{S\cap e_i}])|^{y_i}$.

If $W^- = \emptyset$ then $L$ is returned and we are done
because in this case
$B(\lc(u),\mv y', \mv t_S) \leq B(u, \mv y, \mv t_S)$.

Consider the for loop from line 18 to line 29.
We execute the for loop for each tuple 
$\mv t_{S\cup W} = (\mv t_S, \mv t_W) \in L$.
If $L = \emptyset$ then the output is empty and we are done.
If $L = \{\mv t_S\}$ then this for-loop is executed only once.
This is the case if the assignment in line 11 was performed,
which means $U\subseteq e_k$ and thus $W=\emptyset$.
We do not have to analyze this case separately as it is subsumed
by the general case that $L\neq \emptyset$.

Note that if $y_{e_k}\geq 1$ then we go directly to {\bf case b} below
(corresponding to line 27).

{\bf Case a.} Consider the case when $y_{e_k}<1$ and
\[ \prod_{i=1}^{k-1}
      |\pi_{e_i \cap W^-}(R_{e_i}[\mv t_{(S\cup W)\cap e_i}])|^{\frac{y_{
e_i}}{1-y_{e_k}}} < |\pi_{W^-}(R_{e_k}[\mv t_{S\cap e_k}])|.
\]
In this case we first recursively solve the sub-problem
\[
Z = \text{\sc Recursive-Join}\left(\rc(u), \left(\frac{y_{e_i}}{1-y_{e_k}}\right)_{i=1}^{k-1}, \mv t_{S\cup W}\right).
\]
We need to make sure that the arguments are legitimate.
Note that $\univ(\rc(u)) = W^-$, and that $y_{e_k}<1$.
The sub-problem is on the hypergraph $(W^-, E_{k-1})$.
For any $v\in W^-=U\cap e_k$, because $\mv y$ is a fractional cover
of the $(U, E_k)$ hypergraph,
\[ 1 \leq \sum_{i \in [k] \ :  \ v \in e_i} y_{e_i} =
      y_{e_k} + \sum_{i\in [k-1] \ : \ v\in e_i} y_{e_i}.
\]
Hence,
\[ 1 \leq \sum_{i\in [k-1] \ : \ v\in e_i} \frac{y_{e_i}}{1-y_{e_k}},\]
which confirms that the solution 
$\left(\frac{y_{e_i}}{1-y_{e_k}}\right)_{i=1}^{k-1}$ is a fractional cover
for the hypergraph $(W^-, E_{k-1})$.
Finally, by property (TO2) the attributes $S\cup W$ are precisely the
attributes preceding $W^-$ in the total order.

After solving the sub-problem we obtain a tuple set $Z$ over the
attributes $S\cup W\cup W^- = S\cup U$. By the induction hypothesis
the time it takes to solve the sub-problem is
\[
O\left(mn\prod_{i=1}^{k-1} |\pi_{e_i \cap W^-}(R_{e_i}[\mv t_{(S\cup W)\cap e_i}])|^{\frac{y_{e_i}}{1-y_{e_k}}}
\right)
\]
and the number of tuples in $Z$ is bounded by
\[ \prod_{i=1}^{k-1} |\pi_{e_i \cap W^-}(R_{e_i}[\mv t_{(S\cup W)\cap e_i}])|^{\frac{y_{e_i}}{1-y_{e_k}}}.
\]
Then, for each tuple in $Z$ we perform the check on line 24. 
Hence, the overall running time in this case is still
$O\left(mn\prod_{i=1}^{k-1} |\pi_{e_i \cap W^-}(R_{e_i}[\mv t_{(S\cup W)\cap e_i}])|^{\frac{y_{e_i}}{1-y_{e_k}}}
\right)$

{\bf Case b.} Consider the case when either $y_{e_k}\geq 1$ or
\[ \prod_{i=1}^{k-1}
      |\pi_{e_i \cap W^-}(R_{e_i}[\mv t_{(S\cup W)\cap e_i}])|^{\frac{y_{
e_i}}{1-y_{e_k}}} \geq |\pi_{W^-}(R_{e_k}[\mv t_{S\cap e_k}])|.
\]
In this case, we execute lines 27 to 29. The number of tuples output
is at most $|\pi_{W^-}(R_{e_k}[\mv t_{S\cap e_k}])|$
and the running time is 
$O(mn|\pi_{W^-}(R_{e_k}[\mv t_{S\cap e_k}])|)$.

Overall, for both (case a) and (case b) the number of tuples output
is bounded by
\[ T = 
  \begin{cases}
    \min\left\{\prod_{i=1}^{k-1}
      |\pi_{e_i \cap W^-}(R_{e_i}[\mv t_{(S\cup W)\cap e_i}])|^{\frac{y_{e_i}}{1-y_{e_k}}}, |\pi_{W^-}(R_{e_k}[\mv t_{S\cap e_k}])| \right\} 
& y_{e_k}<1\\
|\pi_{W^-}(R_{e_k}[\mv t_{S\cap e_k}])| & \text{otherwise}
  \end{cases}
\]
and the running time is in the order of $O(mnT)$. We bound $T$ next.
When $y_{e_k}<1$ we have
\begin{eqnarray*}
T&\leq&\min\left\{\prod_{i=1}^{k-1}
    |\pi_{e_i \cap W^-}(R_{e_i}[\mv t_{(S\cup W)\cap e_i}])|^{\frac{y_{e_i}}{1-y_{e_k}}}, |\pi_{W^-}(R_{e_k}[\mv t_{S\cap e_k}])| \right\}\\
&\leq&\left(\prod_{i=1}^{k-1}
   |\pi_{e_i \cap W^-}(R_{e_i}[\mv t_{(S\cup W)\cap e_i}])|^{\frac{y_{e_i}}{1-y_{e_k}}} \right)^{1-y_{e_k}} |\pi_{W^-}(R_{e_k}[\mv t_{S\cap e_k}])|^{y_{e_k}}\\
&=&|\pi_{U\cap e_k}(R_{e_k}[\mv t_{S\cap e_k}])|^{y_{e_k}} \cdot \prod_{i=1}^{k-1}
|\pi_{e_i \cap W^-}(R_{e_i}[\mv t_{(S\cup W)\cap e_i}])|^{y_{e_i}}
\end{eqnarray*}
When $y_{e_k}\geq 1$, it is obvious that the same inequality holds:
\[ T \leq 
|\pi_{U\cap e_k}(R_{e_k}[\mv t_{S\cap e_k}])|^{y_{e_k}} \cdot \prod_{i=1}^{k-1}
|\pi_{e_i \cap W^-}(R_{e_i}[\mv t_{(S\cup W)\cap e_i}])|^{y_{e_i}}.
\]
Summing overall $(\mv t_S, \mv t_W)\in L$, the number of output tuples
is bounded by the following sum.
Without loss of generality, assume $W=\{1,\dots,d\}=[d]$.
In the following, the first sum is over
$t_1\in \pi_{\{1\}}(L)$, the second sum is over $t_2$ such that
$(t_1,t_2)\in \pi_{\{1,2\}}(L)$, and so on.
To shorten notations a little, define
\[ \bar R_i = R_{e_i}[\mv t_{S\cap e_i}]. \]
Then, the total number of output tuples is bounded by
\begin{eqnarray*}
&& \sum_{\mv t_W \in \pi_W(L)} |\pi_{U\cap e_k}(R_{e_k}[\mv t_{S\cap e_k}])|^{y_{e_k}} \cdot \prod_{i=1}^{k-1}
|\pi_{e_i \cap W^-}(R_{e_i}[\mv t_{(S\cup W)\cap e_i}])|^{y_{e_i}}\\
&=& |\pi_{U\cap e_k}(\bar R_k)|^{y_{e_k}} 
\sum_{t_1}\sum_{t_2}\cdots\sum_{t_d}
\prod_{i=1}^{k-1}
|\pi_{e_i \cap W^-}(\bar R_i[\mv t_{[d]\cap e_i}])|^{y_{e_i}}\\
&=& |\pi_{U\cap e_k}(\bar R_k)|^{y_{e_k}} 
\sum_{t_1}\cdots \sum_{t_{d-1}} 
\prod_{i<k, d\notin e_i}
|\pi_{e_i \cap W^-}(\bar R_i[\mv t_{[d]\cap e_i}])|^{y_{e_i}}
\sum_{t_d}
\prod_{i<k, d\in e_i}
|\pi_{e_i \cap W^-}(\bar R_i[\mv t_{[d]\cap e_i}])|^{y_{e_i}}\\
&\leq& |\pi_{U\cap e_k}(\bar R_k)|^{y_{e_k}} 
\sum_{t_1}\cdots \sum_{t_{d-1}} 
\prod_{i<k, d\notin e_i}
|\pi_{e_i \cap W^-}(\bar R_i[\mv t_{[d]\cap e_i}])|^{y_{e_i}}
\prod_{i<k, d\in e_i}
\left(\sum_{t_d}
|\pi_{e_i \cap W^-}(\bar R_i[\mv t_{[d]\cap e_i}])|\right)^{y_{e_i}}\\
&\leq& |\pi_{U\cap e_k}(\bar R_k)|^{y_{e_k}} 
\sum_{t_1}\cdots \sum_{t_{d-1}} 
\prod_{i<k, d\notin e_i}
|\pi_{e_i \cap (W^- \cup \{d\})}(\bar R_i[\mv t_{[d-1]\cap e_i}])|^{y_{e_i}}
\prod_{i<k, d\in e_i}
|\pi_{e_i \cap (W^- \cup \{d\})}(\bar R_i[\mv t_{[d-1]\cap e_i}])|^{y_{e_i}}\\
&=& |\pi_{U\cap e_k}(\bar R_k)|^{y_{e_k}} 
\sum_{t_1}\sum_{t_2}\cdots\sum_{t_{d-1}}
\prod_{i=1}^{k-1}
|\pi_{e_i \cap (W^- \cup\{d\})}(\bar R_i[\mv t_{[d-1]\cap e_i}])|^{y_{e_i}}\\
&\leq&\dots\\
&\leq& |\pi_{U\cap e_k}(\bar R_k)|^{y_{e_k}} 
\sum_{t_1}\sum_{t_2}\cdots\sum_{t_{d-2}}
\prod_{i=1}^{k-1}
|\pi_{e_i \cap (W^- \cup\{d-1,d\})}(\bar R_i[\mv t_{[d-2]\cap e_i}])|^{y_{e_i}}\\
&\leq&\dots\\
&=& \prod_{i=1}^{k} |\pi_{U\cap e_i}(\bar R_i)|^{y_{e_i}}
\end{eqnarray*}
\ep

\section{Limits of Standard Approaches}
\label{sec:limits}

For a given join query $q$, we describe a sufficient syntactic
condition for $q$ so that when computed by any join-project plan is
asymptotically slower than the worst-case bound. Our algorithm runs
within this bound, and so for such $q$ there is an asymptotic
running-time gap.

\paragraph*{LW Instances}

Recall that an {\em LW instance} of the OJ problem is a join query $q$
represented
by the hypergraph $(V,E)$, where $V=[n]$, and $E = \binom{[n]}{n-1}$ for some
integer $n \geq 2$.
Our main result in this section is the following lemma\footnote{We thank an
anonymous PODS'12 referee for giving us the argument showing that our example
works for all join-project plans rather than just the AGM algorithm and a
join-tree algorithm.}

\blmm
Let $n\geq 2$ be an arbitrary integer.
Given any LW-query $q$ represented by a hypergraph $([n], \binom{[n]}{n-1})$,
and any positive integer $N\geq 2$, there exist relations $R_i$, $i\in[n]$,
such that $|R_i| =N,\forall i\in [n]$, the attribute set for $R_i$ is
$[n]-\{i\}$, and that {\em any} join-project plan for $q$ on these relations
runs in time $\Omega(N^2/n^2)$.
\label{LEM:BAD:INSTANCE}
\elmm

Before proving the lemma, we note that both the traditional join-tree
algorithm and AGM's algorithm are join-project plans, and thus their
running times are asymptotically worse than the best AGM bound for this
instance which is
$|\Join_{i=1}^n R_i|\le \prod_{i=1}^n |R_i|^{1/(n-1)}=N^{1+1/(n-1)}.$
On the other hand, both Algorithm \ref{algo:LW} and
Algorithm \ref{algo:the-algo}
take $O(N^{1 + 1/(n-1)})$-time as we have analyzed.
In fact, for Algorithm \ref{algo:the-algo}, we are able to demonstrate
a stronger result: its run-time on this instance is $O(n^2N)$
which is better than what we can analyze for a general instance of this
type.
In particular, the run-time gap between Algorithm \ref{algo:the-algo}
and AGM's algorithm is $\Omega(N)$ for constant $n$.

\bp[Proof of Lemma \ref{LEM:BAD:INSTANCE}]
In the instances below the domain of any attribute will be
$\D = \{0,1,\dots,(N-1)/(n-1)\}$
For the sake of clarify, we  ignore the integrality issue.
For any $i\in [n]$, let $R_i$ be the set of {\em all} tuples
in $\D^{[n]-\{i\}}$ each of which has at most one non-zero value.
Then, it is not hard to see that $|R_i| = (n-1)[(N-1)/(n-1)+1] - (n-2) = N$,
for all $i\in [n]$; and,
$|\Join_{i=1}^n R_i| = n[(N-1)/(n-1)+1]-(n-1) = N+(N-1)/(n-1)>N$.

A relation $R$ on attribute set $\bar A \subseteq [n]$ is called ``simple"
if $R$ is the set of {\em all} tuples in $\D^{\bar A}$ each of which has at
most
one non-zero value. Then, we observe the following properties.
(a) The input relations $R_i$ are simple.
(b) An arbitrary projection of a simple relation is simple.
(c) Let $S$ and $T$ be any two simple relations on attribute sets
$\bar A_S$ and $\bar A_T$, respectively.
If $\bar A_S$ is contained in $\bar A_T$ or vice versa, then
$S \Join T$ is simple.
If neither $\bar A_S$  nor $\bar A_T$ is contained in the other,
then $|S\Join T| \geq (1+(N-1)/(n-1))^2 = \Omega(N^2/n^2)$.

For an arbitrary join-project plan starting from the simple relations
$R_i$,
we eventually must join two relations whose attribute sets are not
contained
in one another, which right then requires $\Omega(N^2/n^2)$ run time.
\ep

Finally, we analyze the run-time of Algorithm \ref{algo:the-algo}
directly on this instance without resorting to Lemma~\ref{lem:holder}.
H\"older's inequality lost some information about the run-time.
The following lemma shows that our algorithm and our bound can be
better than what we were able to analyze.

\begin{lmm}
On the collection of instances from the previous lemma,
Algorithm \ref{algo:the-algo} runs in time $O(n^2N)$.
\label{lmm:the-algo-on-bad-instance}
\end{lmm}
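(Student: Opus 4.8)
The plan is to trace the execution of Procedure~\ref{proc:rec-join} directly on these simple instances and to show that, thanks to their structure, the recursion essentially collapses at the root. I will use the facts recorded in the proof of Lemma~\ref{LEM:BAD:INSTANCE}: a projection of a simple relation is simple, and a section $R[\mv t_A]$ of a simple relation is simple and equals the singleton $\{\mv 0\}$ as soon as $\mv t_A$ has a nonzero coordinate. Write $\D=\{0,1,\dots,(N-1)/(n-1)\}$, so $|\D|=D:=(N-1)/(n-1)+1\le N$, and fix the (arbitrary) edge order $e_1,\dots,e_n$ with $e_i=[n]\setminus\{v_i\}$; I will assume $n\ge 3$, the case $n=2$ being degenerate since then the output already has $N^2$ tuples.

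First I would pin down the combinatorial skeleton. On the LW hypergraph $([n],\binom{[n]}{n-1})$, Algorithm~\ref{algo:QPT} produces a \emph{caterpillar}: the root has universe $[n]$ and label $n$, its left child is the single-attribute leaf $\{v_n\}$, and its right child has universe $[n]\setminus\{v_n\}$ and label $n-1$ and is again of this form, down to a two-attribute node at the bottom; the induced total order (Algorithm~\ref{algo:total-order}) is $v_n,v_{n-1},\dots,v_1$. I would also record that the fractional cover passed down the recursion is the all-$\tfrac1{n-1}$ vector at the root, the all-$\tfrac1{n-2}$ vector at the depth-$1$ node, and so on. By the two facts above, every section-projection the procedure ever forms is a simple relation on at most $n$ attributes, hence has at most $(n-1)(D-1)+1=N$ tuples; and an easy induction shows that the prefix tuple $\mv t_S$ handed to any recursive call has at most one nonzero coordinate (each leaf builds its output as an intersection of attribute domains, introducing no spurious tuples with two nonzeros).

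The heart of the argument is to show that no node of the caterpillar below depth $1$ is ever visited. At the root ($\mv t_S=\nil$) the left leaf is computed in $O(nD)$ time and returns $L=\D$ on attribute $v_n$. For the value $t_{v_n}=0$, the size-check condition compares the product $\prod_{i=1}^{n-1}\bigl|\pi_{e_i\cap e_n}(R_{e_i}[\mv 0])\bigr|^{1/(n-2)}=\bigl((n-2)(D-1)+1\bigr)^{(n-1)/(n-2)}$ against $|R_{e_n}|=(n-1)(D-1)+1=N$; the elementary inequality $(pq+1)^{p+1}>((p+1)q+1)^p$ for integers $p,q\ge 1$ (here $p=n-2$, $q=D-1$) shows the product exceeds $N$, so the check fails and ``Case~b'' (in the terminology of the proof of Lemma~\ref{lmm:main}) executes \emph{without recursing}: it enumerates $R_{e_n}$, every membership test against a full simple relation succeeds, and the output is exactly the set of join tuples that are $0$ on $v_n$, produced in $O(n^2N)$ time. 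For each of the $D-1$ nonzero values $t_{v_n}$, the same check instead \emph{succeeds} (its left side is a product of $1$'s, the relevant sections being singletons), so the procedure recurses into the depth-$1$ node with the one-nonzero prefix $t_{v_n}$; but there every section is $\{\mv 0\}$, its left leaf returns a singleton, the next size check fails with both sides equal to $1$, and Case~b finishes in $O(n^2)$ time, emitting the single join tuple with $v_n=t_{v_n}$. Summing, the root costs $O(n^2N)$ (the $O(n^2D)$ spent on the $D$ size checks and the left leaf is dominated by the single Case~b), and the $D-1$ depth-$1$ calls cost $O(n^2)$ each, for $O(n^2D)=O(n^2N)$ more; together with the $O(n^2N)$ preprocessing this gives the claimed bound, and the emitted relation is exactly $J$, of size $N+(N-1)/(n-1)$.

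The step I expect to be the main obstacle is precisely this branch analysis of the size check: one must evaluate $\prod_i\bigl|\pi_{e_i\cap W^-}(R_{e_i}[\mv t])\bigr|^{y_{e_i}/(1-y_{e_k})}$ cleanly on simple relations — all factors turn out to be equal-sized simple relations and the exponents sum to $(n-1)/(n-2)$ — and then establish $(pq+1)^{p+1}>((p+1)q+1)^p$, for instance by checking that $p\mapsto\tfrac1p\ln(pq+1)$ is strictly decreasing. This is what forces the ``no recursion below depth $1$'' behaviour; without it, Lemma~\ref{lmm:main} only yields the weaker $O(n^2N^{1+1/(n-1)})$. The remaining points are routine: the ``$y_{e_k}\ge 1$'' shortcut (which in fact fires at the depth-$1$ node when $n=3$), the bottom of the caterpillar where some $e_i\cap W^-$ is empty (so $|\pi_\emptyset(\cdot)|=1$ by the stated convention), and the bookkeeping of search-tree navigation costs via properties (ST1)--(ST3).
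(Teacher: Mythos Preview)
Your proposal is correct and follows essentially the same approach as the paper: trace the root-level loop over $t_{v_n}\in\D$, showing that for $t_{v_n}=0$ the size check fails and Case~b enumerates $R_{e_n}$ in $O(n^2N)$, while for each nonzero $t_{v_n}$ the check passes and the resulting depth-$1$ subproblem costs $O(n^2)$. Your version is in fact more careful than the paper's---you actually prove the comparison inequality $\bigl((n-2)(D-1)+1\bigr)^{(n-1)/(n-2)}>N$ via the monotonicity of $p\mapsto\frac{1}{p}\ln(pq+1)$ (the paper simply asserts it under the restriction $N,n\ge 4$), and you handle the $n=3$ boundary via the $y_{e_k}\ge 1$ shortcut, which the paper does not address.
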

\bp
Without loss of generality, assume the hyperedge order
Algorithm~\ref{algo:the-algo} considers is
%$\bar A_{-1}, \dots, \bar A_{-n}$.
$[n]-\{1\}, \dots, [n]-{n}$.
In this case, the universe of the left-child of the root of the QP-tree
is $\{n\}$, and the universe of the right-child of the root is
$[n-1]$.

The first thing Algorithm~\ref{algo:the-algo} does is that it computes the join
$L_n = \Join_{i=1}^{n-1} \pi_{\{n\}}(R_i)$, in time $O(nN)$.
Note that $L_n = \D$, the domain.
Next, Algorithm~\ref{algo:the-algo} goes through each value
$a\in L_n$ and decide whether to solve a subproblem.
First, consider the case $a>0$.
Here Algorithm~\ref{algo:the-algo} estimates a bound for the join
$\Join_{j=1}^{n-1} \pi_{[n-1]}(R_j[a])$.
The estimate is $1$ because $|\pi_{[n-1]}(R_j[a])|=1$
for all $a>0$. Hence, the algorithm will recursively compute this join
which takes time $O(n^2)$ and filter the result against $R_n$.
Overall, solving the sub problems for $a>0$ takes $O(n^2N)$ time.
Second, consider the case when $a=0$. In this case
$|\pi_{[n-1]}(R_j[0])| = \frac{(n-2)N-1}{(n-1)}$.
The subproblem's estimated size bound is
\[ \prod_{i=1}^{n-1} |\pi_{[n-1]}(R_j[0])|^{\frac{1/(n-1)}{1-1/(n-1)}}
= \left[\frac{(n-2)N-1}{(n-1)}\right]^{(n-1)/(n-2)} > N
\]
if $N\geq 4$ and $n\geq 4$. Hence, in this case $R_n$ will be filtered
against
the $\pi_{[n-1]}(R_j[0])$, which takes $O(n^2N)$ time.
\ep

\paragraph*{Extending beyond LW instances} 
Using the above results, we give a sufficient condition
for when there exist a family of instances ${\cal I} = I_1,\dots,I_N,
\dots,$ such that on instance $I_N$ every binary join strategy takes time at
least
$\Omega(N^2)$, but our algorithm takes $o(N^2)$. Given a hypergraph
$H=(V,E)$. We first define some notation. Fix $U \subseteq V$ then
call an attribute $v \in V \setminus U$ {\em $U$-relevant} if for all $e$ such
that $v \in e$ then $e \cap U \neq \emptyset$; call $v$ 
{\em $U$-troublesome} if for all $e \in E$, if $v \in e$ then $U
\subseteq e$. Now we can state our result:

\begin{lmm}
Given a join query $H=(V,E)$ and some $U \subseteq V$ where $|U| \geq
2$, then if there exists $F \subseteq E$ such that $|F| = |U|$ that
satisfies the following three properties: (1) each $u \in U$ occurs in
exactly $|U|-1$ elements in $F$, (2) each $v \in V$ that is
$U$-relevant appears in at least $|U|-1$ edges in $F$, (3) there are no
$U$-troublesome attributes. Then, there is some family of
instances ${\cal I}$ such that (a) computing the join query
represented by $H$ with a join tree takes time $\Omega(N^2/|U|^2)$ while (b) the
algorithm from Section~\ref{sec:all:j} takes 
time $O(N^{1 + 1/(|U|-1)})$. 
\end{lmm}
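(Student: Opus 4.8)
\bp
The plan is to plant a Loomis--Whitney-type obstruction on the vertex set $U$ inside the general query $H$: the set $F$ carries the obstruction, property~(1) makes it behave like an LW instance on $U$, property~(2) lets us extend the LW fractional cover to all of $V$, and property~(3) keeps the obstruction from being short-circuited. Write $k=|U|$.

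\emph{The instances.} Generalising Lemma~\ref{LEM:BAD:INSTANCE}, give every attribute the domain $\mathcal D=\{0,1,\dots,\lceil(N-1)/(k-1)\rceil\}$ (ignoring integrality, as there). Call a tuple on an attribute set $\bar A$ \emph{$U$-simple} if at most one of its coordinates is nonzero and that coordinate, if any, lies in $\bar A\cap U$, and for each $e\in E$ let $R_e$ be the set of all $U$-simple tuples on $e$, so $|R_e|=1+|e\cap U|\cdot\lceil(N-1)/(k-1)\rceil$. By property~(1), $\sum_{f\in F}|f\cap U|=k(k-1)$, so each $f\in F$ with $f\not\supseteq U$ has $|f\cap U|=k-1$ and hence $|R_f|=N$, while an edge disjoint from $U$ contributes only the all-zero tuple. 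Thus the relations indexed by $F$ realise, via the restrictions $\{f\cap U:f\in F\}$, the hard LW instance on the universe $U$, padded by the frozen attributes outside $U$.

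\emph{Part (a).} Assign to every $U$-simple relation $S$ a \emph{footprint} $\phi(S)$ recursively: a leaf of the join tree gets its own $U$-part, an internal node gets the union of its children's footprints. As in Lemma~\ref{LEM:BAD:INSTANCE} one checks: a projection of a $U$-simple relation is $U$-simple; if $\phi(S)\subseteq\phi(T)$ then $S\Join T$ is $U$-simple with full $U$-simple projection onto $\phi(T)$; and if $\phi(S),\phi(T)$ are incomparable then $S\Join T$ has at least $(1+\lceil(N-1)/(k-1)\rceil)^2=\Omega(N^2/k^2)$ tuples, via a tuple simultaneously nonzero on one coordinate of $\phi(S)\setminus\phi(T)$ and one of $\phi(T)\setminus\phi(S)$. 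Every leaf footprint is a proper subset of $U$, so by induction every node built using only comparable-footprint joins has footprint a proper subset of $U$; since the root's footprint is $V\cap U=U$, some internal node must join two relations with incomparable footprints, and at the first such node the $\Omega(N^2/k^2)$ bound applies, giving running time $\Omega(N^2/|U|^2)$. Properties~(2) and~(3) enter here to guarantee that the attributes and edges lying outside the planted copy of $U$ provide no shortcut around this incomparable join; making the footprint bookkeeping airtight for edges that touch $U$ partially, or contain all of $U$, is the main obstacle in this part.

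\emph{Part (b).} By Theorem~\ref{thm:main}(b), Algorithm~\ref{algo:the-algo} runs in time $O\!\bigl(mn\prod_{e\in E}N_e^{x_e}+n^2\sum_{e\in E}N_e+m^2n\bigr)$ for any fractional cover $\mathbf x$ of $H$. Take $x_f=1/(k-1)$ for each $f\in F$; by property~(1) this covers every $u\in U$ and by property~(2) every $U$-relevant attribute. Each remaining attribute $v$ is not $U$-relevant, hence lies in some edge $e_v$ with $e_v\cap U=\emptyset$, for which $|R_{e_v}|=1$ on our instance; set $x_{e_v}=1$ and all other $x_e=0$. Then $\mathbf x$ is a fractional cover of $H$ with $\prod_{e\in E}N_e^{x_e}=\prod_{f\in F}N^{1/(k-1)}=N^{k/(k-1)}=N^{1+1/(|U|-1)}$, so, absorbing the query-size terms into the data-complexity constant, Algorithm~\ref{algo:the-algo} runs in time $O(N^{1+1/(|U|-1)})$ on this family.
\ep
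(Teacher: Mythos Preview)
Your approach is essentially the same as the paper's: plant the hard LW instance from Lemma~\ref{LEM:BAD:INSTANCE} on the attribute set $U$ via the edges in $F$, freeze all attributes outside $U$ to a single value, and for part~(b) exhibit the fractional cover $x_f=1/(|U|-1)$ on $F$ together with $x_e=1$ on size-one relations covering the non-$U$-relevant attributes. The paper's own argument is in fact terser than yours---it simply asserts that property~(3) forces the $F$-relations to participate nontrivially and then appeals to the pairwise lower bound from Lemma~\ref{LEM:BAD:INSTANCE}; your footprint-tracking makes the same idea more explicit, and the gap you flag at the end of part~(a) (handling edges $e$ with $U\subseteq e$ or with partial intersection) is exactly the point the paper glosses over as well.
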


Given a $(U,F)$ as in the lemma, the idea is to simply to set all
those edges in $f \in F$ to be the instances from
Lemma~\ref{LEM:BAD:INSTANCE} and extend all attributes with a single
value, say $c_0$. Since there are no $U$-troublesome attributes, to
construct the result set at least one of the relations in $F$ must be
joined. Since any pair $F$ must take time $\Omega(N^2/|U|^2)$ by the
above construction, this establishes (a). To establish (b), we need to
describe a particular feasible solution to the cover LP whose
objective value is $N^{1 + 1/(|U|-1)}$, implying that the running time
of our proposed algorithm is upper bounded by this value. To do
this, we first observe that any attribute not in $U$ takes the value
only $c_0$. Then, we observe that any node $v \in V$ that is not
$U$-relevant is covered by some edge $e$ whose size is exactly $1$
(and so we can set $x_e = 1$). Thus, we may assume that all nodes are
$U$-relevant. Then, observe that all relevant attributes can be set by
the cover $x_{e} = 1/(|U|-1)$ for $e \in F$. This is a feasible
solution to the LP and establishes our claim.

\section{Extensions}
\label{sec:extensions}

In Section~\ref{sec:cc}, we describe some results on the combined
complexity of our approach.  Finally, in Section~\ref{sec:error}, we
observe that our algorithm can be used to compute a relaxed notion of
join.

\subsection{Combined Complexity} 
\label{sec:cc}\label{SEC:CC}
\newcommand{\sat}{\mathsf{3SAT}}
\newcommand{\usat}{\mathsf{3UniqueSAT}}

Given that our algorithms are data-optimal for 
worst-case inputs it is tempting to wonder if one can obtain an join 
algorithm whose run time is both query and data optimal in the worst-case.
We show that in the
special case when each input relation has arity at most $2$ we can 
attain a data-optimal algorithm that is simpler than Algorithm \ref{algo:the-algo}
with an asymptotically better query complexity.

Further, given promising results in the worst case, it is natural wonder if one can obtain a join algorithm whose run time is polynomial in both the size of the query \textit{as well} as the size 
of the output. More precisely, given a join query $q$ and an instance $I$, can 
one compute the result of query $q$ on instance $I$ in time 
$\poly(|q|,|q(I)|,|I|)$. Unfortunately, this is not possible unless $\np=\rp$.
We briefly present a proof of this fact below.

\paragraph*{Each relation has at most $2$ attributes}
%As we showed above, achieving instance optimal algorithms is unlikely. The 
%next natural question is whether we can get optimal combined complexity in 
%the worst-case. 
As was mentioned in the introduction, our algorithm in 
Theorem~\ref{thm:main} not only has better data complexity than AGM's 
algorithm (in fact we showed our algorithm has optimal worst-case data 
complexity), it has a better query complexity. 
In this section, we show that for the special case when the join query 
$q$ is on relations with at most two attributes (i.e. the corresponding 
hypergraph $H$ is a graph), we can obtain an even better query complexity as 
in Theorem~\ref{thm:main} (with the same optimal data complexity).

Without loss of generality, we can assume that each relation contains
exactly $2$ attributes because a $1$-attribute relation $R_e$ needs to 
have $x_e=1$ in the corresponding LP and thus,
contributes a separate factor $N_e$ to the final product. 
Thus, $R_e$ can be joined with the rest of the query with any join algorithm 
(including the naive Cartesian product based algorithm).
In this case, the hypergraph $H$ is a graph which can be assumed to be
simple. 

We first prove an auxiliary lemma for the case when $H$ is a cycle.
We assume that all relations are indexed in advanced,
which takes $O(\sum_e N_e)$ time.
In what follows we will not include this preprocessing time in the analysis.
The following lemma essentially reduces the case when $H$ is a cycle to
the case when $H$ is a triangle, a Loomis-Whitney instance with $n=3$.

\begin{lmm}[Cycle Lemma]
If $H$ is a cycle, then $\Join_{e\in E} R_e$
can be computed in time $O(m\sqrt{\prod_{e\in H}N_e})$.
\label{lmm:cycle}
\end{lmm}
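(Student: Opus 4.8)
Proof Proposal for the Cycle Lemma\textbf{Proof proposal.}
The plan is to reduce a length-$m$ cycle to the triangle case ($n=3$ Loomis--Whitney instance, handled by Example~\ref{ex:lw:one}) by peeling off one attribute at a time, each time using the heavy/light partition that drives Algorithm~\ref{algo:LW}. First I would fix notation: write the cycle on attributes $A_1,\dots,A_m$ with $R_i$ on $\{A_i,A_{i+1}\}$ (indices mod $m$), and record that $x_e=1/2$ for all $e$ is a feasible fractional cover, so the AGM bound is exactly $\prod_{e}N_e^{1/2}=\sqrt{\prod_e N_e}$; this is the target output size and, up to the $O(m)$ factor, the target running time. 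I would also note the base case: when $m=3$ the query is a triangle, and Example~\ref{ex:lw:one} (with the threshold $\tau=\sqrt{|R||T|/|S|}$) computes it in time $O(\sqrt{N_1N_2N_3}+N_1+N_2+N_3)$, which is within budget.

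For the inductive step ($m\ge 4$), I would eliminate a degree-two attribute, say $A_2$, which is shared only by $R_1(A_1,A_2)$ and $R_2(A_2,A_3)$. Choose a threshold $\tau$ (to be tuned below) and split the $A_2$-values into a heavy set $D=\{a:|\pi_{A_1}(R_1[a])|\cdot|\pi_{A_3}(R_2[a])|>\tau\}$ and the complementary light set. On light values, forming the section-join over $A_2$ and projecting onto $\{A_1,A_3\}$ produces a new binary relation $R'(A_1,A_3)$ of size at most $\tau\cdot|\pi_{A_2}(R_1)|\le \tau\cdot\min(N_1,N_2)$, and replacing $R_1,R_2$ by $R'$ yields a cycle $C_{m-1}$ on $A_1,A_3,A_4,\dots,A_m$, which I would solve recursively; the tuples produced there are then joined back against the light part of $R_1\Join R_2$ (a linear-time filtering/expansion step, exactly analogous to the ``pruning'' step of Algorithm~\ref{algo:LW}). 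On the heavy values, $|D|\le \sqrt{N_1N_2/\tau}$ by a Cauchy--Schwarz / counting argument on $\sum_a |\pi_{A_1}(R_1[a])|\,|\pi_{A_3}(R_2[a])|$, and these $A_2$-values are carried along with the recursive solve of the remaining cycle and then expanded; this is the analogue of the $D\times T$ branch in Example~\ref{ex:lw:one}. Correctness follows because every output tuple is either entirely supported on light $A_2$-values (captured by the $C_{m-1}$ branch after re-expansion) or uses a heavy $A_2$-value (captured by the heavy branch), and the final filtering against the original $R_1,R_2$ removes spurious tuples.

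The delicate point — and the step I expect to be the main obstacle — is the running-time accounting: a naive application of the general algorithm (Theorem~\ref{thm:main}) with the $1/2$-cover already gives $O(m^2\sqrt{\prod_e N_e})$, so the Cycle Lemma must shave a factor of $m$, which means the per-level blowup has to be charged globally rather than per node. Concretely, I would set the threshold at the step that eliminates $A_2$ so that the light-branch cost $\tau\cdot\min(N_1,N_2)$ and the heavy-branch contribution $\sqrt{N_1N_2/\tau}$ times the AGM bound of the size-$(m-1)$ subproblem both equal $\sqrt{\prod_{e\in E}N_e}$; one checks that such a $\tau$ exists because, after contracting $A_2$, the subproblem's AGM bound is $\sqrt{\prod_{e\ne R_1,R_2}N_e}\cdot\sqrt{|R'|}$ and $|R'|\le\tau\min(N_1,N_2)$, so the bookkeeping is self-consistent. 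Summing the $O(1)$-per-tuple work over the at most $m-2$ elimination steps (plus the triangle base case) then yields the claimed $O(m\sqrt{\prod_{e\in H}N_e})$. I would present the threshold choice and this telescoping carefully, since getting the exponents to line up across the recursion — and verifying that each contracted instance is still a genuine cycle with the expected AGM bound — is where all the real content of the lemma sits; the rest is a direct replay of the heavy/light mechanism of Section~\ref{sec:LW-algo}.
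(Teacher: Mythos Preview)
Your approach is genuinely different from the paper's, and it has a real gap in the heavy branch.

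The paper does \emph{not} eliminate attributes one at a time. It splits on the parity of $m$. For an even cycle $e_1,\dots,e_{2k'}$ it simply takes the alternating edges with smaller product, say $e_1,e_3,\dots,e_{2k'-1}$, forms their cross product $R$ (size $\le \sqrt{\prod_e N_e}$, and $R$ already covers every attribute), then filters $R$ against the remaining $k'$ relations. For an odd cycle it does the same alternating cross product on $e_1,\dots,e_{2k'-1}$ to get $X$, semijoins the interior attributes against $e_2,\dots,e_{2k'-2}$ to get $W$, and then makes a \emph{single} call to the $n=3$ Loomis--Whitney algorithm on the ``triangle'' $X$, $W\Join R_{e_{2k'}}$, $R_{e_{2k'+1}}$ with the interior attributes bundled into one super-attribute. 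There is no recursion on cycle length and no per-step threshold; the factor $m$ comes only from the $O(m)$ filtering passes.

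Your inductive scheme breaks at the heavy branch. In Example~\ref{ex:lw:one} the heavy side works because $T(A,C)$ by itself covers \emph{all} attributes other than the eliminated one, so $D\times T$ is a legitimate superset of the missing output. For a cycle with $m\ge 4$, once you fix a heavy $A_2$-value $a$ you are left with the constraints $R_1[a]$ on $A_1$, $R_2[a]$ on $A_3$, and the path $R_3,\dots,R_m$ on $A_3,\dots,A_m,A_1$; there is no single relation to play the role of $T$, and ``carrying $D$ along with the recursive solve of the remaining cycle'' is undefined, since the light-branch relation $R'$ was built only from light $A_2$-values and omits exactly the heavy ones. If instead you build $R'$ from \emph{all} $A_2$-values, computing it already costs $\sum_a |R_1[a]|\,|R_2[a]|$, which can be $\Theta(N_1N_2)$ and exceed the budget. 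Your Cauchy--Schwarz bound $|D|\le\sqrt{N_1N_2/\tau}$ is correct, but you never say what join is actually computed on the heavy side or bound its cost.

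Even on the light side the telescoping you sketch does not close: you need simultaneously (i) the cost of materializing $R'$, namely $\sum_{a\text{ light}}|R_1[a]|\,|R_2[a]|\le \tau\cdot\min(N_1,N_2)$, to be at most $\sqrt{\prod_e N_e}$, and (ii) the recursive bound $(m-1)\sqrt{|R'|\,N_3\cdots N_m}$ to be at most $(m-1)\sqrt{\prod_e N_e}$, i.e.\ $|R'|\le N_1N_2$; these pull $\tau$ in the same direction, but you still owe a bound on the heavy side that is compatible with the same $\tau$, and you have not produced one. The paper's alternating-edge argument sidesteps all of this: it never forms $R_1\Join R_2$ for adjacent edges, so no heavy/light split on a shared vertex is needed except once, inside the final triangle call.
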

\bp
First suppose $H$ is an even cycle, consisting of consecutive
edges $e_1=(1,2)$, $e_2=(2,3)$,$\cdots$,$e_{2k'}=({2k'},1)$.
Without loss of generality, assume
\[ N_{e_1}N_{e_3} \cdots N_{e_{2k'-1}} \leq N_{e_2}N_{e_4} \cdots N_{e_{2k'}}. \]
In this case, we compute the (cross-product) join
\[ R = R_{e_1} \Join R_{e_3} \Join \cdots \Join R_{e_{2k'-1}}. \]
Note that $R$ contains all the attributes.
Then, sequentially join $R$ with each of $R_{e_2}$ to $R_{e_{2k'}}$.
The total running time is 
\[ O\left(k'N_{e_1}N_{e_3} \cdots N_{e_{2k'-1}}\right)
   = O\left(m\prod_{e\in H}N_e\right).
\]
Second, suppose $H$ is an odd cycle consisting of consecutive edges
$e_1=(1,2)$, $e_2=(2,3)$, $\dots$, $e_{2k'+1}=({2k'+1},1)$. 
If $k'=1$ then by the Loomis-Whitney algorithm for the 
$n=3$ case (Algorithm \ref{algo:LW}), 
we can compute $R_{e_1} \Join R_{e_2} \Join R_{e_3}$
in time $O(\sqrt{N_{e_1}N_{e_2}N_{e_3}})$.
Suppose $k'>1$. Without loss of generality, assume
\[ N_{e_1}N_{e_3} \cdots N_{e_{2k'-1}} \leq 
   N_{e_2}N_{e_4} \cdots N_{e_{2k'}}.
\]
In particular, 
$N_{e_1}N_{e_3} \cdots N_{e_{2k'-1}} \leq \sqrt{ \prod_{e\in H} N_e}$, which
means the following join can be computed in time
$O(m\sqrt{ \prod_{e\in H} N_e})$:
\[ X = R_{e_1} \Join R_{e_3} \Join \cdots \Join R_{e_{2k'-1}}. \]
Note that $X$ spans the attributes in the set $[2k']$.
Let $S=\{2,3,\dots,2k'-1\}$, and $X_S$ denote the projection
of $X$ down to coordinates in $S$; and define
\[ W = (\dots(X_S \Join R_{e_2}) \Join R_{e_4}) \cdots \Join R_{e_{2k'-2}}). \]
Since $R_{e_2} \Join R_{e_4} \cdots \Join R_{e_{2k'-2}}$ spans
precisely the attributes in $S$, the relation $W$ can be computed
in time $O(m|X_S|) = O(m|X|) = O(m\sqrt{ \prod_{e\in H} N_e})$.
Note that
\[ |W| \leq \min\{N_{e_1}N_{e_3}\cdots N_{e_{2k'-1}}, 
                  N_{e_2}N_{e_4}\cdots N_{e_{2k'-2}}\}. 
\]
We claim that one of the following inequalities must hold:
\begin{eqnarray*}
|W| \cdot N_{e_{2k'}} & \leq & \sqrt{ \prod_{e\in H} N_e}, \text{ or}\\
|W| \cdot N_{e_{2k'+1}} & \leq & \sqrt{ \prod_{e\in H} N_e}.
\end{eqnarray*}
Suppose both of them do not hold, then 
\begin{eqnarray*}
 \prod_{e\in H} N_e
&=&
 (N_{e_1}N_{e_3}\cdots N_{e_{2k'-1}}) \cdot (N_{e_2}N_{e_4}\cdots N_{e_{2k'-2}}) 
 \cdot N_{e_{2k'}} \cdot N_{e_{2k'+1}}\\
&\geq&|W|^2 N_{e_{2k'}} N_{e_{2k'+1}}\\
&=&(|W| \cdot N_{e_{2k'}}) \cdot (|W| \cdot N_{e_{2k'+1}})\\
&>& \prod_{e\in H} N_e,
\end{eqnarray*}
which is a contradiction.
Hence, without loss of generality we can assume
$|W| \cdot N_{2k'} \leq \sqrt{ \prod_{e\in H} N_e}$.
Now, compute the relation
\[ Y = W \Join R_{e_{2k'}}, \]
which spans the attributes $S\cup \{2k',2k'+1\}$.
Finally, by thinking of all attributes in the set $S\cup \{2k'\}$ as a 
``bundled attribute",
we can use the Loomis-Whitney algorithm for $n=3$ to compute the join
\[ X \Join Y \Join R_{e_{2k'+1}} \]
in time linear in
\begin{eqnarray*}
\sqrt{|X| \cdot |Y| \cdot N_{e_{2k'+1}}}
&\leq& \sqrt{(N_{e_1}N_{e_3}\cdots N_{e_{2k'-1}}) \cdot 
    (|W|\cdot N_{e_{2k'}}) \cdot N_{e_{2k'+1}}}\\
&\leq& \sqrt{(N_{e_1}N_{e_3}\cdots N_{e_{2k'-1}}) \cdot 
    (N_{e_2}N_{e_4}\cdots N_{e_{2k'-2}}\cdot N_{e_{2k'}}) \cdot N_{e_{2k'+1}}}\\
&=& \sqrt{ \prod_{e\in H} N_e}.
\end{eqnarray*}
\ep

With the help of Lemma \ref{lmm:cycle}, we can now derive a solution
for the case when $H$ is an arbitrary graph.
Consider any {\em basic feasible solution} $\mathbf x = (x_e)_{e\in E}$
of the fractional cover polyhedron 
\begin{eqnarray*}
\sum_{v\in e} x_e & \geq & 1, \ v \in V\\
x_e & \geq & 0, \ e \in E.
\end{eqnarray*}
It is known that $\mathbf x$ is {\em half-integral}, i.e.
$x_e \in \{0,1/2,1\}$ for all $e\in E$ 
(see Schrijver's book \cite{MR1956924}, Theorem 30.10).
However, we will also need a graph structure associated with
the half-integral solution; hence, we adapt a known proof 
\cite{MR1956924} of the half-integrality property
with a slightly more specific analysis.
It should be noted, however, that the following is already implicit in
the existing proof.
%\cite{MR1956924}.

\begin{lmm}
For any basic feasible solution $\mathbf x = (x_e)_{e\in E}$
of the fractional cover polyhedron above, $x_e \in \{0,1/2,1\}$ for all
$e \in E$.
Furthermore, the collection of edges $e$ for 
which $x_e=1$ is a union $\mathcal S$
of stars. And, the collection of edges $e$ for which
$x_e=1/2$ form a set $\mathcal C$ of vertex-disjoint 
odd-length cycles that are also 
vertex disjoint from the union $S$ of stars.
\label{lmm:bfs-d=2}
\end{lmm}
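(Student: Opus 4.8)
The plan is to read off the structure of a basic feasible solution $\mathbf x=(x_e)_{e\in E}$ directly from its support. Recall that $\mathbf x$ is a vertex of the polyhedron precisely when the inequalities tight at $\mathbf x$ span $\mathbb R^E$; these tight inequalities are $x_e=0$ for $e$ in some set $E_0$, and $\sum_{e\ni v}x_e=1$ for $v$ in some set $T\subseteq V$ of ``tight'' vertices. Equivalently, writing $E_+=E\setminus E_0$ for the support and $G_+=(V,E_+)$ for the support graph, the indicator vectors $\{\,\mathbf 1[\text{edges of }E_+\text{ incident to }v]\ :\ v\in T\,\}$ must span $\mathbb R^{E_+}$. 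Since the constraint system is block-diagonal across the connected components of $G_+$ (no vertex constraint mixes edges from two components), it suffices to analyze one component $C=(V_C,E_C)$ at a time, where we need the rows for $T\cap V_C$ to have rank $|E_C|$.

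First I would invoke the classical fact that, over $\mathbb R$, the vertex--edge incidence matrix of a connected graph on $p$ vertices has rank $p$ if the graph contains an odd cycle and rank $p-1$ if it is bipartite. Applied to $C$, this forces $|E_C|\le |V_C|$, so $C$ is either a tree ($|E_C|=|V_C|-1$) or unicyclic ($|E_C|=|V_C|$). Moreover: if $C$ is a tree, the incidence rows of $C$ already have rank $|V_C|-1=|E_C|$ and any $|V_C|-1$ of them are independent, so at most one vertex of $C$ can fail to be in $T$; if $C$ is unicyclic, then to reach rank $|E_C|=|V_C|$ the graph must be non-bipartite (its unique cycle odd) and \emph{every} vertex of $C$ must lie in $T$.

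Next I would determine the edge values on each component. For a tree component $C$: a tight leaf forces its incident edge to equal $1$; examining a longest path $v_0v_1\cdots v_\ell$ of $C$, tightness of the endpoint leaves together with ``at most one non-tight vertex'' forces $\ell\le 2$, i.e. $C$ is a star $K_{1,d}$ ($d\ge1$) all of whose edges equal $1$ (with the centre the only vertex that may be non-tight). For a unicyclic component $C$ (necessarily non-bipartite with all vertices tight): a pendant edge would be forced to $1$ by its leaf, and then the tight constraint at its other endpoint would force every remaining incident edge to $0$, contradicting positivity on $E_+$; hence $C$ has minimum degree $\ge 2$, which with $|E_C|=|V_C|$ makes $C$ a single cycle, odd because $C$ is non-bipartite. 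Finally, writing the tight constraints $x_{f_i}+x_{f_{i+1}}=1$ around an odd cycle $f_1,\dots,f_{2k+1}$ gives $x_{f_{i+2}}=x_{f_i}$, and oddness forces $x_{f_i}=\tfrac12$ for all $i$.

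Assembling the components proves the lemma: every $x_e$ equals $0$ (outside the support), $1$ (edges of star components), or $\tfrac12$ (edges of odd-cycle components), giving half-integrality; the set $\{e:x_e=1\}$ is exactly the disjoint union of the star components, hence a union of vertex-disjoint stars $\mathcal S$; and $\{e:x_e=\tfrac12\}$ is exactly the disjoint union of the odd-cycle components, hence a family $\mathcal C$ of vertex-disjoint odd cycles, which are vertex-disjoint from $\mathcal S$ since distinct components of $G_+$ share no vertex. I expect the only delicate point to be the tree-component analysis --- carefully ruling out every connected tree other than a star while tracking which vertices are tight --- together with justifying the block-diagonal reduction; the half-integrality claim itself is then essentially immediate.
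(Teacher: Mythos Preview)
Your proposal is correct and takes essentially the same approach as the paper: restrict to the support graph, work one connected component at a time, and use the tight-constraint/dimension count to force each component to be either a tree (hence a star with all edges equal to $1$) or unicyclic (hence an odd cycle with all edges equal to $1/2$). The only differences are presentational---you phrase the dimension count via the rank of the vertex--edge incidence matrix and rule out even cycles by the bipartite rank drop, while the paper counts tight constraints directly and rules out even cycles by writing $\mathbf x=(\mathbf y+\mathbf z)/2$; your longest-path argument for the tree case is equivalent to the paper's leaf-by-leaf argument.
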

\bp
First, if some $x_e= 0$, then we remove $e$ from the graph and recurse on 
$G-e$. 
The new $\mathbf x$ is still an extreme point of the new polyhedron.
So we can assume that $x_e > 0$ for all $e\in E$. 

Second, we can also assume that $H$ is connected. 
Otherwise, we consider each connected component separately.

Let $k=|V|$ and $m=|E|$.
The polyhedron is defined on $m$ variables and $k+m$ inequality constraints. 
The extreme point must be the intersection of exactly $m$ (linearly independent)
tight constraints. 
But the constraints $\mathbf x \geq \mathbf 0$ are not tight as
we have assumed $x_e>0, \forall e$. 
Hence, there must be $m$ vertices $v$ for which
the constraints $\sum_{v\in e} x_e \geq 1$ are tight.
In particular, $m\leq k$. Since $H$ is connected, it is either a tree, or 
has exactly one cycle.

Suppose $H$ is a tree, then it has at least $2$ leaves and at most one 
non-tight constraint (as there must be $m=k-1$ tight constraints). 
Consider the leaf $u$ whose constraint is tight. 
Let $v$ be $u$'s neighbor. Then $x_{uv} = 1$ because $u$ is tight. 
If $v$ is tight then we are done, the graph $H$ is just an edge $uv$. 
(If there was another edge $e$ incident to $v$ then $x_e=0$.)
If $v$ is not tight then $v$ is not a leaf. We start from another tight 
leaf $w \neq u$ of the tree and reason in the same way.
Then, $w$ has to be connected to $v$. Overall, the graph is a star.

Next, consider the case when $H$ is not a tree. 
All $k=m$ vertices has to be tight in this case. 
Thus, there cannot be a degree-$1$ vertex for the same reasoning as above. 
Thus, $H$ is a cycle. If $H$ is an odd cycle then it is easy to show that the 
only solution for which all vertices are tight is the all-$1/2$ solution. 
If $H$ is an even cycle then $\mathbf x$ cannot be an extreme point 
because it can be written as $\mathbf x = (\mathbf y+\mathbf z)/2$ for 
feasible solutions $\mathbf y$ and $\mathbf z$ 
(just add and subtract $\eps$ from alternate edges to form $\mathbf y$ and
$\mathbf z$).
\ep

Now, let $\mathbf x^*$ be an {\em optimal} basic feasible solution to the
following linear program.
\begin{eqnarray*}
\min & \sum_e (\log N_e) \cdot x_e\\
s.t. & \sum_{v\in e} x_e & \geq 1, \ v \in V\\
     & x_e & \geq 0, \ e \in E.
\end{eqnarray*}
Then $\prod_{e\in E}N_e^{x^*_e} \leq \prod_{e\in E}N_e^{x_e}$ for any
feasible fractional cover $\mathbf x$.
Let $S$ be the set of edges on the stars
and $\mathcal C$ be the collection of disjoint cycles as shown in the 
above lemma, applied to $\mathbf x^*$.
Then,
\[ \prod_{e\in E}N_e^{x^*_e}
   = \left(\prod_{e\in S}N_e\right)
      \prod_{C\in \mathcal C}\sqrt{\prod_{e\in C}N_e}.
\]
Consequently, we can apply Lemma \ref{lmm:cycle} to each cycle $C\in\mathcal C$
and take a cross product of all the resulting relations with the relations
$R_e$ for $e\in S$.
We just proved the following theorem.

\bthm%[Each relation has at most two attributes]
When each relation has at most two attributes, we can compute the join
$\Join_{e\in E}R_e$ in time $O(m\prod_{e\in E}N_e^{x_e})$.
\ethm

\paragraph*{Impossibility of Instance Optimality} 
The proof is fairly standard: we use the standard reduction of $\sat$ to 
conjunctive queries but with two simple specializations: (i) We reduce from 
the $\usat$, where the input formula is either unsatisfiable or has 
\textit{exactly} one satisfying assignment and (ii) $q$ is a full join 
query instead of a general conjunctive query. It is known that $\usat$ cannot 
be solved in deterministic polynomial time unless $\np=\rp$~\cite{unique-sat}.

For the sake of completeness, we sketch the reduction here. 
Let $\phi=C_1\wedge C_2\wedge \dots C_m$ be a $\usat$ CNF formula on $n$ 
variables $a_1,\dots,a_n$. (W.l.o.g. assume that a clause does not contain 
both a variable and its negation.)  For each clause $C_j$ for 
$j\in [m]$, create a relation $R_j$ on the variables that occur in 
$C_j$. The query $q$ is \[\Join_{j\in [m]} R_j.\]
Now define the database $I$ as follows: for each $j\in [m]$, $R_j^{I}$ 
contains the seven assignments to the variables in $C_j$ that makes it true. 
Note that $q(I)$ contains all the satisfying assignments for $\phi$: in 
other words, $q(I)$ has one element if $\phi$ is satisfiable otherwise 
$q(I)=\emptyset$. In  other words, we have $|q(I)|\le 1$, $|q|=O(m+n)$ and 
$|I|=O(m)$. Thus an instance optimal algorithm with time complexity 
$\poly(|q|,|q(I)|,|I|)$ for $q$ would be able to determine if $\phi$ is 
satisfiable or not in time $\poly(n,m)$, which would imply $\np=\rp$.

\subsection{Relaxed Joins}
\label{sec:error}

\newcommand{\C}{\mathcal{C}}
\newcommand{\lpopt}{\mathsf{LPOpt}}
\newcommand{\bfs}{\mathsf{BFS}}

We observe that our algorithm can actually evaluate a relaxed notion of join 
queries. Say we are given a query $q$ represented by a hypergraph
$H=(V, E)$ where $V=[n]$ and $|E|=m$.
The $m$ input relations are $R_e$, $e\in E$.
We are also given a ``relaxation'' number $0\leq r \leq m$. 
Our goal is to output all tuples that agree 
with at least $m-r$ input relations. 
In other words, we want to compute 
$\cup_{S\subseteq E, |S|\ge m-r} \Join_{e\in S} R_e$. 
However, we need to modify the problem to avoid the case that the set of 
attributes of relations indexed by $S$ does not cover all the attributes 
in the universe $V$. 
Towards this end, define the set

\[ \C(q,r)= \left\{S\subseteq E \suchthat |S|\ge m-r\text{ and } 
   \bigcup_{e\in S} e = V \right\}.
\]

With the notations established above, we are now ready to define the 
relaxed join problem.

\bdefn[Relaxed join problem]
Given a query $q$ represented by the hypergraph $H=(V=[n],E)$,
and an integer $0\le r\le m$, evaluate
\[ q_r\stackrel{def}{=}\bigcup_{S\in \C(q,r)} \left(\Join_{e\in S} R_e\right).
\]
\edefn

Before we proceed, we first make the following simple observation: 
given any two sets $S, T\in \C(q,r)$ such that $S\subseteq T$, 
we have $\Join_{e\in T} R_e \subseteq \Join_{e\in S} R_e$. 
This means in the relaxed join problem we only need to consider 
subsets of relations that are not contained in any other subset. 
In particular, define 
$\hat{\C}(q,r)\subseteq \C(q,r)$ to be the largest subset of $\C(q,r)$
such that for any $S\neq T\in \hat{\C}(q,r)$ neither 
$S\subset T$ nor $T\subset S$. 
We only need to evaluate
$q_r = \bigcup_{S\in \hat{\C}(q,r)} \left(\Join_{e\in S} R_e\right).$

Given an $S\in\hat{\C}(q,r)$, let $\lpopt(S)$ denote the optimal size bound 
given by the AGM's fractional cover inequality \eqref{eqn:agm08-bound}
on the join query represented by the hypergraph $(V, S)$.
In particular, $\lpopt(S) = \prod_{e\in S}|R_e|^{x^*_e}$ 
where $\mv x^*_S = (x^*_e)_{e\in S}$ is an optimal solution to the
following linear program called LP$(S)$:
\begin{eqnarray}
\min & \sum_{e\in S} (\log |R_e|)\cdot x_e&\nonumber \\
\text{subject to}& \sum_{e\in S: i \in e} x_e \geq 1& \text{for any $i\in V$}
\label{eqn:LP(S)}\\
 & x_e \geq 0 & \text{for any } e \in S.\nonumber
\end{eqnarray}

\paragraph*{Upper bounds} We start with a straightforward upper bound. 
\begin{prop}
\label{prop:error-bound-loose}
Let $q$ be a join query on $m$ relations and let $0\le r\le m$ be an integer. 
Then given sizes of the input relations, the number of 
output tuples for query $q_r$ is upper bounded by
\[\sum_{S\in \hat{\C}(q,r)} \lpopt(S).\]
\end{prop}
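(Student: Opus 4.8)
This bound follows directly by combining a union bound over the index set $\hat{\C}(q,r)$ with AGM's fractional cover inequality \eqref{eqn:agm08-bound} applied once to each subfamily in $\hat{\C}(q,r)$. The plan is as follows.

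First I would invoke the reduction already established just before the statement: since for any $S\subseteq T$ in $\C(q,r)$ we have $\Join_{e\in T}R_e\subseteq\Join_{e\in S}R_e$, the output set is unchanged when we restrict attention to the antichain $\hat{\C}(q,r)$, so that $q_r=\bigcup_{S\in\hat{\C}(q,r)}\left(\Join_{e\in S}R_e\right)$. A key point to note here is that for every $S\in\hat{\C}(q,r)$ we have $\bigcup_{e\in S}e=V$ by definition of $\C(q,r)$; hence each $\Join_{e\in S}R_e$ is a relation on the common attribute set $V$, and $q_r$ is a well-defined set of tuples over $V$. By the union bound, $|q_r|\le\sum_{S\in\hat{\C}(q,r)}\left|\Join_{e\in S}R_e\right|$.

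Next, fix $S\in\hat{\C}(q,r)$. Because $\bigcup_{e\in S}e=V$, the hypergraph $(V,S)$ admits the all-ones fractional cover, so LP$(S)$ in \eqref{eqn:LP(S)} is feasible, and together with $N_e\ge 1$ it is bounded; let $\mv x^*_S=(x^*_e)_{e\in S}$ be an optimal solution. Applying AGM's inequality \eqref{eqn:agm08-bound} to the join query on the hypergraph $(V,S)$ with the feasible fractional cover $\mv x^*_S$ gives $\left|\Join_{e\in S}R_e\right|\le\prod_{e\in S}|R_e|^{x^*_e}=\lpopt(S)$. Summing this over all $S\in\hat{\C}(q,r)$ yields $|q_r|\le\sum_{S\in\hat{\C}(q,r)}\lpopt(S)$, as claimed.

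\textbf{Expected difficulty.} There is no substantive obstacle here; the only things to be careful about are the two bookkeeping points above, namely that restricting to $\hat{\C}(q,r)$ preserves $q_r$ exactly, and that each $S\in\hat{\C}(q,r)$ covers $V$ so that AGM's inequality (and the definition of $\lpopt(S)$) genuinely applies. The bound is intentionally loose: a single output tuple may agree with many covering subfamilies and therefore be counted once per such subfamily, which is exactly why this proposition is stated as a warm-up before the finer analysis in the remainder of the section.
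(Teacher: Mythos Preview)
Your proposal is correct and matches what the paper intends: the proposition is stated as a ``straightforward upper bound'' without a formal proof, and your argument---restrict to the antichain $\hat{\C}(q,r)$, apply the union bound, then apply AGM's inequality \eqref{eqn:agm08-bound} to each covering subfamily---is exactly the intended derivation. Your attention to the two bookkeeping points (that $\hat{\C}(q,r)$ suffices and that each $S$ covers $V$) is appropriate and complete.
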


Further, Algorithm~\ref{algo:the-algo} evaluates $q_r$ with data 
complexity linear in the bound above. 
The next natural question is to determine how good the upper bound is. 
Before we answer the question, we prove a stronger upper bound.

Given a subset of hyperedges $S\subseteq E$ which ``covers'' $V$,
i.e. $\cup_{e\in S} e = V$,  let $\bfs(S)\subseteq S$ be the subset of 
hyperedges in $S$ that gets a {\em positive} $x^*_e$ value in an {\em optimal} 
basic feasible solution to the linear program LP$(S)$ defined
in \eqref{eqn:LP(S)}.
(If there are multiple such solutions, pick any one in a consistent manner.) 
Call two subsets $S,T\subseteq E$ \textit{bfs-equivalent} if 
$\bfs(S)=\bfs(T)$. 
Finally, define $\C^*(q,r)\subseteq \hat{C}(q,r)$ as the collection of 
sets from $\hat{\C}(q,r)$ which contains exactly one arbitrary 
representative from each bfs-equivalence class.

\begin{thm} 
\label{thm:error-bound-tight}
Let $q$ be a join query represented by $H=(V,E)$, and let $0\le r\le m$ be 
an integer. 
The number of output tuples of $q_r$ is upper bounded by
$\sum_{S\in \C^*(q,r)} \lpopt(S).$
Further, the query $q_r$ can be evaluated in time 
\[ O\left(\sum_{S\in \C^*(q,r)} \left(mn\cdot \lpopt(S)+\poly(n,m)
   \right)\right) \]
plus the time needed to compute $\C^*(q,r)$ from $q$.
\end{thm}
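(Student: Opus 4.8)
The plan is to collapse the (possibly exponentially large) family $\hat{\C}(q,r)$ to the small family $\C^*(q,r)$ by replacing each $S$ with its support $\bfs(S)$, and then to recover $q_r$ from the union of a few ordinary joins by a cheap per-tuple filter. First I would note that for every $S\in\hat{\C}(q,r)$ the set $T=\bfs(S)$ still covers $V$: a vertex of $V$ uncovered by $T$ would make $\mv x^*_S$ infeasible for LP$(S)$. Since dropping relations can only enlarge a join that already spans $V$, this gives $\Join_{e\in S}R_e\subseteq\Join_{e\in T}R_e$. As $T=\bfs(S)=\bfs(S')$ for the representative $S'\in\C^*(q,r)$ of $S$'s bfs-class, we get $q_r=\bigcup_{S\in\hat{\C}(q,r)}\Join_{e\in S}R_e\subseteq J^*:=\bigcup_{S'\in\C^*(q,r)}\Join_{e\in\bfs(S')}R_e$. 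The restriction of $\mv x^*_{S'}$ to $\bfs(S')$ is a fractional cover of the hypergraph $(V,\bfs(S'))$, so AGM's inequality \eqref{eqn:agm08-bound} yields $|\Join_{e\in\bfs(S')}R_e|\le\prod_{e\in\bfs(S')}N_e^{x^*_e}=\lpopt(S')$; summing over $\C^*(q,r)$ gives $|q_r|\le|J^*|\le\sum_{S'\in\C^*(q,r)}\lpopt(S')$, the claimed size bound. (One may also note that $\lpopt$ is constant on each bfs-class, since the support of an optimal basic feasible solution of LP$(S)$ is an optimal basic feasible solution of LP$(\bfs(S))$, but this is not needed.)

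For the running time I would compute $q_r$ as follows. (i) Compute $\C^*(q,r)$: for each $S\in\hat{\C}(q,r)$ solve LP$(S)$, read off $\bfs(S)$, and keep one representative per bfs-class; this is the term accounted for separately in the statement. (ii) For each $S'\in\C^*(q,r)$ run Algorithm \ref{algo:the-algo} on the instance $(V,\bfs(S'))$ with cover $\mv x^*_{S'}$ restricted to $\bfs(S')$; by Theorem \ref{thm:main}(b) this produces $\Join_{e\in\bfs(S')}R_e$ in time $O(mn\cdot\lpopt(S')+\poly(n,m))$, where the $m^2n$ query-preprocessing and the LP solve are absorbed into $\poly(n,m)$, and (as in the rest of this subsection) we assume the input relations come pre-indexed so the $O(n^2\sum_eN_e)$ data-preprocessing is a one-time cost. (iii) Form $J^*$ as the union of these outputs, deduplicating by hashing in $O(n)$ time per tuple. (iv) Filter: each $\mv t\in J^*$ is a tuple on all of $V$ (every $\bfs(S')$ covers $V$), so for each such $\mv t$ compute $S(\mv t)=\{e\in E:\mv t_e\in R_e\}$ using one search-tree lookup ($O(n)$ time) per edge, and keep $\mv t$ iff $|S(\mv t)|\ge m-r$. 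Steps (iii)--(iv) cost $O(mn|J^*|)=O\bigl(mn\sum_{S'}\lpopt(S')\bigr)$, so the total is $O\bigl(\sum_{S'\in\C^*(q,r)}(mn\cdot\lpopt(S')+\poly(n,m))\bigr)$ plus the time to build $\C^*(q,r)$.

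It remains to argue the output of step (iv) is exactly $q_r$. Nothing is lost, since $q_r\subseteq J^*$ by the first paragraph. For a tuple $\mv t\in J^*$: it already agrees with every relation indexed by some $\bfs(S')$, a subfamily covering $V$, so $S(\mv t)$ covers $V$; hence $\mv t\in q_r$ iff $S(\mv t)\in\C(q,r)$ iff $|S(\mv t)|\ge m-r$, which is precisely the filter's test. The main --- indeed the only --- delicate point is this last equivalence: one must observe that membership in $J^*$ forces a candidate tuple to be total on $V$ and $V$-covering in its agreements, so that the semantic condition ``agrees with at least $m-r$ relations'' collapses to the purely numerical test $|S(\mv t)|\ge m-r$; without this, step (iv) would return a strict superset of $q_r$. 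The remaining ingredients --- the containment $\Join_{e\in S}R_e\subseteq\Join_{e\in\bfs(S)}R_e$, the application of the AGM bound to each $(V,\bfs(S'))$, and the accounting of Algorithm \ref{algo:the-algo}'s cost via Theorem \ref{thm:main} --- are routine.
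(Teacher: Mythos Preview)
Your proposal is correct and follows essentially the same approach as the paper: compute $\Join_{e\in\bfs(S')}R_e$ for each representative $S'\in\C^*(q,r)$ via Algorithm~\ref{algo:the-algo}, then filter each resulting tuple by the test $|S(\mv t)|\ge m-r$. The only cosmetic difference is that you first materialize and deduplicate the union $J^*$ before filtering, whereas the paper filters inside the loop; your correctness discussion is in fact slightly more careful than the paper's in making explicit that $\mv t\in J^*$ forces $S(\mv t)$ to cover $V$, which is what reduces the semantic membership test $\mv t\in q_r$ to the purely numerical check $|S(\mv t)|\ge m-r$.
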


Note that since $\C^*(q,r)\subseteq \hat{\C}(q,r)$, the bound in 
Theorem~\ref{thm:error-bound-tight} is no worse than that in 
Proposition~\ref{prop:error-bound-loose}. We will show later that the bound 
in Theorem~\ref{thm:error-bound-tight} is indeed tight.

\begin{proof}[Proof of Theorem~\ref{thm:error-bound-tight}] 
We will prove the result by presenting the algorithm to compute $q_r$. 
A simple yet key idea is the following.
Let $S \neq S' \in \hat{C}(q,r)$ be two different sets of hyperedges with the following property.
Define $T\stackrel{def}{=}\bfs(S)=\bfs(S')$ and let 
$\mv x^*_{T}=(x^*_i)_{i\in T}$ be the projection of the corresponding
optimal basic feasible solution to the $(V,S)$ and the $(V,S')$ problems
projected down to $T$. (The two projections result in the same
vector $\mv x^*_T$.)
The outputs of the joins on $S$ and on $S'$ are both subsets of the
output of the join on $T$. We can simply run
Algorithm~\ref{algo:the-algo} on inputs $(V,T)$ and $\mv x^*_T$,
then prune the output against relations $R_e$ with $e\in S\setminus T$ or $S'\setminus T$.
In particular, we only need to 
compute $\Join_{e\in T} R_e$ once for both $S$ and $S'$.

\floatname{algorithm}{Algorithm}
\begin{algorithm}
\caption{Computing Relaxed Join $q_r$}
\label{algo:relaxed-join}
\begin{algorithmic}[1]
\STATE Compute $\C^*(q,r)$.
\STATE $Q\leftarrow \emptyset$.
\FOR{ every $S\in \C^*(q,r)$}
   \STATE Let $\mv x^*_S$ be an optimal BFS for LP$(S)$
   \STATE Let $T=\{e\in S \suchthat x^*_e>0\}$. (Note that $T=\bfs(S)$.)
   \STATE Run Algorithm~\ref{algo:the-algo} on $\{x^*_e\}_{e\in T}$ to compute
          $\phi_{T}=\Join_{e\in T} R_e$.
   \FOR{ every tuple $\mv t\in \phi_{T}$}
      \IF{for at least $m-r$ hyperedges $e\in E$, $\mv t_e \in R_e$} 
          \STATE $Q \la Q \cup \{\mv t\}$
      \ENDIF
   \ENDFOR
\ENDFOR
\RETURN $Q$
\end{algorithmic}
\end{algorithm}

Other than the time to compute $\C^*(q,r)$ in the line 1, line 4 needs 
$\poly(n,m)$ time to solve the LP, line 5 needs $O(m)$ time, while by 
Theorem~\ref{thm:main}, line 6 will take 
$O(mn\cdot\lpopt(S)+m^2n)$ time. Finally, 
Theorem~\ref{thm:main} shows that $|\phi_{T}|\le \lpopt(S)$,\footnote{This also proves the claimed bound on the size of $q_r$.} which shows that the 
loop in line 7 is repeated $\lpopt(S)$ times and lines 8-9 can be 
implemented in $O(m)$ time and thus, lines 7-9 will take time 
$O(m\cdot \lpopt(S))$.

Finally, we argue the correctness of the algorithm. We first note that by 
line 8, every tuple $\mv t$ that is output is indeed a correct one. 
Thus, we have to argue that we do not miss any tuple $\mv t$ that needs to be 
output. For the sake of contradiction assume that there exists such a 
tuple $\mv t$. Note that by definition of $\hat{\C}(q,r)$, this implies that 
there exists a set $S'\in \hat{\C}(q,r)$ such that for every 
$e\in S'$, $\mv t_e\in R_e$. However, note that by definition of 
$\C^*(q,r)$, for some execution of the loop in line 3, we will consider 
$T$ such that $T=\bfs(S')$. Further, by the correctness of 
Algorithm~\ref{algo:the-algo}, we have that $\mv t\in \phi_T$. 
This implies (along with the definition of $\hat{\C}(q,r)$) that 
$\mv t$ will be retained in line 8, which is a contradiction.
\end{proof}

It is easy to check that one can compute $\C^*$ in time $m^{O(r)}$ 
(by going through all subsets of $E$ of size at least $m-r$ and 
performing all the required checks). We leave open the question of 
whether this time bound can be improved.

\paragraph*{Lower bound} We now show that the bound in 
Theorem~\ref{thm:error-bound-tight} is tight for some query
and some database instance $I$.

We first define the query $q$. The hypergraph is $H=(V=[n],E)$
where $m=|E|=n+1$. The hyperedges are $E=\{e_1,\dots,e_{n+1}\}$
where $e_i=\{i\}$ for $i\in [n]$ and
$e_{n+1} = [n]$.
The database instance $I$ consists of
relations $R_e$, $e\in E$, all of which are of size $N$.
For each $i\in [n]$, $R_{e_i} = [N]$.
And, $R_{e_{n+1}} = \bigcup_{i=1}^N \{N+i\}^n$.

%: there are $m=n+1$ relations, where 
%$\bar{A}_j=A_j$ for $j\in [n]$ and $\bar{A}_{n+1}=\{A_1,\dots,A_n\}$. 
%We now define a database instance $I$ such that every relation is of size $N$. 
%In particular, for every $j\in [n]$, $R_i^I=[N]$. Further $R_{n+1}^I$ 
%consists of $N$ vectors, where the $i$th vector (for $i\in [N]$) consists of 
%$n$ repetitions of $N+i$.

It is easy to check that for any $r>0$, $q_r(I)$ is the set 
$R_{e_{n+1}} \cup  [N]^n$, i.e.  
$|q_r(I)| = N+N^n.$
Next, we claim that for this query instance 
$\C^*(q,r)=\{\{n+1\}, [n]\}$. Note that $\bfs(\{n+1\})=\{n+1\}$ and 
$\bfs([n])=[n]$, which implies that $\lpopt(\{n+1\})=N$ and 
$\lpopt([n])=N^n$. This along with Theorem~\ref{thm:error-bound-tight} implies 
that $|q_r(I)|\le N+N^n$, which proves the tightness of the size bound in 
Theorem~\ref{thm:error-bound-tight}, as desired.

Finally, we argue that $\C^*(q,r)=\{ \{n+1\}, [n]\}$. Towards this end, 
consider any $T\in \hat{\C}(q,r)$. Note that if $(n+1)\not\in T$, we have 
$T=[n]$ and since $\bfs(T)=T$ (and we will see soon that for any other 
$T\in\hat{\C}(q,r)$, we have $\bfs(T)\neq [n]$), which implies that 
$[n]\in\C^*(q,r)$. Now consider the case when $(n+1)\in T$. Note that in this 
case $T=\{n+1\} \cup T'$ for some $T'\subset [n]$ such that $|T'|\ge n-r$. 
Now note that all the relations in $T$ cannot cover the $n$ attributes but 
$R_{n+1}$ by itself does include all the $n$ attributes. This implies that 
$\bfs(T)=\{n+1\}$ in this case. This proves that $\{n+1\}$ is the other 
element in $\C^*(q,r)$, as desired.

Finally, if one wants a more general example where $m=n+k$ for $k>1$, then 
one can repeat the above instance $k$ times, where each repetition has $n/k$ 
fresh attributes. In this case, $\C^*$ will consists of all subsets of 
relation where in each repetition, each such subset has exactly one of 
$\{n/k+1\}$ or $[n/k]$. In particular, the query output size will be 
$\sum_{i=0}^r \binom{k}{i}\cdot N^{k-i}\cdot N^{n\cdot i/k}$.

%\paragraph{A better bound.}

%\todo[inline]{Add the stuff about the boudn using the fact that we can wlog assume $m=O(n)$. --Atri}

\eat{
\subsection{Full queries and simple functional dependencies}
\label{sec:fqp}

\paragraph*{Full query processing}
Our goal in this section is to handle a more general class of queries
that may contain selections and joins to the same table, which we
describe now.

Our notation in this section follows Gottlob et
al's~\cite{DBLP:conf/pods/GottlobLV09} notation, and we reproduce it
here for the sake of completeness. A {\em database instance} consists
$I = (\mathcal{U}, R_1,\dots,R_m)$ consists of a finite universe of
constants $\mathcal{U}$ and relations $R_1,\dots,R_m$ each over
$\mathcal{U}$. A {\em conjunctive query} has the form $q = R(x_0)
\leftarrow R_{i_1}(u_1) \wedge \dots \wedge R_{i_m}(u_m)$, where each
$u_j$ is a list of (not necessarily distinct) variables of length
$|u_j|$. We call each $R_{i_j}$ a subgoal. Each variable that occurs
in the head $R(u_0)$ must also appear in the body. We call a
conjunctive query {\em full} if each variable that appears in the body
also appears in the head. The set of all variables in $Q$ is denoted
$\mathrm{var}(Q)$. A single relation may occur several times in the
body, and so we may have $i_{j} = i_{k}$ for some $j \neq k$. The
answer of a query $q$ over a database instance $I$ is a set of tuples
of arity $|u_0|$, which is denoted $q(I)$, and is defined to contain
exactly those tuples $\theta(x_0)$ where $\theta : \mathrm{var}(Q) \to
\mathcal{U}$ is any substitution such that for each $j=1,\dots,m$,
$\theta(u_i) \in R_{i_j}$.

We call a full conjunctive query {\em reduced} if no variable is
repeated in the same subgoal. We can assume without loss of generality
that a full conjunctive query is reduced since we can create an
equivalent reduced query within the time bound. In time $O(|R_{i_j}|)$
for each $j=1,\dots,m$, we create a new relation $R_{i_j}'$ with arity
equal to the number of distinct variables. In one scan over $R_{i_j}$
we can produce $R_{i_j}'$ by keeping only those tuples that satisfy
constants (selections) in the query and any repeated variables. We
then construct $q'$ a query over the $R_{i_j}$ in the obvious
way. Clearly $q(I) = q'(I)$ and we can construct both in a single scan
over the input.  Finally, we make the observation that our method can
tolerate multisets as hypergraphs, and so our results extend our
method to full conjunctive queries. Summarizing our discussion, we
have a worst-case optimal instance for full conjunctive queries as
well.

\paragraph*{Simple Functional Dependencies}
%\label{sec:fds}

Given a join query $(V,E)$, a (simple) functional dependency (FD) is a
triple $(e,u,v)$ where $u,v \in V$ and $e \in E$ and is written as $e.u
\to e.v$. It is a constraint in that the FD $(e,u,v)$ implies that for
any pair of tuples $\mv t,\mv t' \in R_e$,  if $t_u = t'_u$ then $t_v
= t'_v$. Fix a set of functional dependencies $\Gamma$, construct a
directed (multi-)graph $G(\Gamma)$ where the nodes are the attributes $V$ and 
there is an edge
$(u,v)$ for each functional dependency. The set of all nodes reachable
from a node $u$ is a set $U$ of nodes; this relationship is denoted $u
\to^{*} U$.

Given a set of functional dependencies, we propose an algorithm to
process a join query. The first step is to compute for each relation
$R_e$ for $e \in E$, a new relation $R'_{e'}$ whose attributes are the
union of the closure of each element of $v \in E$, i.e., $e' = \set{ u
  \suchthat v \to u \text{ for } v \in e}$. Using the closure this can be
computed in time $|E| |V|$. Then, we compute the contents of
$R_{e}'$. Walking the graph induced by the FDs in a breadth first
manner, we can expand $R_{e}$ to contain all the attributes $R_{e'}$
in time linear in the input size. Finally, we solve the LP from
previous section and use our algorithm. It is clear that this
algorithm is a strict improvement over our previous algorithm that is
FD-unaware. It is an open question to understand its data
optimality. We are, however, able to give an example that
suggests this algorithm can be substantially better than algorithms
that are not FD aware.

Consider the following family of instances on $k+2$ attributes
$A,B_1,\dots,B_k,C$ parameterized by $N$:
\[ q = \left(\Join_{i=1}^{k} R_{i}(A,B_i) \right) \Join
\left(\Join_{i=1}^{k} S_{i}(B_i, C) \right) \] 

Now we construct a family of instances such that $|R_i| = |S_i| = N$
for $i=1,\dots,k$. Suppose there are functional dependencies $A
\to B_i$. 

Our algorithm will first produce a relation $R'(A,B_1,\dots,B_k)$
which can then be joined in time $N$ with each relation $S_{i}$ for
$i=1,\dots,k$. When we solve the LP, we get a bound of of $|q(I)| \leq
N^2$ -- and our algorithm runs within this time.

Now consider the original instance without functional
dependencies. Then, the AGM bound is $|q(I)| \leq N^{k}$. More
interestingly, one can construct a simple instance where half of the
join has a huge size, that is $|\Join_{i=1}^{k} S_{i}(B_i, C)| =
N^{k}$. Thus, if we choose the wrong join ordering our algorithms
running time will blow up.

\textbf{CR: Can we argue that any non-FD aware algorithm will join at
  least $3$ of those nasty relations? Perhaps by blowing up some
  relations? (I think I see how to do it for our algorithm\dots)}
}

\subsection{Dealing with full queries and simple functional dependencies}
\label{app:sec:fqp}

\paragraph*{Full query processing}
Our goal in this section is to handle a more general class of queries
that may contain selections and joins to the same table, which we
describe now.

Our notation in this section follows Gottlob et
al's~\cite{GLVV09} notation, and we reproduce it
here for the sake of completeness. A {\em database instance} consists
$I = (\mathcal{U}, R_1,\dots,R_m)$ consists of a finite universe of
constants $\mathcal{U}$ and relations $R_1,\dots,R_m$ each over
$\mathcal{U}$. A {\em conjunctive query} has the form $q = R(x_0)
\leftarrow R_{i_1}(u_1) \wedge \dots \wedge R_{i_m}(u_m)$, where each
$u_j$ is a list of (not necessarily distinct) variables of length
$|u_j|$. We call each $R_{i_j}$ a subgoal. Each variable that occurs
in the head $R(u_0)$ must also appear in the body. We call a
conjunctive query {\em full} if each variable that appears in the body
also appears in the head. The set of all variables in $Q$ is denoted
$\mathrm{var}(Q)$. A single relation may occur several times in the
body, and so we may have $i_{j} = i_{k}$ for some $j \neq k$. The
answer of a query $q$ over a database instance $I$ is a set of tuples
of arity $|u_0|$, which is denoted $q(I)$, and is defined to contain
exactly those tuples $\theta(x_0)$ where $\theta : \mathrm{var}(Q) \to
\mathcal{U}$ is any substitution such that for each $j=1,\dots,m$,
$\theta(u_i) \in R_{i_j}$.

We call a full conjunctive query {\em reduced} if no variable is
repeated in the same subgoal. We can assume without loss of generality
that a full conjunctive query is reduced since we can create an
equivalent reduced query within the time bound. In time $O(|R_{i_j}|)$
for each $j=1,\dots,m$, we create a new relation $R_{i_j}'$ with arity
equal to the number of distinct variables. In one scan over $R_{i_j}$
we can produce $R_{i_j}'$ by keeping only those tuples that satisfy
constants (selections) in the query and any repeated variables. We
then construct $q'$ a query over the $R_{i_j}$ in the obvious
way. Clearly $q(I) = q'(I)$ and we can construct both in a single scan
over the input.  Finally, we make the observation that our method can
tolerate multisets as hypergraphs, and so our results extend our
method to full conjunctive queries. Summarizing our discussion, we
have a worst-case optimal instance for full conjunctive queries as
well.

\paragraph*{Simple Functional Dependencies}
%\label{sec:fds}

Given a join query $(V,E)$, a (simple) functional dependency (FD) is a
triple $(e,u,v)$ where $u,v \in V$ and $e \in E$ and is written as $e.u
\to e.v$. It is a constraint in that the FD $(e,u,v)$ implies that for
any pair of tuples $\mv t,\mv t' \in R_e$,  if $t_u = t'_u$ then $t_v
= t'_v$. Fix a set of functional dependencies $\Gamma$, construct a
directed (multi-)graph $G(\Gamma)$ where the nodes are the attributes $V$ and 
there is an edge
$(u,v)$ for each functional dependency. The set of all nodes reachable
from a node $u$ is a set $U$ of nodes; this relationship is denoted $u
\to^{*} U$.

Given a set of functional dependencies, we propose an algorithm to
process a join query. The first step is to compute for each relation
$R_e$ for $e \in E$, a new relation $R'_{e'}$ whose attributes are the
union of the closure of each element of $v \in E$, i.e., $e' = \set{ u
  \suchthat v \to u \text{ for } v \in e}$. Using the closure this can be
computed in time $|E| |V|$. Then, we compute the contents of
$R_{e}'$. Walking the graph induced by the FDs in a breadth first
manner, we can expand $R_{e}$ to contain all the attributes $R_{e'}$
in time linear in the input size. Finally, we solve the LP from
previous section and use our algorithm. It is clear that this
algorithm is a strict improvement over our previous algorithm that is
FD-unaware. It is an open question to understand its data
optimality. We are, however, able to give an example that
suggests this algorithm can be substantially better than algorithms
that are not FD aware.

Consider the following family of instances on $k+2$ attributes
$A,B_1,\dots,B_k,C$ parameterized by $N$:
\[ q = \left(\Join_{i=1}^{k} R_{i}(A,B_i) \right) \Join
\left(\Join_{i=1}^{k} S_{i}(B_i, C) \right) \] 

Now we construct a family of instances such that $|R_i| = |S_i| = N$
for $i=1,\dots,k$. Suppose there are functional dependencies $A
\to B_i$. 

Our algorithm will first produce a relation $R'(A,B_1,\dots,B_k)$
which can then be joined in time $N$ with each relation $S_{i}$ for
$i=1,\dots,k$. When we solve the LP, we get a bound of of $|q(I)| \leq
N^2$ -- and our algorithm runs within this time.

Now consider the original instance without functional
dependencies. Then, the AGM bound is $|q(I)| \leq N^{k}$. More
interestingly, one can construct a simple instance where half of the
join has a huge size, that is $|\Join_{i=1}^{k} S_{i}(B_i, C)| =
N^{k}$. Thus, if we choose the wrong join ordering our algorithms
running time will blow up.

\section{Conclusion and Future Work}
\label{sec:conc}

In this work, we established optimal algorithms for the worst-case
behavior of join algorithms. We also demonstrated that the join
algorithms employed in RDBMSes do not achieve these optimal bounds --
and we demonstrated families of instances where they were
asymptotically worse by factors close to the size of the largest
relation. It is interesting to ask similar questions for average case
complexity. Our work offers a fundamentally different way to approach
join optimization rather than the traditional
binary-join/dynamic-programming-based approach. Thus, our immediate
future work is to implement these ideas to see how they compare in
real RDBMS settings to the algorithms in a modern RDBMS.

Another interesting direction is to extend these results to a larger
classes of queries and to database schemata that have constraints. We
include in the appendix some preliminary results on full conjunctive
queries and simple functional dependencies (FDs). Not surprisingly,
using dependency information one can obtain tighter bounds
compared to the (FD-unaware) fractional cover technique.  
We will also investigate whether our algorithm for computing
relaxed joins can be useful in related context such as those considered
in Koudas et al~\cite{Koudas06relaxingjoin}.

There are potentially interesting connections between our work and 
several inter-related topics, which are all great subjects to further
explore. We algorithmically proved AGM's bound
which is equivalent to BT inequality, which in turn is essentially
equivalent to Shearer's entropy inequality. There are known combinatorial
interpretations of entropy inequalities which Shearer's is a special
case of; for example, Alon et al. \cite{DBLP:journals/ejc/AlonNSTV07} 
derived some such connections
using a notion of ``sections" similar to what we used in this paper.
An analogous partitioning procedure was used in 
\cite{DBLP:conf/stoc/Marx10} to compute joins by relating the number
of solutions to submodular functions.
Our lead example (the LW inequality with $n=3$) is equivalent to the problem of
enumerating all triangles in a tri-partite graph. It was known that this can be
done in time $O(N^{3/2})$ \cite{DBLP:journals/algorithmica/AlonYZ97}.

\paragraph*{Acknowledgments} 
We thank Georg Gottlob for sending us a full version of his
work~\cite{GLVV09}. We thank XuanLong Nguyen for introducing us to the
Loomis-Whitney inequality. We thank the anonymous referees for many helpful
comments which have greatly improved the presentation clarity.
CR's work on this project is generously
supported the NSF CAREER Award under IIS-1054009, the Office of Naval
Research under award N000141210041, and gifts or research awards from
Google, Greenplum, Johnson Controls, LogicBlox, and Oracle.

\bibliographystyle{acm}
\bibliography{main}
  
%\onecolumn

%\appendix
%\input{app_LW}
%\input{app_gen_algo}
%\input{app_AGM}
%\input{app_CC}
%\input{app_FQ_FD}

\end{document}